\def\doi{8 (3:10) 2012}
\newif\ifmylmcs
\newif\ifmylmcsfinalversion
\newif\ifmyacm
\newif\ifmyieee
\newif\ifhyperdoc
\def\LMCSPaperBib{\ifmylmcs -lmcs-paper\fi}
\newif\ifdraft
   \let\myhypertarget=\hypertarget
   \let\myhyperlink=\hyperlink
   \def\myhypertarget#1#2{#2}
   \def\myhyperlink#1#2{#2}
  \let\cite=\citep  
  \let\citeyear=\citeyearpar 
\theoremstyle{plain}
  \newtheorem{mytheorem}{Theorem}[section]
  \newtheorem{mylemma}[mytheorem]{Lemma}
  \newtheorem{myremark}[mytheorem]{Remark}
\newbox\tempbox
\newbox\tempboxa
\newdimen\tempdimen
\newtoks\temptoks
\newtoks\temptoksa
\newtoks\temptoksb
\newtoks\temptoksc
\let\theoremStandardDef=\theorem
\let\theoremBEN=\theorem
\let\theorem=\theoremStandardDef
\let\D=\displaystyle
\newenvironment{myarray}[1][t]{\begin{array}[#1]{@{}>{\D}l@{}}}
   {\end{array}\myignoretrue}
\def\csnamelet#1#2{\expandafter\csnameletAux\csname#2\endcsname#1\relax}
\def\csnameletAux#1#2{\let#2=#1}
\newdimen\dispskipdimen
\newif\ifWithinProveIt  
\newenvironment{proveit}[1][]{\begingroup
\newbox\tempProveItBox
\setbox\tempProveItBox=\hbox{#1} %
\global\WithinProveIttrue
\useslashslash
\ifdim\wd\tempProveItBox>0pt %
  \unskip\vspace{-5pt} %
  \noindent\textsc{#1.} %
\fi
\halign\bgroup \hbox to \dispskipdimen{$##$\hfil}\null
   \tabskip=1em plus 2em minus 0.5em %
   &\hbox to \wd\proveitformulabox{$\Theorem ##$\hfil}
   &\enskip ##\hfil\cr
}{\cr\egroup\endgroup
\global\WithinProveItfalse
\myignoretrue}
\newbox\proveitformulabox
\global\let\myignoretrue=\@ignoretrue
\newcolumntype{C}{>{$}c<{$}}
\newcolumntype{L}{>{$}l<{$}}
\newcolumntype{R}{>{$}r<{$}}
\def\Exists#1{\exists #1\mathpunct{.}}
\def\acc{\mathit{acc}}
\def\Andd{\;\And\;}
\def\Atoms{\ifmmode \textrm{\emph{Atoms}}\else $\textrm{\emph{Atoms}}$\fi}
\def\chop{\mathord{^\frown}}
\def\ChopstarSymbol{{\text{\large $\boldsymbol\scriptstyle*$}}}
\def\DC{\ifmmode \textrm{DC}\else $\textrm{DC}$\fi}
\def\DD#1{\langle #1\rangle}
\def\eh{\ifmmode \mathord{\textit{eh}}\else $\mathord{\textit{eh}}$\fi}
\let\emphbf=\textbf
\def\ETL{\ifmmode \textrm{ETL}\else $\textrm{ETL}$\fi}
\def\FE{\ifmmode \textrm{FE} \else $\textrm{FE}$\fi}
\def\FEV{\ifmmode \textrm{FE}_V \else $\textrm{FE}_V$\fi}
\def\FL{\ifmmode \textrm{FL} \else $\textrm{FL}$\fi}
\def\FLV{\ifmmode \textrm{FL}_V \else $\textrm{FL}_V$\fi}
\def\FLVV{\ifmmode \textrm{FL}_{V,V} \else $\textrm{FL}_{V,V}$\fi}
\def\Fusion{\ifmmode \textrm{Fusion}\else $\textrm{Fusion}$\fi}
\def\init{\mathit{init}}
\def\INT{\ifmmode \textrm{INT} \else $\textrm{INT}$\fi}
\def\ITL{\ifmmode \textrm{ITL}\else $\textrm{ITL}$\fi}
\def\Inv{\ifmmode \textit{Inv} \else $\textit{Inv}$\fi}
\def\ldepth{\ifmmode \textrm{ldepth}\else $\textrm{ldepth}$\fi}
\def\lh{\ifmmode \mathord{\textit{lh}}\else $\mathord{\textit{lh}}$\fi}
\def\LRL{\ifmmode \textrm{LRL}\else $\textrm{LRL}$\fi}
\def\LTL{\ifmmode \textrm{LTL}\else $\textrm{LTL}$\fi}
\def\ModalSys#1{\textit{#1}}
\def\NL{\ifmmode \textrm{NL}\else $\textrm{NL}$\fi}
\def\NLone{\ifmmode \NL^{\!1}\else $\NL^{\!1}$\fi}
\def\nuTL{\ifmmode \nu\textrm{TL}\else $\nu\textrm{TL}$\fi}
\def\PDL{\ifmmode \textrm{PDL}\else $\textrm{PDL}$\fi}
\def\PITL{\ifmmode \textrm{PITL}\else $\textrm{PITL}$\fi}
\def\PITLK
\ifmmode \textrm{PITL}^{\!\textsc{k}} \else $\textrm{PITL}^{\!\textsc{k}}$\fi}
\def\PITLV{\ifmmode \textrm{PITL}_V \else $\textrm{PITL}_V$\fi}
\def\PTL{\ifmmode \textrm{PTL}\else $\textrm{PTL}$\fi}
\def\PTLF{\ifmmode \textrm{PTL}_{\lt \omega}
   \else $\textrm{PTL}_{\lt \omega}$\fi}
\def\PTLI{\ifmmode \textrm{PTL}_\omega\else $\textrm{PTL}_\omega$\fi}
\def\PTLU{\ifmmode \textrm{PTL}\!^{\textsc{u}} \else
   $\textrm{PTL}\!^{\textsc{u}}$\fi}
\def\QPITL{\ifmmode \textrm{QPITL}\else $\textrm{QPITL}$\fi}
\def\QPTL{\ifmmode \textrm{QPTL}\else $\textrm{QPTL}$\fi}
\def\Reg{\ifmmode \textrm{Reg} \else $\textrm{Reg}$\fi}
\let\state=s
\def\SAT{\ifmmode \textrm{SAT} \else $\textrm{SAT}$\fi}
\def\SChopstarSymbol{{\text{\Large $\scriptstyle\boldsymbol\star$}}}
\def\SChopstar{^\SChopstarSymbol}
\def\SChopstarSChopstar{^{\SChopstarSymbol\SChopstarSymbol}}
\let\ConventionalChopstar=\Chopstar
\def\SOneS{\ifmmode \textrm{S1S}\else $\textrm{S1S}$\fi}
\def\sh{\ifmmode \mathord{\textit{sh}} \else $\mathord{\textit{sh}}$\fi}
\def\U{\textbin{\mathcal{U}}}
\def\V{\ifmmode {\mathcal{V}} \else $\mathcal{V}$\fi}
\def\Var{\ifmmode \textit{Var} \else $\textit{Var}$\fi}
\def\VV{\ifmmode V\! \else $V\!$\fi}
\def\UntilOp{\mathit{until}}
\def\Var{\ifmmode \mathit{Var}\else $\mathit{Var}$\fi}
\def\VV{\ifmmode V\! \else $V\!$\fi}
\def\vld{\mathrel{\raisebox{\depth}{$\scriptstyle \models$}}}
\def\thmR{\vdash_{\mathrm{rt}}}
\def\theoremR{\thmR\;}
\def\TheoremR{\thmR\quad}
\def\x#1{{\rm (#1)}}
\let\dotspace=\null
\def\badVerbChars{\catcode`\&=\other
   \catcode`\^=\other
   \catcode`\_=\other
}
\def\badref{\bgroup\badVerbChars\badrefAux}
\def\badrefAux#1{\ref{#1} %
   (??? {\hyphenpenalty=0 
   \hbadness=10000   
   \textbf{\sffamily #1}} ???)\egroup}
\def\badeqref{\bgroup\badVerbChars\badeqrefAux}
\def\badeqrefAux#1{\eqref{#1} %
   (??? {\hyphenpenalty=0 
   \hbadness=10000   
   \textbf{\sffamily #1}} ???)\egroup}
\newif\ifmynote
\newif\ifmywarnings 
\newif\ifmyidwarnings 
\newif\ifmyidundefined 
\newif\ifNotEOF   
\newif\ifprintmyids 
\newif\ifPrintRefIdXrefs 
\newif\ifUseHardcopyIds 
\def\defof#1{def.~of~$#1$}
\def\mywrite#1#2{{\edef\tempmac{{#2}}\write#1\tempmac}}
\newif\ifrefidPageNo
\def\refid{\@ifnextchar[\refidAux{\refidAux[]}}
\def\refidAux[#1]#2{\relax
   \ifInTextBody  
     \immediate\write\MyThmsOut
       {\noexpand\MyThmsUsedInTextBody
	 #2\noexpand\EndMyThmsUsedInTextBody}\relax
   \fi
   \refidPageNofalse
   \setkeys{refid}{#1}\relax
   \JustDidXreffalse
   \ifrefidPageNo
      \forceidpagenum{#2}\relax
   \else
      \xrefpagenum{#2}\relax
   \fi
   \ifrefidPageNo
      \let\refidHypercommand=\myhypertarget
   \else
      \expandafter\ifx\csname #2-dummy-id\endcsname\relax
         \let\refidHypercommand=\myhyperlink
      \else
         \let\refidHypercommand=\MyHyperDummy
      \fi
   \fi
   \expandafter\ifx\csname defidDef-#2\endcsname\relax
      \expandafter\ifx\csname #2-ref\endcsname\relax
         \mywarning{Undefined ITL id #2}\relax
         \WriteIdUndefined{#2}\relax
         \myhyperlink{undefid.#2}{#2???}\relax
         \ifinner
         \else
            \mynote{Undefined ITL id: #2}\relax
         \fi
      \else
         \refidHypercommand{refid.#2.0}{{\normalfont
                \csname #2-ref\endcsname}}\relax
            \MyShowRefsFirstPageNum{#2}\relax
      \fi
   \else
      \refidHypercommand{refid.#2.0}{{\normalfont
            \csname defidDef-#2\endcsname}}\relax
         \MyShowRefsFirstPageNum{#2}\relax
   \fi
}
\newif\ifJustDidXref  
\def\WriteIdUndefined#1{\relax
   \expandafter\ifx\csname #1-undefined\endcsname\relax
      \global\myidwarningstrue
      \global\myidundefinedtrue
      \global\advance\myidundefcount by 1 %
      \mywrite\MyIdsOut{\noexpand\string\noexpand\MyIdUndefined
         ,#1,\noexpand\the\noexpand\pagenum
         ,\the\myidundefcount,}%
      \ifJustDidXref
      \else
         \xrefpagenum{#1}%
      \fi
      \expandafter\gdef\csname #1-undefined\endcsname{1}%
   \fi
}
\def\forceidpagenum#1{{\relax
   \expandafter\ifx\csname #1-number-of-defs\endcsname\relax
      \expandafter\gdef\csname #1-number-of-defs\endcsname{0}%
   \fi
   \tempcount=\csname #1-number-of-defs\endcsname
   \advance \tempcount by 1 %
   \expandafter\xdef\csname #1-number-of-defs\endcsname{\the\tempcount}%
   \myhypertarget{iddef.#1.\the\tempcount}{}%
   \write\MyIdsOut{\string\MyIdPageNum,#1,\the\pagenum,}%
}}
\def\xrefpagenum#1{{\expandafter\ifx\csname #1-disable-xrefs\endcsname\relax
   \expandafter\ifx\csname #1-xref-max\endcsname\relax
      \tempcount=1 %
   \else
      \tempcount=\csname #1-xref-max\endcsname
      \advance \tempcount by 1 %
   \fi
   \expandafter\xdef\csname #1-xref-max\endcsname{\the\tempcount}%
   \mywrite\MyIdsOut{\noexpand\string\noexpand\MyIdXrefPageNum
      ,#1,\the\tempcount,\noexpand\the\noexpand\pagenum,}%
   \ifWithinProveIt
      \ifx\MyCurrentStateItId\relax
      \else
         \immediate\write\MyIdsOut{\string\MyIdXrefInProof
           ,#1,\the\tempcount,\MyCurrentStateItId,}%
      \fi
   \fi
   \myhypertarget{myxref.#1.\csname #1-xref-max\endcsname}{}%
   \global\JustDidXreftrue
\fi
}}
\def\MyShowRefsFirstPageNum#1{\relax
   \ifWithinProveIt
      \expandafter\ifx\csname #1-disable-xrefs\endcsname\relax
         \expandafter\ifx\csname #1-first-pagenum\endcsname\relax
         \else
            {\small\ (p.\ \csname #1-first-pagenum\endcsname)}\relax
         \fi
      \fi
   \fi
}
\let\MyCurrentStateItId=\relax
\def\ifundefined#1{\relax\expandafter\ifx\csname #1\endcsname\relax}
\def\newidcount#1{\expandafter\newcount\csname idcount#1\endcsname
   \csname idcount#1\endcsname=\ifUseHardcopyIds 300\else 1\fi}
\newif\ifnewidPageNo
\newif\ifnewidPrevDefOk
\newif\ifnewidRef
\def\newid{\@ifnextchar[\newidAux{\newidAux[]}}
\def\newidAux[#1]#2{\let\IdPrefix=\DefaultIdPrefix
   \let\IdBase=\relax
   \newidPrevDefOkfalse
   \newidPageNofalse
   \newidReffalse
   \setkeys{newid}{#1}\relax
   \expandafter\ifx\csname #2-hardcopy.latex-code\endcsname\relax
      \newidAuxB{\IdPrefix}{#2}\relax
   \else
      \newidUseHardcopyDef{#2}\relax
   \fi
   \ifnewidRef
      \ifnewidPageNo
         \refid[pagenum]{#2}\unskip
      \else
         \refid{#2}\unskip
      \fi
   \else
      \ifnewidPageNo \forceidpagenum{#2}\fi
   \fi
}
\def\newidAuxB#1#2{%
   \ifx\IdBase\relax
      \expandafter\ifx\csname #2\endcsname\relax
         \expandafter\xdef\csname #2\endcsname
            {\noexpand\textbf{#1\the\csname idcount#1\endcsname}}\relax
         \fixednumcount=\the\csname idcount#1\endcsname
         \expandafter\xdef\csname #2-sortkey\endcsname{#1,\fixednum,}\relax
         \immediate\write\MyIdsOut
            {\string\MyIdEntry,#2,\csname #2-sortkey\endcsname
                ,\string\textbf{#1\the\csname idcount#1\endcsname
                }\string\myend}\relax
         \global\advance\csname idcount#1\endcsname by 1\relax
      \else
         \ifnewidPrevDefOk
         \else
            \mywarning{ITL id #2 already defined}\relax
         \fi
      \fi
   \else  
      \expandafter\ifx\csname \IdBase\endcsname\relax
         \mywarning{Undefined ITL id \IdBase\space needed in def of #2}%
         \JustDidXreffalse  
         \WriteIdUndefined{\IdBase}%
         \def\IdBase{---TempBaseFor#2}%
         \forceidpagenum{#2}%
      \fi
      \expandafter\xdef\csname #2\endcsname{\noexpand\csname
         \IdBase\noexpand\endcsname$\noexpand\boldsymbol{'}$}\relax
      \expandafter\xdef\csname #2-sortkey\endcsname
                     {\csname \IdBase-sortkey\endcsname'}\relax
      \immediate\write\MyIdsOut
         {\string\MyIdEntry,#2,\csname #2-sortkey\endcsname
             ,\expandafter\string\csname \IdBase\endcsname
                   $\string\boldsymbol{'}$\string\myend}\relax
   \fi
}
\def\newidUseHardcopyDef#1{{\relax
   \csnamelet{\tempmac}{#1-hardcopy.triple.1}\relax
   \temptoks=\expandafter{\tempmac}\relax
   \csnamelet{\tempmac}{#1-hardcopy.triple.2}\relax
   \temptoksa=\expandafter{\tempmac}\relax
   \csnamelet{\tempmac}{#1-hardcopy.triple.3}\relax
   \temptoksb=\expandafter{\tempmac}\relax
   \csnamelet{\tempmac}{#1-hardcopy.latex-code}\relax
   \temptoksc=\expandafter{\tempmac}\relax

   \immediate\write\MyIdsOut
        {\string\MyIdEntry,#1,\the\temptoks
         ,\the\temptoksa,\the\temptoksb
         ,\the\temptoksc\noexpand\myend}\relax

   \expandafter\xdef\csname #1\endcsname{\the\temptoksc}\relax
   \expandafter\xdef\csname #1-sortkey\endcsname
        {\the\temptoks,\the\temptoksa,\the\temptoksb,\the\temptoksc}\relax
}}
\def\dummyid#1{\expandafter\gdef\csname #1-dummy-id\endcsname{dummy-id}}
\def\fixednum{%
   \ifnum 100>\fixednumcount 0\ifnum 10>\fixednumcount 0\fi\fi
   \the\fixednumcount}
\newif\ifdefidPageNo
\newif\ifdefidRef
\newif\ifdefidNoXrefs
\newif\ifdefidBold
\def\defid{\@ifnextchar[\defidAux{\defidAux[]}}
\def\defidAux[#1]#2#3{\defidPageNofalse
   \defidReffalse
   \defidNoXrefsfalse
   \defidBoldtrue
   \setkeys{defid}{#1}%
   \defidAuxB#2,#3;%
   \ifdefidRef
      \ifdefidPageNo
         \refid[pagenum]{#2}
      \else
         \refid{#2}
      \fi
   \else
      \ifdefidPageNo \forceidpagenum{#2}\fi
   \fi
   \ifdefidNoXrefs
      \expandafter\gdef\csname #2-disable-xrefs\endcsname{1}
   \fi
}
\def\defidAuxB#1,#2,#3,#4;{\ifdefidBold
        \def\tempid{\textbf{#2#3#4}}%
      \else
        \def\tempid{#2#3#4}%
      \fi
      \expandafter\ifx\csname #1-ref\endcsname\tempid
      \else
         \global\myidwarningstrue
         \mywarning{Changed ITL id in \string\defid: \noexpand #1}%
      \fi
      \global\expandafter\let\csname defidDef-#1\endcsname\tempid
      \fixednumcount=0#3  
      \expandafter\xdef\csname #1-sortkey\endcsname{#2,\fixednum,#4}%
      \immediate\write\MyIdsOut
         {\string\MyIdEntry,#1,#2,\fixednum
                ,#4,\ifdefidBold\string\textbf\fi{#2#3#4}\string\myend}%
}
\def\defidthree#1#2#3{\message{[#1\space #2\space #3]}\defid[#1]{#2}{#3}}
\def\mynote#1{\ifdraft\ifmynote\marginpar{\raggedright
   \footnotesize *** #1 ***}\fi\fi}
\def\MyIdEntry,#1,#2,#3,#4,#5\myend{\relax
   \expandafter\gdef\csname #1-ref\endcsname{#5}}
\def\MyIdPageNum,#1,#2,{{\relax
   \expandafter\ifx\csname #1-iddef-max\endcsname\relax
      \expandafter\xdef\csname #1-iddef-max\endcsname{0}%
   \fi
   \tempcount=\csname #1-iddef-max\endcsname
   \advance \tempcount by 1 %
   \expandafter\xdef\csname #1-iddef-max\endcsname{\the\tempcount}%
   \expandafter\xdef\csname #1-iddef.\the\tempcount\endcsname{#2}%
}}
\def\MyIdXrefPageNum,#1,#2,#3,{}
\def\MyIdXrefInProof,#1,#2,#3,{}
\def\MyIdUndefined,#1,#2,#3,{}
\def\mywarning#1{\global\mywarningstrue
   \typeout{^^JMy Warning: #1 on line \the\inputlineno.^^J}\relax}
\newif\ifNeedEndingLineOfStars
\def\LineOfStars{\typeout{**********************************************************************}}
\newbox\DisplayRulebox
\def\DisplayRule#1{\ifmynote\rlap{\hspace{5in}\DisplayInsertHyphens{#1}}\fi}
\def\DisplayInsertHyphens#1{\relax
   \savebox{\DisplayRulebox}{\relax
      \parbox[t]{1.8in}{
      \InsertHyphens{#1}}} %
   \dp\DisplayRulebox=0pt %
   \usebox{\DisplayRulebox} %
}
\def\InsertHyphens#1{{\relax
   \hyphenpenalty=0 
   \hbadness=10000   
   \newif\ifFirstChar  \FirstChartrue
   \InsertHyphensAux#1|\relax
}}
\def\InsertHyphensAux#1{\if #1|\let\InsertHyphensAuxB=\relax
   \else
      \ifFirstChar
         #1\relax
         \global\FirstCharfalse
      \else
         \expandafter\ifx\csname MyHyphen-#1\endcsname\relax
            #1\relax
         \else
            \csname MyHyphen-#1\endcsname
         \fi
      \fi
      \let\InsertHyphensAuxB=\InsertHyphensAux
   \fi
   \InsertHyphensAuxB}
\def\DefineMyHyphenNames#1{\if #1|\let\DefineMyHyphenNamesAux=\relax
   \else \expandafter\def\csname MyHyphen-#1\endcsname{\-#1}
      \let\DefineMyHyphenNamesAux=\DefineMyHyphenNames\fi
   \DefineMyHyphenNamesAux}
\def\TheoremPrefix{T}
\def\DerivedRulePrefix{DR}
\def\showsomething[#1]#2#3{\gdef\MyCurrentStateItId{#3}\relax
   \ifundefined{#3}\else \llap{*}\fi
   \DisplayRule{#3}\relax
   \setbox\tempbox=\hbox{\newid[prefix=#2,pagenum,ref,prevdef-ok,#1]{#3}}\relax
   \myhyperlink{symtab.#3.0}{\box\tempbox}\relax
}
\def\showtheorem[#1]#2{\showsomething[#1]{\TheoremPrefix}{#2}}
\def\showderivedrule[#1]#2{\showsomething[#1]{\DerivedRulePrefix}{#2}}
\def\stateit[#1]#2{\begingroup
  \temptoks={#2}
  \immediate\write\MyThmsOut
    {\noexpand\MyThmsOne \the\temptoks\noexpand\EndMyThmsOne}\relax
\useslashslash
\tabskip=0pt %
\def\omfil{\omit\hfil}  
\vspace{7pt}
\halign{\showit[#1]{##}\hfil\quad\enskip &$##$\hfil\cr
#2\cr}\endgroup\unskip}
\def\statetheorem{\@ifnextchar[\statetheoremAux{\statetheoremAux[]}}
\def\statetheoremAux[#1]#2{\let\showit=\showtheorem
   \stateit[#1]{#2}\ignorespaces}
\def\statederivedrule{\@ifnextchar[\statederivedruleAux
   {\statederivedruleAux[]}}
\def\statederivedruleAux[#1]#2{\let\showit=\showderivedrule
   \stateit[#1]{#2}\ignorespaces}
\let\pagenum=\c@page
\def\ReadMyIds#1{\relax
   \newread\MyIdsIn \openin\MyIdsIn = \jobname.#1
   \ifeof\MyIdsIn
      \closein\MyIdsIn
   \else
      \closein\MyIdsIn
      \input{\jobname.#1}
   \fi
}
\def\MyIdEntry,#1,#2,#3,#4,#5\myend{\tempcount=#3\relax
   \expandafter\xdef\csname #1-printed-ref\endcsname
      {#2\ifnum\tempcount>0 \the\tempcount \fi #4}\relax
}
\def\MyIdPageNum,#1,#2,{
   \ifundefined{#1-printed-pagenums}\relax
      \expandafter\xdef\csname #1-printed-pagenums\endcsname{#2}
   \else
      \expandafter\xdef\csname #1-printed-pagenums\endcsname%
                                {\csname #1-printed-pagenums\endcsname,#2}
   \fi
}
\def\MyIdUndefined,#1,#2,#3,{}
\def\MyIdEntry,#1,#2,#3,#4,#5\myend{\relax
   \expandafter\gdef\csname #1-hardcopy.triple.1\endcsname{#2}\relax
   \expandafter\gdef\csname #1-hardcopy.triple.2\endcsname{#3}\relax
   \expandafter\gdef\csname #1-hardcopy.triple.3\endcsname{#4}\relax
   \expandafter\gdef\csname #1-hardcopy.latex-code\endcsname{#5}\relax
}
\def\MyIdPageNum,#1,#2,{}
\newwrite\MyIdsOut \immediate\openout\MyIdsOut = \jobname.myids
   \def\MyThmsSuffix{mythms_p}
   \def\MyThmsSuffix{mythms}
\def\MyThmsUsedInTextBody#1\EndMyThmsUsedInTextBody{\expandafter\gdef
  \csname #1-used-in-mythms\endcsname{1}}
\def\MyThmsMacro#1\EndMyThmsMacro{\def\MyThmsMacroA{#1}} %
\let\MyThmsMacroA=\relax
\begin{document}

\newif\ifInTextBody  
\InTextBodyfalse 
\newwrite\MyThmsOut \immediate\openout\MyThmsOut = \jobname.mythms %
\ifx\MyThmsMacroA\relax
\else
\fi

\ifmyacm
  \bibliographystyle{acmtrans}
\else
  \bibliographystyle{alpha}  
  \ifmylmcs
    \let\citeyear=\cite
  \fi
\fi

\title[A Complete Axiom System for Propositional Interval Temporal Logic]%
{A Complete Axiom System for 
         Propositional Interval Temporal Logic with Infinite Time}
\author[B.~Moszkowski]{Ben Moszkowski}
\address{Software Technology Research Laboratory,
   De Montfort University, Leicester, UK}

\keywords{Interval Temporal Logic, axiom system, axiomatic completeness,
  omega-regular languages, omega-regular logics, compositionality}
\subjclass{F.4.1, F.3.1}

\begin{abstract}
  \emph{Interval Temporal Logic ($\ITL$)} is an established temporal formalism
  for reasoning about time periods.  For over 25 years, it has been applied in
  a number of ways and several $\ITL$ variants, axiom systems and tools have
  been investigated.  We solve the longstanding open problem of finding a
  complete axiom system for basic quantifier-free propositional $\ITL$
  ($\PITL$) with infinite time for analysing nonterminating computational
  systems.  Our completeness proof uses a reduction to completeness for
  $\PITL$ with finite time and conventional propositional linear-time temporal
  logic.  Unlike completeness proofs of equally expressive logics with
  nonelementary computational complexity, our semantic approach does not use
  tableaux, subformula closures or explicit deductions involving encodings of
  omega automata and nontrivial techniques for complementing them.  We believe
  that our result also provides evidence of the naturalness of interval-based
  reasoning.
\end{abstract}

\maketitle

\InTextBodytrue 

\section{Introduction}

\label{introduction-sec}

Intervals and discrete linear state sequences offer a natural and flexible way
to model both sequential and parallel aspects of computational processes
involving hardware or software.  \emph{Interval Temporal Logic
  ($\ITL$)}~\cite{Moszkowski86} (see also~\cite{ITLwebsite}) is an established
formalism for rigorously reasoning about such intervals.  $\ITL$ has a basic
construct called \emph{chop} for the sequential composition of two arbitrary
formulas as well as an analogue of Kleene star for iteration called
\emph{chop-star}.  Although originally developed for digital hardware
specification~\cite{Moszkowski83a,Moszkowski83,HalpernManna83,Moszkowski85},
$\ITL$ is suitable for logic-based executable
specifications~\cite{Moszkowski86}, compositional reasoning about concurrent
processes~\cite{Moszkowski94,Moszkowski95a,Moszkowski98,Moszkowski11-TIME},
refinement~\cite{CauZedan97}, as well as for runtime
analysis~\cite{ZhouZedan99}.

Until now, in spite of research over many years involving $\ITL$ and its
applications, there was no known complete axiom system for quantifier-free
propositional ITL ($\PITL$) with infinite time.  We present one and prove
completeness by a reduction to our earlier complete $\PITL$ axiom system for
finite time~\cite{Moszkowski04a} (see also~\cite{BowmanThompson03}) and
conventional propositional linear-time temporal logic ($\PTL$).  We do not use
subformula closures, tableaux, or explicit deductions involving encodings of
omega automata and nontrivial techniques for complementing them.  Such encodings
are typically found in completeness proofs for comparable logics discussed later
on (see \S\ref{omega-regular-logics-with-nonelementary-complexity-subsec}),
which like $\PITL$ have omega-regular expressiveness.  See
Thomas~\citeyear{Thomas90,Thomas97} for more about omega-regular languages,
omega automata and some associated logics. Our simple axiom system avoids
complicated inference rules and proofs such as axiom systems for an equally
expressive version of $\PITL$ with restricted sequential
iteration~\cite{Paech89} and a less expressive version of $\PITL$ lacking
sequential iteration~\cite{RosnerPnueli86}.  In the future we plan to use our
axiom system as a hierarchical basis for obtaining completeness for some $\PITL$
variants.  We also believe it can be applied to some other logics and discuss
this in Section~\ref{future-work-sec}.

Our earlier completeness proof for a larger, more complicated axiom system for
quantified $\ITL$ with finite domains and infinite time~\cite{Moszkowski00}
does not work if variables are limited to being just propositional.  So that
result, while serving as a stepping stone for further research on $\ITL$, even
fails to establish axiomatic completeness for a quantified version of $\PITL$
($\QPITL$) with infinite time!  For these reasons, we feel justified in
regarding the problem of showing axiomatic completeness for full $\PITL$ with
infinite time as a previously open problem.

We now mention some recent publications by others as evidence of $\ITL$'s
continuing relevance.  None specifically motivate our new completeness proof.
Nevertheless, they arguably contribute to making a case for the study of
$\ITL$'s mathematical foundations, which naturally include axiomatic
completeness.

The KIV interactive theorem prover~\cite{ReifSchellhorn98} has for a number of
years included a slightly extended version of $\ITL$ for interactive theorem
proving via symbolic execution both by itself (e.g.,
see~\cite{BaeumlerBalser10,BaeumlerSchellhorn09}) and also as a backend
notation which supports Statecharts~\cite{ThumsSchellhorn04} and
UML~\cite{BalserBaeumler04}.  KIV can employ ITL proof systems such as ours.
The concluding remarks of~\cite{BaeumlerSchellhorn09} note the following
advantages of ITL:
\begin{quote}
  Our $\ITL$ variant supports classic temporal logic operators as well as
  program operators.

  The interactive verifier KIV allows us to directly verify parallel programs
  in a rich programming language using the intuitive proof principle of
  symbolic execution. An additional translation to a special normal form (as
  e.g.\ in TLA [Temporal Logic of Actions~\cite{Lamport02}]) using explicit
  program counters is not necessary.
\end{quote}
Axiomatic completeness of $\PITL$ is not an absolute requirement for the KIV
tool but does offer some benefits.  This is because some axioms, inference rules
and associated deductions employed to prove completeness can be exploited in
KIV, thereby reducing the number of adhoc axioms and inference
rules.\footnote{Our claim is supported by email correspondence in 2011 with
  Gerhard Schellhorn of the KIV group.}

Various imperative programming constructs are expressible in $\ITL$ and
operators for projecting between time granularities are available (but not
considered here).  $\ITL$ influenced an assertion language called
\emph{temporal `e'}~\cite{Morley99} which is part of the IEEE Standard
1647~\cite{IEEE1647\LMCSPaperBib} for the system verification language
\emph{`e'}.

The \emph{Duration Calculus ($\DC$)} of Zhou, Hoare and
Ravn~\citeyear{ZhouHoare91} is an $\ITL$ extension for real-time and hybrid
systems.  The books by Zhou and Hansen~\citeyear{ZhouHansen2004} and Olderog
and Dierks~\citeyear{OlderogDierks2008} both employ $\DC$ with finite time and
discuss relatively complete axiom systems for it.  The second book utilises
$\DC$ with timed automata to provide a basis for specifying, implementing and
model checking suitable real-time systems. Indeed, Olderog and Dierks explain
how they regard an interval-oriented temporal logic as being better suited for
these tasks than more widely used point-based ones and timed process algebras.
Concerning point-based logics, they make this comment (on page 23): ``In our
opinion this leads to complicated reasoning similar to that \ldotss based on
predicate logic.''  As for timed process algebras, they note the following (on
page 25): ``A difficulty with these formalisms is that their semantics are
based on certain scheduling assumptions on the actions like urgency, which are
difficult to calculate with.''

Within the last ten years, other complete axiom systems for versions of
propositional and first-order $\ITL$ with infinite time have been presented.
These include two by Wang and Xu~\citeyear{WangXu2004} for first-order
variants with restricted quantifiers and no sequential iteration as well as a
probabilistic extension of theirs by Guelev~\citeyear{Guelev2007} which all
build on an earlier completeness result of Dutertre~\citeyear{Dutertre95} for
first-order $\ITL$ restricted to finite time.  Like Dutertre, Wang and Xu and
also Guelev use a nonstandard abstract-time semantics (e.g., without induction
over time) instead of $\ITL$'s standard discrete-time one.  Their proofs
employ Henkin-style infinite sets of maximal consistent formulas.  Duan et
al.~\citeyear{DuanZhang08,DuanZhang12} give a tableaux-like completeness proof
for a related omega-regular logic called \emph{Propositional Projection
  Temporal Logic (PPTL)}. The only primitive temporal operators in PPTL for
sequential composition have varying numbers of operands and concern multiple
time granularities. However, both chop and chop-star can be derived. The proof
system has over 30 axioms and inference rules, some rather lengthy and
intricate.  The completeness proof itself involves the nontrivial task of
complementing omega-regular languages which can be readily expressed in the
logic but it is not discussed.  Furthermore, the authors omit much of the
prior work in the area developed in the course of over forty years (which we
later survey in
Section~\ref{existing-completeness-proofs-for-omega-regular-logics-sec}).
More significantly, they do not explain how they bypass the associated hurdles
faced by previous completeness proofs for logics with comparable
expressiveness and nonelementary computational complexity.  These points make
checking the proof's handling of the complementation of omega-regular
languages, liveness and other issues rather challenging.  Mo, Wang and
Duan~\citeyear{MoWang2011} describe promising applications of Projection
Temporal Logic to specifying and verifying asynchronous communication.  Zhang,
Duan and Tian~\cite{ZhangDuan2012} investigate the modelling of multicore
systems in Projection Temporal Logic.  In view of this, the foundational issue
of axiomatic completeness for $\textrm{PPTL}$ should be addressed in the
future more thoroughly and systematically and better related to other
approaches. Incidentally, we already showed in~\cite{Moszkowski95a} that
axiomatic completeness for a version of $\PITL$ with a standard version of
temporal projection can be simply and hierarchically reduced to axiomatic
completeness for $\PITL$ without temporal projection.  Duan et
al.~\citeyear{DuanZhang08,DuanZhang12} however make no mention of this by now
long established and powerful technique in their review of prior work.


Here is the structure of the rest of this presentation: Section~\ref{pitl-sec}
overviews $\PITL$ and the new axiom system.
Section~\ref{right-instances-right-variables-and-right-theorems-sec} concerns
a class of $\PITL$ theorems from which we can also deduce suitable
substitution instances needed later on.
Section~\ref{some-lemmas-for-replacement-sec} gives some infrastructure for
systematically replacing formulas by other equivalent ones in deductions
arising in the completeness proof.  Section~\ref{useful-subsets-of-pitl-sec}
introduces some useful $\PITL$ subsets for later use in the completeness
proof.  Section~\ref{reduction-of-chop-omega-sec} reduces completeness for
$\PITL$ with a kind of infinite sequential iteration to completeness for a
subset without this.
Section~\ref{deterministic-semi-automata-and-automata-sec} shows how to
represent deterministic finite-state semi-automata and automata in $\PITL$.
Section~\ref{compound-semi-automata-for-suffix-recognition-sec} employs
semi-automata to test a given $\PITL$ formula in a finite interval's suffix
subintervals.  Section~\ref{reduction-of-pitl-to-ptlu-sec} shows completeness
for the $\PITL$ subset without infinite sequential iteration.
Section~\ref{some-observations-about-the-completeness-proof-sec} includes some
observations about the completeness proof.
Section~\ref{existing-completeness-proofs-for-omega-regular-logics-sec}
reviews existing complete axiom systems for omega-regular logics.
Section~\ref{future-work-sec} discusses some topics for future research.

\section{Propositional Interval Temporal Logic}

\label{pitl-sec}

We now describe the version of (quantifier-free) $\PITL$ used here.  More on
basic aspects of $\ITL$ can be found in \cite{Moszkowski83a,HalpernManna83,%
  Moszkowski85,Moszkowski86,Moszkowski04a} (see also Kr{\"o}ger and
Merz~\citeyear{KroegerMerz08}, Fisher~\cite{Fisher11} and the ITL
web pages~\cite{ITLwebsite}).

Below is the syntax of $\PITL$ formulas in BNF, where $p$ is any propositional
variable:
\begin{displaymath}
A ::= \,
\True \,\mid\,
p \,\mid\,
\Not A \,\mid\,
A \Or A \,\mid\,
\Skip \,\mid\,
A\chop A \,\mid\, 
A\SChopstar 
\dotspace.
\end{displaymath}
The last two constructs are called \emphbf{chop} and \emphbf{chop-star},
respectively.  The boolean operators $\False$, $A\And B$, $A\imp B$
(\emph{implies}) and $A\equiv B$ (\emph{equivalence}) are defined as usual.
We refer to $A\chop B$ as \emphbf{strong chop}, since a weak version $A;B$
also exists.  In addition, $A\SChopstar$ (\emphbf{strong chop-star}) slightly
differs from $\ITL$'s conventional \emphbf{weak chop-star}
$A\ConventionalChopstar$, although the two are interderivable.  The strong
variants of chop and chop-star taken as primitives here are chosen simply
because, without loss of generality, they help streamline the completeness
proof.

We use $p$, $q$, $r$ and variants such as $p'$ for propositional variables.
Variables $A$, $B$, $C$ and variants such as $A'$ denote arbitrary $\PITL$
formulas.  Let $w$ and $w'$ denote \emphbf{state formulas} without the
temporal operators $\Skip$, chop and chop-star.  We have $V$ denote a finite
set of propositional variables.  Also, $V_A$ denotes the finite set of the
formula $A$'s variables.

Time within $\PITL$ is discrete and linear.  It is represented by
\emph{intervals} each consisting of a sequence of one or more states.  More
precisely, an interval $\sigma$ is any finite or $\omega$-sequence of one or
more states $\sigma_0$, $\sigma_1$, \ldots.  Each state $\sigma_i$ in $\sigma$
maps each propositional variable $p$ to either $\True$ and $\False$.  This
mapping is denoted as $\sigma_i(p)$.  An interval $\sigma$ has an
\emph{interval length} $\intlen{\sigma}\ge 0$, which, if $\sigma$ is finite,
is the number of $\sigma$'s states minus 1 and otherwise $\omega$. So if
$\sigma$ is finite, it has states $\sigma_0$, \ldots,
$\sigma_{\intlen{\sigma}}$.  This (standard) version of $\PITL$, with
state-based propositional variables, is called \emphbf{local PITL}.

A \emphbf{subinterval} of $\sigma$ is any interval which is a
\emph{contiguous} subsequence of $\sigma$'s states. This includes $\sigma$
itself.

The notation $\sigma \vld A$, defined shortly by induction on $A$'s syntax,
denotes that interval $\sigma$ \emphbf{satisfies} formula $A$.  Moreover, $A$
is \emphbf{valid}, denoted $\vld A$, if all intervals satisfy it.

Below are the  semantics of the first five constructs:
\begin{iteMize}{$\bullet$}
  
\item True: $\sigma \vld \True$ trivially holds for any $\sigma$.

\item Propositional variable: $\sigma \vld p \iff p$ is true in the initial
  state $\sigma_0$ (i.e., $\sigma_0(p)=\True$).

\item Negation: $ \sigma\vld \Not A  \iff \sigma \not\vld A $.

\item Disjunction: $ \sigma\vld A \Or B \iff
  \sigma\vld A \text{ or } \sigma\vld B $.
  
\item Skip:
  $ \sigma\vld \Skip \iff $
$\sigma$ has exactly two states.
\end{iteMize}
For natural numbers $i$, $j$ with $0\le i\le j\le\intlen{\sigma}$, let
$\sigma_{i:j}$ be the finite subinterval $\sigma_i \ldots \sigma_j$ (i.e.,
$j-i+1$ states). Define $\sigma_{i\uparrow}$ to be $\sigma$'s suffix
subinterval from state $\sigma_i$.

Below are semantics for the versions of chop and chop-star found most suitable
for the completeness proof.  As already noted, other versions can be readily
derived.

\begin{iteMize}{$\bullet$}
\item Chop: $ \sigma\vld A\chop B \iff $ for some natural number $i:0\le
  i\le\intlen{\sigma}$, both $\sigma_{0:i}\vld A$ and $\sigma_{i\uparrow}\vld
  B$.  This is called \emph{strong chop} because both $A$ and $B$ must be
  true.


\item Chop-star: $ \sigma \vld A\SChopstar \iff $ one of the following
  holds:
  \begin{iteMize}{$-$}
  \item Interval $\sigma$ has only one state (i.e., it is \emph{empty}).
  \item $\sigma$ is finite and either itself satisfies $A$ or can be split
    into a finite number of (finite-length) subintervals which share
    end-states (like chop) and all satisfy $A$.
  \item $\intlen{\sigma}=\omega$ and $\sigma$ can be split into $\omega$
    finite-length intervals sharing end-states (like chop) and each satisfying
    $A$.
  \end{iteMize}
  In this version of chop-star, each iterative subinterval has finite
  length. The third case is called \emphbf{chop-omega} and denoted as
  $A\Chopomega$.
\end{iteMize}

As an example, we depict the behaviour of variable $p$ in some 5-state
interval $\sigma$ and denote $\True$ and $\False$ by \texttt{t} and
\texttt{f}, respectively:
\begin{displaymath}
  \begin{array}{cccccc}
     & \sigma_0 & \sigma_1 & \sigma_2 & \sigma_3 & \sigma_4 \\\hline
     p & \texttt{t} & \texttt{f} & \texttt{t} & \texttt{f} & \texttt{t}
  \end{array}    
\end{displaymath}
This interval satisfies the following formulas:
\begin{displaymath}
  p \qquad \Skip\chop\Not p  \qquad p \And (\True\chop\Not p)
 \qquad (p \And (\Skip\chop\Skip))\SChopstar
\dotspace.
\end{displaymath}
For instance, the formula $\Skip\chop\Not p$ is true because
$\sigma_0\sigma_1$ satisfies $\Skip$ and $\sigma_1\ldots\sigma_4$ satisfies
$\Not p$ since $\sigma_1(p)=\False$.  The fourth formula is true because both
$\sigma_0\ldots\sigma_2$ and $\sigma_2\ldots\sigma_4$ satisfy $p\And
(\Skip\chop\Skip)$.  The interval does not satisfy the formulas below:
\begin{displaymath}
  \Not p \qquad \Skip\chop p
 \qquad \True\chop (\Not p \And \Not(\True\chop p))
\dotspace.
\end{displaymath}

\begin{mycomment}
Figure~\ref{informal-ITL-fig} pictorially illustrates the semantics of chop
and chop-star in both finite and infinite time and also shows some simple
$\PITL$ formulas using derived operators in
Table~\ref{pitl-derived-operators-table} together with finite intervals which
satisfy the formulas.  For some sample formulas we include in parentheses
versions using $\PTL$ operators.
\begingroup
\hyphenpenalty=10000 %
\begin{figure*}
\begin{center}
   \def\xdots{\scalebox{2.0}{\ldots}} %
   \setbox\tempbox=\hbox{\scalebox{0.75}{\relax
            \input{semantics-of-ITL-for-infinite-time-manna-journal-submission.half-pstex_t}}} %
   \subfigure[Informal semantics for finite time]{\vbox to \ht\tempbox{\relax
      \hbox{\scalebox{0.75}{\relax
            \input{semantics-of-ITL-for-finite-time-manna-journal-submission.half-pstex_t}}}\vfil}} %
      \hspace{1cm}
   \subfigure[Informal semantics for infinite time]{\copy\tempbox} %
      \hspace{1cm}
   \subfigure[Some finite-time examples]{\relax
      \scalebox{0.85}{\input{sample-ITL-formulas-for-manna-journal-submission.half-pstex_t}}}
\end{center}
\caption{Informal $\PITL$ semantics and examples}
\label{informal-ITL-fig}
\end{figure*}
\endgroup
\end{mycomment}

\begin{mycomment}
\begin{myremark}
  The behaviour of chop-star on empty intervals is a frequent source of
  confusion and it is therefore important to note that any formula
  $A\SChopstar$ (including $\False\SChopstar$) is true on a one-state interval.
  This is because in the semantics of chop-star for a one-state interval we
  can always set $n=0$ and therefore ignore the values of variables in the
  interval.
\end{myremark}
\end{mycomment}

\begin{table*}
  \begin{center}
    \begin{tabular}{L@{$\Defeqv$}Ll}
       \Next A & \Skip\chop A & Next \\
       \Diamond A & \True\chop A & Eventually \\
       \Box A & \Not\Diamond\Not A & Henceforth \\
       \More & \Next\True & More than one state \\
       \Empty & \Not\More & Only one state \\
       \Finite & \Diamond\Empty & Finite interval \\
       \Inf & \Not\Finite & Infinite interval \\
       \Fin A & \Box(\Empty \imp A)& Weak test of final state \\
       A \Tassign B & \Finite \imp ((\Fin A)\equiv B) & Temporal assignment \\
       \Df\! A & A\chop\True & Some initial finite subinterval \\
       \Bf A & \Not\Df\Not A & All initial finite subintervals \\
       A;B & (A\chop B) \;\Or\; (A \And \Inf) & Weak chop \\
       \Di A & A;\True & Some initial subinterval (even infinite)\\
       \Bi A & \Not\Di\Not A & All initial subintervals (including infinite) \\
        A\ConventionalChopstar
	  & A\SChopstar \Or \bigl(A\SChopstar \chop (A \And \Inf)\bigr)
	  & Conventional (weak) chop-star \\
        A\Chopomega &  A\SChopstar \And \Inf & Chop-omega
    \end{tabular}
    \caption{Some useful derived PITL operators}
    \label{pitl-derived-operators-table}
  \end{center}
\end{table*}
Table~\ref{pitl-derived-operators-table} shows some useful derived $\PITL$
operators, including the weak versions of chop $A;B$ and chop-star
$A\ConventionalChopstar$.
The derived construct $A\Tassign B$ for temporal assignment in
Table~\ref{pitl-derived-operators-table} perhaps requires some more
explanation. Its purpose is to specify that the value of $A$ in a finite
interval's last state equals the value of $B$ for the interval.  For example,
the formula $p\Tassign \Box q$ is true on an interval iff either (a) the
interval is infinite or (b) it is both finite and has one of the following
hold for the propositional variables $p$ and $q$:
\begin{iteMize}{$\bullet$}
\item The (finite) interval's last state has $p$ true and all states have $q$
  true.
\item The (finite) interval's last state has $p$ false and at least one state
  has $q$ false.
\end{iteMize}

\gdef\MergedAxiomSystem{\relax
\let\currentnametype=\nameaxiom
{\useslashslash\let\theorem=\theoremBEN
\mbox{\relax
\vtop{
\tabskip=0pt
\halign{\strut \defidthree{ref,pagenum}##\hfil\hskip 0.25em&$##$\hfil\cr
\multicolumn{2}{l}{\strut\emph{Axioms:}} \\
{VPTL}{VPTL,,}&
        \theorem \mbox{Substitution instances of valid $\PTL$ formulas}  \\
{ChopAssoc}{P,2,}&
  \theorem (A\chop B)\chop C \,\equiv\, A\chop (B\chop C) \cr
{OrChopImp}{P,3,}&
  \theorem  (A\Or A')\chop B
      \,\implies\, (A\chop B) \Or (A'\chop B)
 \\
{ChopOrImp}{P,4,}&
 \theorem A\chop (B\Or B')
       \,\implies\, (A\chop B) \Or (A\chop B')
 \\
{EmptyChop}{P,5,}&
  \theorem \Empty\chop A \enskip\equiv\enskip A \\
{FiniteImpChopEmpty}{P,6,}&
  \theorem \Finite \,\implies\, (A\chop\Empty \enskip\equiv\enskip A) \\
{StateImpBf}{P,7,}&
  \theorem w \implies \Bf w \\
{BfAndBoxImpChopImpChop}{P,8,}&
   \theorem \Bf(A\imp A') \Andd \Box(B\imp B')
     \Implies (A\chop B \implies A'\chop B')
 \\
{SChopStarEqv}{P,9,}&
  \theorem A\SChopstar \EQUIV
      \Empty \Or (A\And \More)\chop A\SChopstar
 \\
{ChopOmegaInduct}{P,10,}&
 \theorem A \And \Box\bigl(A \imp (B \And \More)\chop A\bigr)
       \,\implies\, B\Chopomega
  \\
\multicolumn{2}{l}{\strut\vphantom{\vrule width 0pt height 1.4em}\emph{Inference Rules:}} \\
{MP}{MP,,}&
 \theorem A\imp B, \quad \theorem A \infer \theorem B \\
{BfFGen}{$\Bf$FGen,,}&
  \thm \Finite \imp A \infer \theorem \Bf A \\
{BoxGen}{$\Box$Gen,,}&
  \theorem A \infer \theorem \Box A \\
{BfAux}{$\Bf$Aux,,}&
   \thm \Bf\bigl((\Fin p)\equiv B\bigl) \,\implies\, A \infer \thm A \\
\noalign{\vspace{4pt}}
\multicolumn{2}{l}{\strut
   \rlap{In \textbf{$\Bf$Aux}, 
      propositional variable $p$ must not occur in $A$ or $B$.}}
\cr}}}}}\relax
\begin{table*}[t]
\centerline{\MergedAxiomSystem}
\caption{Axiom system for PITL with finite and infinite time}
\label{axiom-system-for-pitl-with-finite-and-infinite-time-table}
\end{table*}
\begin{table*}[t]
\begin{center}
\bgroup
\def\defidX#1#2{\defidY#2}%
\def\defidY#1,#2,#3{\textbf{#1#2}}%
\mbox{\useslashslash
\qquad
\vtop{
\halign{\strut \defidthree{ref,pagenum}#\hfil\hskip 0.25em&$#$\hfil   \cr
\multicolumn{2}{l}{\strut\emph{Axioms:}} \\
{Taut}{Taut,,}&
        \theoremBEN \mbox{Substitution instances of conventional
	 (nonmodal) tautologies}
 \cr
{FChopAssoc}{F,2,}&
   \theoremBEN (A\chop B)\chop C \equiv A\chop (B\chop C)\cr
{FOrChopImp}{F,3,}& \theoremBEN (A\Or A')\chop B
   \implies (A\chop B) \Or (A'\chop B) \cr
{FChopOrImp}{F,4,}& \theoremBEN A\chop (B\Or B')
   \implies (A\chop B) \Or (A\chop B') \cr
{FEmptyChop}{F,5,}& \theoremBEN \Empty\chop A \EQUIV A \cr
{FChopEmpty}{F,6,}& \theoremBEN A\chop\Empty \EQUIV A \cr
{FStateImpBf}{F,7,}& \theoremBEN w \implies \Bf w \cr
{FBiAndBoxImpChopImpChop}{F,8,}&\theoremBEN
           \Bf(A\imp A') \Andd \Box(B\imp B')
           \Implies (A\chop B) \imp (A'\chop B')
 \cr
{FChopStarEqv}{F,9,}&\theoremBEN
     A\SChopstar \;\equiv\; 
       \Empty \;\Or\; (A\And \More)\chop A\SChopstar \cr
{FNextImpWeakNext}{F,10,}&\theoremBEN \Next A \implies \WeakNext A \cr
{FBoxInduct}{F,11,}&\theoremBEN A \And \Box(A \imp \WeakNext A) \implies \Box A
\cr
\noalign{\vspace{2pt}}
\multicolumn{2}{l}{\strut\vphantom{\vrule width 0pt height 1.2em}\emph{Inference Rules:}} \\
{FMP}{MP,,}&\theoremBEN A\imp B, \quad \theoremBEN A \infer \theoremBEN B \cr
{FBfGen}{$\Bf$Gen,,}&\thm A \infer \thm \Bf A \\
{FBoxGen}{$\Box$Gen,,}&\theoremBEN A \infer \theoremBEN \Box A \\
\noalign{\vspace{2pt}}
\multicolumn{2}{l}{Note:
  $\WeakNext A \defeqv \Not\Next\Not A$\quad (Weak next)}
\cr}}}
\egroup
\caption{Axiom system for PITL with just finite time}
\label{pitl-axiom-system-for-finite-time-table}
\end{center}
\end{table*}




Below are some sample valid $\PITL$ formulas:
\bgroup
  \let\valid=\relax
  \let\vld=\relax
\begin{center}
  \begin{tabular}{L}
    (\Finite \And \Bf A) \imp A
      \qquad
      \Skip\SChopstar
      \qquad
      A\SChopstarSChopstar \equiv A\SChopstar
      \qquad (w \And A)\chop B \EQUIV w \And (A\chop B) \\[2pt]
    \Bf(A \And B) \equiv (\Bf A \And \Bf B)
      \qquad (\Box\Bf A) \equiv (\Bf\Box A)
      \qquad  (\Bf\Bf A) \equiv \Bf A \\[2pt]
     \Df A \And \Df B \EQUIV \Df(\Df A \And \Df B)
       \qquad \Bf\bigl((\Fin p)\equiv A) \,\implies\, (\Bf A) \equiv (\Box p)
\dotspace.
  \end{tabular}
\end{center}
\egroup

Let \textbf{$\PTL$} be the subset of $\PITL$ with just $\Skip$ and the
(derived) temporal operators $\Next$ and $\Diamond$ shown in
Table~\ref{pitl-derived-operators-table}. We use $X$ and $X'$ for $\PTL$
formulas.

Although we do not need \emphbf{existential quantification} in our proof, it
is convenient to define here since it helps the exposition concerning
automata-based ways to represent $\PITL$ formulas
in~\S\ref{deterministic-finite-state-automata-subsec},
\S\ref{formal-equivalence-of-the-two-representations-of-runs-subsec}
and~\S\ref{feasibility-of-reduction-from-pitl-to-ptl-subsec} and also assists
us when we compare our approach with related proofs for logics with
quantification in
Section~\ref{existing-completeness-proofs-for-omega-regular-logics-sec}.  The
syntax is $\Exists{p}A$ for any propositional variable $p$ and formula $A$.
We let $\sigma\vld \Exists{p}A$ be true iff $\sigma'\vld A$ is true for some
interval $\sigma'$ identical to $\sigma$ except possibly for $p$'s behaviour.
Existential quantification together with $\PITL$ yields \textbf{$\QPITL$} and
together with $\PTL$ yields \textbf{$\QPTL$}.

\subsection{PITL Axiom System}

Table~\ref{axiom-system-for-pitl-with-finite-and-infinite-time-table} shows
the $\PITL$ axiom system with finite and infinite time.  Axiom~\refid{VPTL}
permits $\PITL$ substitution instances of valid $\PTL$ formulas with $\Skip$,
$\Next$ and $\Diamond$. For instance, from the valid $\PTL$ formula $\Next p
\imp \Diamond p$ follows $\thm \Next A \imp \Diamond A$, for any $\PITL$
formula $A$.  Axiom~\refid{ChopOmegaInduct} gives an inductive way to
introduce chop-omega.
\begin{mycomment}
  In version of the axiom system for $\PITL$ with infinite time but without
  chop-omega, the righthand operand of the equivalence in
  Axiom~\textbf{P10-old} can be shortened to $\DD{A\SChopstar}(A \And\Inf)$ and
  furthermore, Axiom~\textbf{P10} can be totally omitted.  All chop-stars
  would only be allowed when located somewhere within the lefthand side of
  chops.
\end{mycomment}
Our new Inference Rule~\refid{BfAux} permits auxiliary variables to capture
behaviour in finite-length prefix intervals and is only needed for infinite
time.

The axiom system in
Table~\ref{axiom-system-for-pitl-with-finite-and-infinite-time-table} for both
finite and infinite time is adapted from our earlier one~\cite{Moszkowski04a}
for just finite time (see
Table~\ref{pitl-axiom-system-for-finite-time-table}), itself based on a
previous one we originally presented in~\cite{Moszkowski94}.  That axiom
system contains some axioms of Rosner and Pnueli~\citeyear{RosnerPnueli86} for
$\PITL$ without chop-star and our own axioms and inference rule for the
operators $\Bi$ (defined using weak chop in
Table~\ref{pitl-derived-operators-table}) and chop-star.  The new $\PITL$
axiom system in
Table~\ref{axiom-system-for-pitl-with-finite-and-infinite-time-table} adapts
the axioms for $\Bi$ to use $\Bf$ instead to shorten the completeness proof
since $\Bf$ works better with the strong chop operator $\chop$.

For consistency with our usage here, the version of the earlier axiom system
for just finite time given in
Table~\ref{pitl-axiom-system-for-finite-time-table} uses strong chop $\chop$
instead of weak chop ``;'' and likewise uses $\Bf$ instead of $\Bi$. It
therefore very slightly differs from the original one in~\cite{Moszkowski04a}
in an inessential way since for finite time the two pairs of operators are
indistinguishable.  In~\cite{Moszkowski04a} we prove completeness by
reduction to $\PTL$.

Appendix~\ref{some-pitl-theorems-and-their-proofs-sec} contains a large
variety of representative $\PITL$ theorems, derived rules and their proofs.
Many are used directly or indirectly in our completeness proof.

Note that Inference Rule~\refid{BfFGen} in
Table~\ref{axiom-system-for-pitl-with-finite-and-infinite-time-table} for
$\Bf$ mentions $\Finite$ in it, whereas the analogous Inference
Rule~\refid{BoxGen} for $\Box$ does not.  A version of~\refid{BfFGen} without
$\Finite$ and called~\textbf{$\Bf$Gen} can be deduced (see the derived
inference rule~\refid{BfGen} in
Appendix~\ref{some-pitl-theorems-and-their-proofs-sec}).  If just finite time
is permitted, the two variants~\refid{BfFGen} and~\textbf{$\Bf$Gen} for $\Bf$
are in practice identical since $\Finite$ is valid and hence deducible by
Axiom~\refid{VPTL}.  In fact, our earlier axiom system for $\PITL$ with just
finite time in Table~\ref{pitl-axiom-system-for-finite-time-table} uses the
version without $\Finite$.

\begin{mycomment}
  The Inference Rule~\refid{BfAux} in
  Table~\ref{axiom-system-for-pitl-with-finite-and-infinite-time-table} is
  alternatively derivable from an inference rule $\thm A \Infer\unskip\; \thm
  A_{\Fin p}^B$ employing restricted \emph{interval-oriented} substitution of
  the formula $B$ into all instances in $A$ of $\Fin p$ which only occur in
  the left sides of chops but not totally outside chops, in their right sides
  or in chop-stars.  For this alternative inference rule, no other instances
  of $p$ can occur in $A$.  Note that here $\Fin p$ acts exactly like a
  variable which is replaced by $B$ in $A$ so $A_{\Fin p}^B$ requires a slight
  abuse of conventional notation for substitutions.
\end{mycomment}

\subsection{Theoremhood, Soundness and Axiomatic Completeness}

\label{theoremhood-soundness-and-axiomatic-completeness-subsec}

A formula $A$ deducible from the axiom system is a \emphbf{theorem}, denoted
$\thm A$. Additionally, a formula $A$ is \emphbf{consistent} if $\Not A$ is
\emph{not} a theorem, i.e., $\not\thm \Not A$.  We claim the axiom system is
\emphbf{sound}, that is, $\thm A$ \emph{implies} $\vld A$.  A logic is
\emphbf{complete} if each valid formula is deducible as a theorem in the
logic's axiom system.  In other words, if $\vld A$, then $\thm A$.  Our goal
is to show \emphbf{completeness} for $\PITL$.  However, we actually prove a
stronger result which requires some further definitions and we therefore defer
the formal statement until
Theorem~\ref{completeness-of-pitl-axiom-system-thm} in
Section~\ref{right-instances-right-variables-and-right-theorems-sec}.  We also
make use of the following variant way of expressing axiomatic completeness:
\begin{mylemma}[Alternative notion of completeness]
  \label{alternative-completeness-lem}
  A logic's axiom system is complete iff each consistent formula is
  satisfiable.
\end{mylemma}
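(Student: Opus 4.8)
The plan is to establish the biconditional by proving each direction separately, leaning on two complementary dualities. The \emph{syntactic} duality is just the definition: a formula $A$ is consistent exactly when $\not\thm\Not A$. The \emph{semantic} duality comes from the negation clause $\sigma\vld\Not A$ iff $\sigma\not\vld A$, from which one reads off that $A$ is satisfiable iff $\not\vld\Not A$ (some interval satisfies $A$ iff some interval fails $\Not A$ iff $\Not A$ is not valid). These two observations are what let us trade statements about theoremhood and validity for statements about consistency and satisfiability.

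For the forward direction, I would suppose the axiom system is complete, i.e.\ $\vld B$ implies $\thm B$ for every $B$, equivalently (by contraposition) $\not\thm B$ implies $\not\vld B$. Let $A$ be consistent, so $\not\thm\Not A$. Instantiating the contrapositive of completeness at $B=\Not A$ gives $\not\vld\Not A$, which by the semantic duality says precisely that $A$ is satisfiable. Hence every consistent formula is satisfiable.

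For the backward direction, I would suppose every consistent formula is satisfiable, and let $\vld A$; we must derive $\thm A$. First, $\Not A$ is unsatisfiable, for if some $\sigma\vld\Not A$ then $\sigma\not\vld A$, contradicting $\vld A$. By the contrapositive of the hypothesis applied to $\Not A$, an unsatisfiable formula cannot be consistent, so $\Not A$ is inconsistent, that is, $\thm\Not\Not A$. Since $\Not\Not A\imp A$ is a propositional tautology and hence a theorem via Axiom~\refid{VPTL}, Inference Rule~\refid{MP} yields $\thm A$, completing the argument.

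The proof is essentially formal bookkeeping, so there is no substantial obstacle; the two points that need care are keeping the negation in the definition of consistency aligned with the negation appearing in the semantic duality, and the double-negation step in the backward direction, which forces us to route $\thm\Not\Not A$ through a tautology instance (Axiom~\refid{VPTL}) and Inference Rule~\refid{MP} rather than concluding $\thm A$ outright. It is worth noting that \emph{soundness} of the axiom system is nowhere invoked in this equivalence.
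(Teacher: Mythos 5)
Your proof is correct. The paper states this lemma without proof, treating it as a standard fact, and your argument is exactly the routine one that is intended: both directions are contrapositions linked by the syntactic duality $\thm\Not A$ versus consistency and the semantic duality $\vld A$ versus unsatisfiability of $\Not A$, with the double-negation elimination in the backward direction handled by a tautology instance of Axiom~\refid{VPTL} and Inference Rule~\refid{MP}; your remark that soundness is not needed is also accurate.
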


We often use the next
Theorem~\ref{completeness-of-pitl-axiom-system-for-finite-time-thm} about
finite time:
\begin{mytheorem}[Completeness of $\PITL$ Axiom System for Finite Time]
  \label{completeness-of-pitl-axiom-system-for-finite-time-thm}
  Any valid $\PITL$ implication $\Finite \imp A$ is deducible as a $\PITL$
  theorem $\thm\Finite \imp A$ using the axiom system for $\PITL$ with both
  finite and infinite time in
  Table~\ref{axiom-system-for-pitl-with-finite-and-infinite-time-table}.
\end{mytheorem}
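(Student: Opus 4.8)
The plan is to reduce the statement to the already-established completeness of the earlier axiom system for \emph{finite} time (Table~\ref{pitl-axiom-system-for-finite-time-table}, from~\cite{Moszkowski04a}) and then to transfer any finite-time derivation into the full system of Table~\ref{axiom-system-for-pitl-with-finite-and-infinite-time-table}. The first observation is purely semantic: since every infinite interval falsifies $\Finite$ and hence satisfies $\Finite\imp A$ vacuously, we have $\vld(\Finite\imp A)$ iff $A$ holds on every \emph{finite} interval, i.e.\ iff $A$ is valid in finite-time $\PITL$. By the finite-time completeness result of~\cite{Moszkowski04a}, $A$ is therefore derivable in the finite-time axiom system of Table~\ref{pitl-axiom-system-for-finite-time-table}.

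It then suffices to prove a \emph{relativization lemma}: whenever $B$ is a theorem of the finite-time system, $\thm\Finite\imp B$ is a theorem of the full system of Table~\ref{axiom-system-for-pitl-with-finite-and-infinite-time-table}. I would prove this by induction on the length of the finite-time derivation, replacing each line $B$ by its relativization $\Finite\imp B$ and checking that the relativization of every finite-time axiom is a full-system theorem and that every finite-time inference rule remains admissible after relativization. Applying the lemma to the derivation of $A$ obtained above yields $\thm\Finite\imp A$, as required.

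Most of the inductive cases are immediate. Every finite-time axiom other than $A\chop\Empty\equiv A$ is either present verbatim in Table~\ref{axiom-system-for-pitl-with-finite-and-infinite-time-table} (for instance \refid{FChopAssoc} is \refid{ChopAssoc}, and similarly for \refid{FOrChopImp}, \refid{FChopOrImp}, \refid{FEmptyChop}, \refid{FStateImpBf}, \refid{FBiAndBoxImpChopImpChop} and \refid{FChopStarEqv}) or is a valid $\PTL$ formula and hence a theorem via \refid{VPTL} (this covers the tautologies \refid{Taut}, the step \refid{FNextImpWeakNext}, and the induction axiom \refid{FBoxInduct}); in each case $\Finite\imp B$ follows from $B$ by propositional reasoning (\refid{VPTL} together with \refid{MP}). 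The one exceptional axiom, $A\chop\Empty\equiv A$, fails over infinite time, but its relativization $\Finite\imp(A\chop\Empty\equiv A)$ is \emph{exactly} axiom \refid{FiniteImpChopEmpty}, which is precisely why that axiom was built into the full system. The rule \refid{FMP} relativizes propositionally, and \refid{FBfGen} matches \refid{BfFGen} after relativization: the induction hypothesis supplies $\thm\Finite\imp A$, which is exactly the premise of \refid{BfFGen}, so we obtain $\thm\Bf A$ (stronger than the needed $\thm\Finite\imp\Bf A$).

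The main obstacle is the generalization rule \refid{FBoxGen}, where $\Box$ ranges over \emph{all} suffix subintervals, including possibly infinite ones, so the relativization cannot simply be pushed inside. Here the induction hypothesis gives $\thm\Finite\imp A$; applying \refid{BoxGen} yields $\thm\Box(\Finite\imp A)$, and I would then combine this with the two $\PTL$ theorems $\thm\Finite\imp\Box\Finite$ and the distribution law $\thm\Box(\Finite\imp A)\Andd\Box\Finite\imp\Box A$ (both obtainable from \refid{VPTL}) to conclude $\thm\Finite\imp\Box A$ by propositional reasoning. The key point making this work is that on a finite interval every suffix subinterval is again finite, so $\Finite$ propagates to $\Box\Finite$; this is exactly where the restriction to finite time in the hypothesis is genuinely used.
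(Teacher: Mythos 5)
Your proposal is correct and follows essentially the same route as the paper, which likewise argues by relativising the finite-time axiom system of Table~\ref{pitl-axiom-system-for-finite-time-table} and its derivations into the full system of Table~\ref{axiom-system-for-pitl-with-finite-and-infinite-time-table}. You merely spell out the details the paper leaves implicit (notably that $A\chop\Empty\equiv A$ relativises to Axiom~\refid{FiniteImpChopEmpty} and that \refid{FBoxGen} is handled via $\thm\Finite\imp\Box\Finite$ and $\Box$-distribution), and these checks are all sound.
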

\begin{proof}
  This readily follows by deducing the axioms and inference rules of our
  earlier complete axiom system for $\PITL$ with just finite
  time~\cite{Moszkowski04a} in
  Table~\ref{pitl-axiom-system-for-finite-time-table}.  The axiom system and
  proofs of theorems are easily \emph{relativised} to make finite time
  explicit and deduced with the new axiom system for both finite and infinite
  time already presented in
  Table~\ref{axiom-system-for-pitl-with-finite-and-infinite-time-table}.  The
  relativisation can use the fact that the two axiom systems are quite
  similar.
\end{proof}
\emph{One can alternatively disregard
  Theorem~\ref{completeness-of-pitl-axiom-system-for-finite-time-thm} and
  instead treat our presentation as a \emph{self-contained} proof reducing
  completeness for $\PITL$ with both finite and infinite time to that for
  $\PITL$ with just finite time.}

\begin{mycomment}
\subsection{Why Study Axiomatic Completeness?}

At this stage, it seems worthwhile to enumerate reasons for considering
axiomatic completeness of $\PITL$:
\begin{iteMize}{$\bullet$}
\item It is a basic desirable property of any logic (e.g., see logic textbooks).
\item It provides formal way to explore the logic's analytical power.
\item It is a nontrivial, rigorous and systematic application of the logic.
\item Our work suggests \emph{unexpected benefits} of using $\PITL$ to obtain
  results for other logics.
\item This approach readily generalises to finite domains.
\item Abstracted versions of first-order properties can be investigated.
\item Theorem provers can embed complete axiom systems.
\end{iteMize}
\end{mycomment}

\subsection{Summary of the Completeness Proof}

Our proof of axiomatic completeness for $\PITL$ establishes that any
consistent $\PITL$ formula is satisfiable (see the earlier
Lemma~\ref{alternative-completeness-lem}).  The completeness proof makes use
of a $\PITL$ subset called $\PTLU$ (defined later in
\S\ref{ptl-with-until-subsec}) which is a version of $\PTL$ having an $\Until$
operator.  As we discuss in \S\ref{ptl-with-until-subsec}, axiomatic
completeness for $\PTLU$ readily follows from axiomatic completeness for basic
$\PTL$ so any consistent $\PTLU$ formula is satisfiable.

The $\PITL$ completeness proof can be roughly summarised as ensuring that for
any consistent $\PITL$ formula $A$, there exists a consistent $\PTLU$ formula
$Y_0$, which possibly contains auxiliary propositional variables, such that
the $\PITL$ implication $Y_0\imp A$ is deducible.  Completeness for $\PTLU$
guarantees that $Y_0$ is satisfiable.  The soundness of the $\PITL$ axiom
system then ensures that any model of $Y_0$ also satisfies $A$ thereby showing
axiomatic completeness for $\PITL$. Note that in the actual proof, we use make
use of a $\PTLU$ conjunction $Y\And X$ in place of $Y_0$.

In the course of the $\PITL$ completeness proof, we also employ another
$\PITL$ subset called $\PITLK$ (defined later in
\S\ref{pitl-without-omega-iteration-subsec}). It is a version of $\PITL$
without omega-iteration and serves as a kind of bridge between full $\PITL$
and $\PTLU$.  The $\PITL$ completeness proof first obtains from the $\PITL$
formula $A$ a $\PITLK$ formula $K$ such that we can deduce $A\equiv K$.  We
then show how to obtain the $\PTLU$ formula $Y_0$ such that the implication
$Y_0\imp K$ is deducible.  We further show that if $A$ is consistent, so are
$K$ and $Y_0$.  Axiomatic completeness for $\PTLU$ ensures that the consistent
$\PTLU$ formula $Y_0$ is satisfiable.  The implication $Y_0\imp K$ together
with the deduced equivalence $A\equiv K$ guarantees the deducibility of the
previously mentioned $\PITL$ implication $Y_0\imp A$. Hence, any model of
$Y_0$ also satisfies $A$, thereby establishing completeness for $\PITL$ since
every consistent $\PITL$ formula is indeed satisfiable.

Here is a very brief summary of the main reductions:
\begin{displaymath}
  \PITL
  \quad
  \xrightarrow{\text{Section~\ref{reduction-of-chop-omega-sec}}}\quad
  \PITLK
  \quad
  \xrightarrow{\text{Section~\ref{reduction-of-pitl-to-ptlu-sec}}}\quad 
  \PTLU
\dotspace.
\end{displaymath}
Only the reduction from $\PITLK$ to $\PTLU$ requires some explicit
automata-theoretic constructions which involve finite words and are expressed
in temporal logic.

Below is the structure of our reduction from $\PITL$ to $\PTLU$:
\begin{iteMize}{$\bullet$}
\item In Section~\ref{right-instances-right-variables-and-right-theorems-sec}
  we describe a class of $\PITL$ theorems with useful substitution instances.
\item In Section~\ref{some-lemmas-for-replacement-sec} we present lemmas for
  systematically replacing some of a formula's subformulas by others in
  proofs.
\item In Section~\ref{useful-subsets-of-pitl-sec} we formally introduce the
  very simple $\PTL$ subset $\NLone$ as well as the subsets $\PTLU$ and
  $\PITLK$.  Although $\PITLK$ lacks chop-omega, it still has the same
  expressiveness as $\PITL$.  We also describe three other classes of formulas
  called \emph{right-chops}, \emph{chain-formulas} and \emph{auxiliary
    temporal assignments}.
\item In Section~\ref{reduction-of-chop-omega-sec} we show that any $\PITL$
  formula is deducibly equivalent to one in $\PITLK$.
\item In Section~\ref{deterministic-semi-automata-and-automata-sec} we
  show how to represent semi-automata and automata in $\PITL$.
\item Section~\ref{compound-semi-automata-for-suffix-recognition-sec} utilises
  the material in the previous section to test for a given $\PITL$ formula in
  suffixes of a finite interval.
  Sections~\ref{deterministic-semi-automata-and-automata-sec}
  and~\ref{compound-semi-automata-for-suffix-recognition-sec} provide a basis
  for introducing suitable auxiliary variables via auxiliary temporal
  assignments.
\item In Section~\ref{reduction-of-pitl-to-ptlu-sec} we use the constructed
  auxiliary variables to reduce an arbitrary consistent $\PITLK$ formula $K$
  to one in $\PTLU$.  Axiomatic completeness for $\PITL$ with infinite time
  then readily follows from this.
\end{iteMize}


\noindent A large portion of the reasoning is done at the semantic level (for example,
all of Section~\ref{compound-semi-automata-for-suffix-recognition-sec}).  We
then employ axiomatic completeness for restricted versions of $\PITL$ (such as
$\PITL$ with finite time) to immediately deduce the theoremhood of key
properties expressible as valid formulas in these versions.  This
significantly shortens the completeness proof by reducing the amount of
explicit deductions.

\section{Right-Instances, Right-Variables and Right-\!Theorems}

\label{right-instances-right-variables-and-right-theorems-sec}

Before proceeding further, we need to introduce a class of $\PITL$ theorems
for which suitable substitution instances are themselves deducible as
theorems.  Now in the completeness proof for $\PITL$ later on, if a deducible
$\PITL$ formula has propositional variables not occurring in the left of chops
or in chop-stars (e.g., $p$ in the formula $p\imp \Diamond p$), then in each
step of the formula's deduction these particular variables likewise do not
occur in the left of chops or chop-stars.  We define more generally for any
$\PITL$ formula $A$ and subformula $B$ in $A$, a \emphbf{right-instance} of
$B$ in $A$ to be an instance of $B$ which does not occur within the left of a
chop or within some chop-star.  Consider for example the disjunction below:
\begin{equation}
  \label{right-instance-example-1-eq}
  (p \chop \Not q) \;\Or\; (p \chop p') \;\Or\; (p \chop p')\SChopstar
\dotspace.
\end{equation}
The subformulas $\Not q$, and $(p \chop \Not q)$ as well as the leftmost
occurrence of $p \chop p'$ are right-instances in the overall
formula~\eqref{right-instance-example-1-eq}.  However, all three occurrences
of $p$ and the rightmost occurrences of $p'$ and $p \chop p'$ are not
right-instances in~\eqref{right-instance-example-1-eq} because each is either
in the left of a chop or in a chop-star.

Now let a $\PITL$ formula $A$'s
\emphbf{right-variables} be the (finite) set $\mathit{RV}(A)$ of $A$'s
variables which have only right-instances in $A$, that is, do not occur in the
left of chops or chop-stars.

We now look at why the concept of right-variable is needed.  In the formula $p
\imp \Diamond p$, the variable $p$ is a right-variable.  Therefore, from the
validity of $p \imp \Diamond p$, we can infer the validity of the substitution
instance $\Skip \imp \Diamond\Skip$.
Lemma~\ref{substitution-instances-into-right-variables-lem}, which is shortly
presented, formalises this idea.  However, if a variable is not a
right-variable in a valid formula, we might incorrectly infer that a
substitution instance of the formula is also valid.  For instance, the
variable $p$ is not a right-variable in the formula $p \imp \Bf p$ which is an
instance of Axiom~\refid{StateImpBf} in
Table~\ref{axiom-system-for-pitl-with-finite-and-infinite-time-table}.  This
formula is valid but the substitution instance $\Skip \imp \Bf \Skip$ is not.

Now all propositional variables in a propositional formula with no temporal
operators are right-variables of that formula.  More generally, all
propositional variables in a $\PTL$ formula are right-variables.  In contrast, a
chop-star formula has no right-variables.

The next simple lemma concerns substitution into right-variables in valid
formulas:
\begin{mylemma}[Substitution Instances into Right-Variables]
  \label{substitution-instances-into-right-variables-lem}
  Suppose $A$ is a $\PITL$ formula, $p$ is one of $A$'s right-variables (i.e.,
  in $\mathit{RV}(A)$) and $B$ is some $\PITL$ formula.  Then if $A$ is valid,
  so is the substitution instance $A_p^B$.
\end{mylemma}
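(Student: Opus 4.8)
The plan is to prove the lemma semantically, by building from an arbitrary interval $\sigma$ a single companion interval $\sigma'$ that differs from $\sigma$ only in the behaviour of $p$, each of whose states records the truth of $B$ on the corresponding suffix. Concretely, I would let $\sigma'$ have the same length as $\sigma$, set $\sigma'_i(q)=\sigma_i(q)$ for every variable $q\ne p$ and every $i$ with $0\le i\le\intlen{\sigma}$, and set $\sigma'_i(p)=\True$ exactly when $\sigma_{i\uparrow}\vld B$. This is well defined for finite and infinite $\sigma$ alike, since it refers only to suffixes of $\sigma$ evaluated under the original valuation. The whole argument then reduces to the substitution identity $\sigma\vld A_p^B$ iff $\sigma'\vld A$; granting this, validity of $A$ forces $\sigma'\vld A$, hence $\sigma\vld A_p^B$, and as $\sigma$ is arbitrary, $A_p^B$ is valid.

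The core is a strengthened statement proved by induction on formula structure: for every subformula occurrence $C$ of $A$ that lies in a \emph{right position} (reachable from the root through only negations, either side of a disjunction, and the right operand of a chop) and every $i$ with $0\le i\le\intlen{\sigma}$,
\[
   \sigma_{i\uparrow}\vld C_p^B \iff \sigma'_{i\uparrow}\vld C .
\]
The atomic and boolean cases are immediate: for $C=q$ with $q\ne p$ the two intervals agree on $q$; for $C=p$ we have $C_p^B=B$ and $\sigma'_{i\uparrow}\vld p$ holds exactly when $\sigma'_i(p)=\True$, i.e.\ when $\sigma_{i\uparrow}\vld B$, by construction of $\sigma'$; for $C=\True$ and $C=\Skip$ both intervals have the same length; and $\Not$ and $\Or$ follow directly from the induction hypothesis. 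The decisive case is $C=D\chop E$: here $D$ sits in the left of a chop, so by the right-variable hypothesis $p$ does not occur in $D$ and thus $D_p^B=D$, while $E$ is again in a right position. Splitting $\sigma_{i\uparrow}$ at an arbitrary point $k$, the left part $(\sigma_{i\uparrow})_{0:k}$ satisfies $D$ under $\sigma$ iff it does under $\sigma'$ (by a routine coincidence lemma, since $D$ omits $p$ and the two intervals agree off $p$), while the right part is the suffix $\sigma_{(i+k)\uparrow}$, to which the induction hypothesis for $E$ applies; matching the existential over $k$ yields the equivalence.

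It remains to treat $C=D\SChopstar$: since $p$ may not occur inside a chop-star, the whole occurrence $C$ contains no $p$, so $C_p^B=C$ and the same coincidence lemma gives $\sigma_{i\uparrow}\vld C$ iff $\sigma'_{i\uparrow}\vld C$ directly. The coincidence lemma itself---that a formula's truth on an interval is unaffected by variables it does not mention---is a standard induction on formulas, which I would state as a preliminary. Applying the strengthened statement to $C=A$ (which is in a right position as the root) with $i=0$ gives the desired identity $\sigma\vld A_p^B$ iff $\sigma'\vld A$, completing the proof.

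I expect the main obstacle to be conceptual rather than computational: pinning down that right positions are precisely those evaluated on \emph{suffixes} $\sigma_{i\uparrow}$ of the original interval, which is what makes the single companion interval $\sigma'$---with $p$ reset state-by-state to the truth of $B$ on the suffix rooted there---simultaneously correct at every occurrence of $p$. Were $p$ allowed in the left of a chop, that operand would be evaluated on prefix subintervals rather than on suffixes, and no state-by-state redefinition of $p$ could encode ``$B$ holds on the prefix ending here''---which is exactly why the hypothesis $p\in\mathit{RV}(A)$ is indispensable (and why $\Skip\imp\Bf\Skip$ fails).
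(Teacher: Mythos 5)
Your proof is correct and uses essentially the same idea as the paper's: the companion interval $\sigma'$ in which $p$ is set at each state to the truth of $B$ on the corresponding suffix, with the observation that right-instances are evaluated only on suffixes. The paper argues contrapositively and first renames $p$ so it can assume $p$ does not occur in $B$ (needed there because it transfers $\Not(A_p^B)$ from $\sigma$ to $\sigma'$), whereas your direct induction keeps $A_p^B$ evaluated on the original $\sigma$ and so sidesteps that step; otherwise the arguments coincide, with yours merely spelling out the induction the paper leaves as ``readily follows''.
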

\begin{proof}[Proof by contradiction.]
  Let $q$ be a variable not occurring in $A$ or $B$ and let $C$ be a variant
  of $A$ with all instances of $p$ replaced by $q$ (i.e., $A_p^q$).  The
  variable $p$ is a right-variable of $A$ so $q$ is similarly a right-variable
  of $C$.  It follows by induction on $A$'s syntax that $A_p^B$ and $C_q^B$
  denote exactly the same $\PITL$ formula.  Consequently, in our reasoning
  about $A_p^B$, we can assume without loss of generality that $p$ itself does
  not occur in $B$.  This is because we can view $A_p^B$ as being $C_q^B$.

  Now suppose by contradiction that $A_p^B$ is not valid.  By our previous
  discussion, also assume that $p$ does not occur in $B$.  Then some
  interval $\sigma$ satisfies $\Not(A_p^B)$. We construct a variant $\sigma'$
  in which the value of variable $p$ in each state $\sigma'_i$ equals true iff
  the suffix subinterval $\sigma_{i\uparrow}$ satisfies $B$. Hence
  $\sigma'\vld\Box(p\equiv B)$ and $\sigma'\vld \Not(A_p^B)$.  It readily
  follows from this and $p$ being a right-variable that $\sigma'$ satisfies
  $\Not A$ since $A_p^B$ only examines $B$ in suffix subintervals.  From
  $\sigma'\vld \Not A$ we have that $A$ is not valid.
\end{proof}

Later in Section~\ref{reduction-of-chop-omega-sec}, our completeness proof
will need a deductive analogue of the semantically oriented
Lemma~\ref{substitution-instances-into-right-variables-lem} to permit us to
infer from a theorem $A$ and right-variable $p$ in $\mathit{RV}(A)$ another
theorem $A_p^B$.  One way to achieve this is by adding the next inference rule
to the $\PITL$ axiom system in
Table~\ref{axiom-system-for-pitl-with-finite-and-infinite-time-table} for any
formula $A$ and variable $p$ in $\mathit{RV(A)}$:
\begin{equation}
  \label{optional-inference-rule-for-right-variables-1-eq}
  \theoremBEN A \infer \theoremBEN A_p^B
\dotspace.
\end{equation}
Another possibility is an analogue of Inference Rule~\refid{BfAux} in
Table~\ref{axiom-system-for-pitl-with-finite-and-infinite-time-table}:
\begin{equation*}
  \theoremBEN \Box(p\equiv B) \,\implies\, A \infer \theoremBEN A
\dotspace,
\end{equation*}
where the propositional variable $p$ does not occur in $A$ or $B$.
However, it turns out that these are unnecessary since the axiom system in its
current form is already sufficient to allow a suitable class of such
substitutions.  We now present a formal basis for this.

A $\PITL$ formula $A$ which is theorem (i.e., $\thm A$) is called a
\emphbf{right-theorem} (denoted $\thmR A$) if there exists a deduction of $A$
in which $A$'s right-variables never occur on the left of chop or in chop-star
in any proof steps.  However, any of $A$'s variables not in $\mathit{RV}(A)$
as well as any subsequently introduced auxiliary variables in the deductions
are permitted to appear in some deduction steps in the left of chops or
chop-stars.  For example, if $p$ is a right-variable of $A$, then no proof
step can use $p$ with Axiom~\refid{StateImpBf} (e.g., $\thm p\imp \Bf p$)
since $p$ is not a right-variable here owing to $\Bf p$.

The completeness proof for $\PITL$ will ensure that any valid $\PITL$ formula
$A$ is indeed deducible as a right-theorem.  We will refer to this here as
\textbf{right-completeness}.  Below is our main theorem for axiomatic
completeness of $\PITL$ using right-completeness:
\begin{mytheorem}[Right-Completeness of $\PITL$ Axiom System]
  \label{completeness-of-pitl-axiom-system-thm}
  Any valid $\PITL$ formula $A$ is a right-theorem of the axiom system, that
  is, if $\vld A$, then $\thmR A$.
\end{mytheorem}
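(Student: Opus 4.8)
The plan is to prove the stronger \emph{right-completeness} property by running the reduction $\PITL\to\PITLK\to\PTLU$ outlined in the summary and tracking, at every step, the relevant right-variables. By Lemma~\ref{alternative-completeness-lem} it suffices to show that every consistent $\PITL$ formula is satisfiable, since this is equivalent to completeness. To recover the theorem from that reformulation, suppose $\vld A$; then $\Not A$ is unsatisfiable, hence not consistent, so $\thm\Not\Not A$ and therefore $\thm A$ by propositional reasoning through Axiom~\refid{VPTL}. Because a top-level negation leaves every variable in the same chop/chop-star position, $\mathit{RV}(\Not A)=\mathit{RV}(A)$, so performing the reduction on $\Not A$ while controlling $\mathit{RV}(A)$ is exactly what is needed; the extra obligation of the present statement is that the whole argument can then be arranged as a right-deduction, i.e.\ so that no member of $\mathit{RV}(A)$ is ever placed on the left of a chop or inside a chop-star.

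To establish that an arbitrary consistent formula $C$ is satisfiable, I would first use Section~\ref{reduction-of-chop-omega-sec} to reduce $C$ to a $\PITLK$ formula $K$ with a deducible equivalence $C\equiv K$, so that $K$ is also consistent; the replacement lemmas of Section~\ref{some-lemmas-for-replacement-sec} and the right-instance theorems developed in this section are what allow this equivalence to be derived without disturbing the right-variables of $C$. Since $\PITLK$ retains the full expressiveness of $\PITL$ but omits chop-omega, it is the natural stage on which to run the automata constructions.

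I would then reduce $K$ to $\PTLU$ using the semi-automata and automata encodings of Sections~\ref{deterministic-semi-automata-and-automata-sec}--\ref{compound-semi-automata-for-suffix-recognition-sec} together with the auxiliary temporal assignments of Section~\ref{reduction-of-pitl-to-ptlu-sec}. This introduces \emph{fresh} propositional variables; being new they fall outside $\mathit{RV}(C)$ and may safely occupy the left operands of chops and the bodies of chop-stars. The result is a consistent $\PTLU$ formula $Y_0$ with a deducible $Y_0\imp K$, hence $Y_0\imp C$. Axiomatic completeness for $\PTLU$ --- which follows from completeness for basic $\PTL$ (see \S\ref{ptl-with-until-subsec}) --- makes $Y_0$, and so $C$, satisfiable, completing the reformulated statement; wherever a needed side-property concerns only finite time, Theorem~\ref{completeness-of-pitl-axiom-system-for-finite-time-thm} converts its validity into theoremhood for free, which keeps the explicit deductions manageable.

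The main obstacle is exactly the bookkeeping that turns $\thm A$ into $\thmR A$: one must check, reduction by reduction and replacement by replacement, that the right-variables of the formula being reduced are never forced into a forbidden position, with the freshly introduced auxiliary variables absorbing those positions instead. This is what renders the optional inference rules near~\eqref{optional-inference-rule-for-right-variables-1-eq} unnecessary, and it is the constraint that the constructions of Sections~\ref{reduction-of-chop-omega-sec}--\ref{reduction-of-pitl-to-ptlu-sec} are built to satisfy. The remaining, more routine work consists of the explicit automata encodings and the equivalence and implication deductions carried out there.
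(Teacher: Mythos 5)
Your proposal follows essentially the same route as the paper: the theorem is proved in its dual form (every right-consistent formula is satisfiable) by the reduction chain $\PITL\to\PITLK\to\PTLU$, with Lemma~\ref{reduction-of-pitl-pitlk-lem} supplying $\thmR A\equiv K$, Lemma~\ref{completeness-for-pitlk-lem} handling the automata-based reduction of $\PITLK$ to $\PTLU$, and the right-variable bookkeeping (fresh auxiliary variables absorbing the left-of-chop positions) carried exactly as you describe. The only cosmetic difference is that the paper works with the conjunction $Y\And X$ rather than a single $Y_0$, and states the reduction as a conditional equivalence under $\Bf D'$ rather than a bare implication, but this does not change the argument.
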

The proof of this, our main result, is described later and concludes in
Section~\ref{reduction-of-pitl-to-ptlu-sec}.

Right-theoremhood naturally yields the dual notion of
\emphbf{right-consistency} of a $\PITL$ formula $A$, that is, not $\thmR \Not
A$.  Our completeness proof for $\PITL$ can therefore be regarded as not only
showing that valid $\PITL$ formulas are right-theorems but also that any
right-consistent $\PITL$ formula is satisfiable (compare with
Lemma~\ref{alternative-completeness-lem}).

As already pointed out, the main reason we are interested in right-theorems is
that suitable substitution instances of them are $\PITL$ theorems.  Our need
for this occurs when in Section~\ref{reduction-of-chop-omega-sec} we reduce
right-completeness for $\PITL$ to right-completeness for its subset $\PITLK$
without chop-omega.  The lemma below formalises the substitution process:
\begin{mylemma}[Substitution Instances of Right-Theorems]
  \label{substitution-instances-of-right-theorems-lem}
  Let $A$ and $B_1,\ldots, B_n$ be $PITL$ formulas and $p_1,\ldots, p_n$ be
  some of $A$'s right-variables.  If $A$ is a right-theorem, then so is the
  substitution instance $A_{p_1,\ldots,p_n}^{B_1,\ldots,B_n}$, that is, $\thmR
  A_{p_1,\ldots,p_n}^{B_1,\ldots,B_n}$.
\end{mylemma}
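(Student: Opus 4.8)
The plan is to take a deduction that witnesses $\thmR A$ and push the substitution through it one step at a time. So let $D = C_1,\ldots,C_m$ with $C_m = A$ be a deduction in which none of the right-variables $p_1,\ldots,p_n$ (all in $\mathit{RV}(A)$) ever occurs on the left of a chop or inside a chop-star, and let $\theta$ denote the simultaneous substitution sending each $p_i$ to $B_i$. Before substituting I would standardise $D$ apart and rename, throughout the deduction, every variable that does not occur in $A$ — in particular the auxiliary variables introduced by applications of Inference Rule~\refid{BfAux} — to genuinely fresh variables chosen disjoint from $V_{B_1}\cup\cdots\cup V_{B_n}$. This changes neither $A$ nor the right-deduction property, and it guarantees that the side condition of~\refid{BfAux} (its auxiliary variable must be absent from the conclusion and side formula) still holds after substitution. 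I then claim that $C_1\theta,\ldots,C_m\theta$ is again a right-deduction, now of $A\theta = A_{p_1,\ldots,p_n}^{B_1,\ldots,B_n}$.

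The first half of the claim is that $D\theta$ is a legal deduction, i.e.\ each $C_k\theta$ remains an axiom instance or follows by an inference rule. For the schematic axioms (\refid{ChopAssoc}, \refid{SChopStarEqv}, \refid{ChopOmegaInduct}, and the rest) this is immediate, since substituting formulas for propositional variables commutes with the schema structure; for \refid{VPTL} it holds because $\theta$ composed with the original $\PITL$-instantiation is again a $\PITL$-instantiation of the same valid $\PTL$ formula. The rules~\refid{MP}, \refid{BfFGen} and~\refid{BoxGen} commute with $\theta$ directly, since $\theta$ fixes $\imp$, $\Finite$ and $\Box$, and~\refid{BfAux} is preserved precisely because its auxiliary variable is now fresh and hence absent from both parts of its application after substitution. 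The one delicate axiom is~\refid{StateImpBf}, an instance $w \imp \Bf w$ with $w$ a state formula: a naive substitution of a temporal formula into $w$ would be fatal. But $\Bf w$ places $w$ on the left operand of a chop, so every variable of $w$ occurs on the left of a chop in that step; as the $p_i$ are forbidden from such positions throughout $D$, no $p_i$ occurs in $w$, whence $C_k\theta = C_k$ and the instance is untouched. This is exactly where right-theoremhood is indispensable.

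The second half is that $D\theta$ respects the right-variables of $A\theta$. Here I would use the observation that whether an occurrence lies on the left of a chop or inside a chop-star depends only on the chop and chop-star nodes along its root path, so a right-instance position planted at another right-instance position is again a right-instance. Fix $q \in \mathit{RV}(A\theta)$ and a step $C_k\theta$; each occurrence of $q$ in $C_k\theta$ is either (i) an occurrence already present in $C_k$ and surviving $\theta$, or (ii) an occurrence inside a copy of some $B_i$ planted at a former $p_i$-leaf. In case (i), $q \ne p_i$ for every $i$ and, by the renaming, $q$ is not an extra variable, so $q \in V_A$; its $A$-occurrences survive into $A\theta$ with unchanged context, and $q \in \mathit{RV}(A\theta)$ forces them to be good, so $q \in \mathit{RV}(A)$ and the right-deduction property of $D$ makes $q$ good in $C_k$, hence in $C_k\theta$. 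In case (ii), $q \in \mathit{RV}(A\theta)$ forces $q$ to occur only in right-instances within $B_i$ (otherwise a copy of $B_i$ sitting at the good $p_i$-position of $A$ would place $q$ badly), while $p_i \in \mathit{RV}(A)$ keeps the planting site good in $C_k$; by the grafting observation the occurrence is good in $C_k\theta$. Thus every right-variable of $A\theta$ stays good throughout $D\theta$, and $\thmR A_{p_1,\ldots,p_n}^{B_1,\ldots,B_n}$ follows.

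I expect the main obstacle to be the axiom-preservation step, and specifically the need to show that substitution never drops a temporal formula into a slot reserved for a state formula (the~\refid{StateImpBf} case) or violates a freshness side condition (the~\refid{BfAux} case); both are resolved only because the substituted variables are confined to right-instances throughout the deduction. The remaining work — tracking which variables of $A\theta$ are right-variables and checking the two kinds of occurrences — is routine given the grafting observation, provided the preliminary standardising-apart and renaming of auxiliary and extra variables is done with care.
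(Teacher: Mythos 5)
Your proposal is correct and follows essentially the same route as the paper's own (two-sentence) proof: rename the auxiliary variables of the deduction apart from $V_{B_1}\cup\cdots\cup V_{B_n}$ and then push the substitution through every proof step. You supply the details the paper leaves implicit, in particular the key observation that right-theoremhood protects the $w\imp\Bf w$ instances (which the paper itself notes separately in Section~\ref{right-instances-right-variables-and-right-theorems-sec}) and the check that the result is again a \emph{right}-deduction.
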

\begin{proof}
  We assume that auxiliary variables in $A$'s proof (i.e., ones not in $V_A$)
  do not occur in $B_1,\ldots, B_n$.  In each step of $A$'s proof, we replace
  each $p_i$ by $B_i$ to obtain $\thmR A_{p_1,\ldots,p_n}^{B_1,\ldots,B_n}$.
\end{proof}


Many $\PITL$ theorems in
Appendix~\ref{some-pitl-theorems-and-their-proofs-sec} can be checked to be
right-theorems by inspection of the proof steps.  For example, those with no
right-variables are immediate right-theorems.  We have not indicated in the
appendix which theorems are right-theorems and will normally only designate
formulas as right-theorems in the completeness proof when this is needed.

The next lemma concerns the relationship between derived rules and
right-theorems:
\begin{mylemma}[Right-Theorems from Some Derived Rules]
  \label{right-theorems-from-some-derived-rules-lem}
  Suppose the assumptions of a derived rule which deduces some $\PITL$ formula
  $A$ are right-theorems.  Furthermore, suppose that in the derived rule's own
  proof of $A$, none of $A$'s right-variables occur on the left of chop or in
  chop-star (including in any nested deduced $\PITL$ theorems and derived
  rules).  If $A$'s right-variables are a subset of the union of the
  assumptions' right-variables, then $A$ itself is a right-theorem.
\end{mylemma}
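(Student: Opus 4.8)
The plan is to exhibit an explicit deduction of $A$ witnessing $\thmR A$, built by splicing together the clean deductions guaranteed for the premises with the derived rule's own derivation of $A$. Write $H_1,\ldots,H_k$ for the assumptions of the derived rule. By the first hypothesis each $H_i$ is a right-theorem, so there is a deduction $D_i$ of $H_i$ in which no variable of $\mathit{RV}(H_i)$ ever occurs on the left of a chop or within a chop-star. First I would rename, in each $D_i$, every \emphbf{auxiliary} variable (one not in $V_{H_i}$) to a globally fresh variable, chosen so that the auxiliary variables of distinct $D_i$ are mutually distinct and none coincides with a variable of any $H_{i'}$, of $A$, or of the derived rule's proof. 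Renaming variables that do not occur in $H_i$ neither changes the conclusion $H_i$ nor disturbs $\mathit{RV}(H_i)$, and---provided the fresh names genuinely avoid all other variables in play---it preserves the side conditions of rules such as $\Bf$Aux; hence each renamed $D_i$ is still a right-theorem deduction of $H_i$. The candidate deduction of $A$ is then the concatenation $D_1,\ldots,D_k$ followed by the derived rule's proof $D_0$ of $A$ from $H_1,\ldots,H_k$; since the $H_i$ are now established as theorems by the $D_i$, this is a genuine deduction of $A$ from the axioms alone.

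It remains to check that no $p\in\mathit{RV}(A)$ ever occurs on the left of a chop or in a chop-star anywhere in $D_1,\ldots,D_k,D_0$. In the final block $D_0$ this is exactly the content of the second hypothesis. For the blocks $D_i$, fix $p\in\mathit{RV}(A)$. By the third hypothesis $p\in\mathit{RV}(H_j)$ for some $j$, so $p$ is clean throughout $D_j$. For an arbitrary $i$ there are two cases. If $p\notin V_{H_i}$, then after the renaming above $p$ does not occur in $D_i$ at all, so it is trivially clean there. If $p\in V_{H_i}$, I would invoke the reading of the second hypothesis under which the premise lines $H_1,\ldots,H_k$ are themselves steps of the derived rule's own proof: were $p$ to occur on the left of a chop or in a chop-star inside $H_i$, it would occur badly in that proof, contradicting the second hypothesis. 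Hence $p\in\mathit{RV}(H_i)$, and the right-theorem property of the renamed $D_i$ shows $p$ is clean throughout $D_i$. Combining the cases, every $p\in\mathit{RV}(A)$ is clean in the whole deduction, so $\thmR A$.

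The main obstacle is the variable bookkeeping rather than any deep deductive content. Two points need care. First, the renaming of auxiliary variables must be performed simultaneously across all the $D_i$ and coordinated with $D_0$, and one must confirm that the freshness requirements in side-conditioned rules (notably $\Bf$Aux, whose introduced variable must be absent from the relevant formulas) survive the renaming; this is routine but is where a sloppy argument could break. Second, and more delicate, is pinning down exactly what ``the derived rule's own proof of $A$'' comprises: the argument above needs the premise formulas $H_i$ to count as part of that proof, so that the second hypothesis forbids any right-variable of $A$ from occurring badly in a premise. With that convention the case $p\in V_{H_i}\setminus\mathit{RV}(H_i)$ is excluded outright; the role of the third hypothesis is then simply to guarantee that each right-variable of $A$ does arise as a genuine right-variable of at least one premise, so that the clean sub-deductions together cover all of $\mathit{RV}(A)$.
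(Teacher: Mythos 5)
The paper gives no proof of this lemma --- it states ``We omit the proof'' --- so there is nothing to compare against line by line; judged on its own, your argument is correct and is the natural way to discharge the claim: splice the right-theorem deductions of the premises with the derived rule's own derivation, rename auxiliaries apart, and then check each $p\in\mathit{RV}(A)$ by cases. The one point on which your proof genuinely leans is the convention that the premise formulas $H_i$ count as steps of ``the derived rule's own proof'' (so that the second hypothesis rules out $p\in\mathit{RV}(A)\cap V_{H_i}\setminus\mathit{RV}(H_i)$, the only case in which the spliced deduction could go wrong); this reading is the sensible one and matches the paper's own practice, where assumptions appear as numbered lines labelled \emph{assump.} in every derived-rule proof in Appendix~\ref{some-pitl-theorems-and-their-proofs-sec}, so I see no gap.
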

We omit the proof.
For example, Derived Rule~\refid{BoxImpInferBoxImpBox} in
Appendix~\ref{some-pitl-theorems-and-their-proofs-sec} (see also the abbreviated
Table~\ref{list-of-pitl-theorems-and-derived-rules-mentioned-before-app-table}
found later in
\S\ref{formal-equivalence-of-the-two-representations-of-runs-subsec}) lets us
infer from the theorem $\thm \Box\!A \imp B$ the theorem $\Box\!A \imp \Box\!B$.
It only requires the kind of reasoning mentioned in
Lemma~\ref{right-theorems-from-some-derived-rules-lem}.  Consequently, from
$\thmR \Box\!A \imp B$ we can infer $\thmR \Box\!A \imp \Box\!B$.

\vspace{5pt}

\emph{Readers are strongly encouraged to initially try to understand our
  completeness proof without consideration of right-theoremhood by simply
  viewing it as ordinary theoremhood and ignoring the prefix ``right-''.  This
  can even be rigorously done by assuming that the optional inference
  rule~\eqref{optional-inference-rule-for-right-variables-1-eq} is part of the
  $\PITL$ axiom system. A subsequent, more thorough study of the material can
  then better take right-theoremhood into account.  Indeed, we can then regard
  our completeness proof as two parallel proofs, a simpler one
  with~\eqref{optional-inference-rule-for-right-variables-1-eq} and another
  more sophisticated one which is based on right-theoremhood and
  Lemma~\ref{substitution-instances-of-right-theorems-lem} and hence does not
  assume~\eqref{optional-inference-rule-for-right-variables-1-eq}.
  Incidentally, our completeness proof ultimately ensures
  that~\eqref{optional-inference-rule-for-right-variables-1-eq} is obtainable
  as a derived inference rule even if it is not in the axiom system.}

\section{Some Lemmas for Replacement}

\label{some-lemmas-for-replacement-sec}

We now consider some techniques used in the completeness proof to replace
selected right-instances in a $\PITL$ formula by other formulas.
\begin{mylemma}
  \label{basic-limited-replacement-lemma-for-pitl-lem}
  Let $A_1$, $A_2$, $B_1$ and $B_2$ be $\PITL$ formulas.  If $A_2$ can be
  obtained from $A_1$ by replacing zero or more right-instances of $B_1$ in
  $A_1$ by $B_2$, then the next implication is deducible as a right-theorem:
  \begin{equation*}
    \TheoremR \Box(B_1\equiv B_2) \Implies A_1\equiv A_2
\dotspace.
  \end{equation*}
\end{mylemma}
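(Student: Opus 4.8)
The plan is to argue by induction on the structure of $A_1$ together with the choice of replaced occurrences, holding the antecedent fixed; write $H$ for this antecedent $\Box(B_1\equiv B_2)$, so that the goal in each case is $\thmR H \imp (A_1\equiv A_2)$. Since the whole of $A_1$ is itself a right-instance, the base case is when the root of $A_1$ is one of the replaced occurrences of $B_1$: then $A_2$ is $B_2$ and $\thmR H \imp (B_1\equiv B_2)$ is exactly the substitution instance of the valid $\PTL$ formula $\Box q \imp q$, hence an immediate consequence of Axiom~\refid{VPTL}. In all remaining cases the root is not replaced. If $A_1$ is atomic then nothing is replaced, $A_2 = A_1$, and the claim is a propositional triviality. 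For the Boolean connectives $\Not A_1'$ and $A_1'\Or A_1''$ the replaced right-instances lie in the immediate subformulas and are right-instances there as well; the induction hypothesis supplies $\thmR H\imp(A_1'\equiv A_2')$ (and likewise for $A_1''$), and the desired equivalence follows by propositional reasoning, i.e. a further application of Axiom~\refid{VPTL} and Inference Rule~\refid{MP}.

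The crux is the chop case $A_1 = A_1'\chop A_1''$. Every occurrence inside the left operand $A_1'$ lies within the left of a chop, so none of them is a right-instance of $A_1$; consequently $A_1'$ is left untouched and $A_2 = A_1'\chop A_2''$, where $A_2''$ arises from $A_1''$ by replacing right-instances of $B_1$, and these are precisely the right-instances of $A_1''$ taken on its own. The induction hypothesis thus gives $\thmR H \imp (A_1''\equiv A_2'')$. To feed this into the chop-congruence Axiom~\refid{BfAndBoxImpChopImpChop}, which requires a $\Box$-guarded implication of the right operands, I first lift the hypothesis to boxed form: Inference Rule~\refid{BoxGen} gives $\thm \Box\bigl(H\imp(A_1''\equiv A_2'')\bigr)$, distributing $\Box$ over $\imp$ (a $\PTL$ validity, hence Axiom~\refid{VPTL}) yields $\thm \Box H \imp \Box(A_1''\equiv A_2'')$, and the idempotence $\Box H \equiv H$ (again Axiom~\refid{VPTL}, since $\Box\Box q\equiv\Box q$ is $\PTL$-valid) gives $\thm H \imp \Box(A_1''\equiv A_2'')$. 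Since $\thm \Bf(A_1'\imp A_1')$ holds (reflexivity $A_1'\imp A_1'$ is an instance of Axiom~\refid{VPTL} and Inference Rule~\refid{BfGen} prepends $\Bf$), two instances of Axiom~\refid{BfAndBoxImpChopImpChop} — one per direction, using $\Box(A_1''\imp A_2'')$ and $\Box(A_2''\imp A_1'')$ extracted from $\Box(A_1''\equiv A_2'')$ — combine propositionally to deliver $\thm H \imp (A_1'\chop A_1''\equiv A_1'\chop A_2'')$. The chop-star case $A_1 = C\SChopstar$ is immediate: no occurrence inside a chop-star is a right-instance, so with the root unreplaced we have $A_2 = A_1$ and $H\imp(A_1\equiv A_1)$ follows from Axiom~\refid{VPTL}.

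What remains, and what I expect to be the delicate part, is checking that each step is a \emph{right}-theorem, i.e. that the right-variables of the conclusion never migrate to the left of a chop or into a chop-star. The decisive observation is that in the chop case the left operand $A_1'$ sits on the left of the top-level chop of $A_1$, so none of its variables is a right-variable of the conclusion; hence the steps that do place $A_1'$ on the left of a chop — inside $A_1'\chop(\cdot)$ and inside $\Bf(A_1'\imp A_1')$, where $\Bf$ hides a chop — are harmless. The variables of the right operand $A_1''$, together with those of $B_1$ and $B_2$ occurring in $H=\Box(B_1\equiv B_2)$, stay throughout in right positions (on the right of chops, inside $\Box$, or outside chops altogether), and the boxing step only ever inserts material under a $\Box$, again a right position. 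Because every right-variable of the full conclusion is automatically a right-variable of the sub-statement handled by the induction hypothesis, the right-theorem property is preserved through the recursion, and the single technical obstacle is precisely this boxing of the induction hypothesis so that the chop-congruence axiom applies without disturbing right-variables.
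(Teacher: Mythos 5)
Your proposal is correct and follows essentially the same route as the paper: induction on the syntax of $A_1$ (with replaced occurrences treated as the base case), trivial handling of $\Skip$ and chop-star, and the chop case resolved by boxing the induction hypothesis and feeding it to Axiom~\refid{BfAndBoxImpChopImpChop} via $\Bf(A_1'\imp A_1')$. The only difference is that you inline the derivations that the paper delegates to Derived Rule~\refid{BoxImpInferBoxImpBox} and Theorem~\refid{BoxChopEqvChop}, and your closing analysis of why right-theoremhood is preserved matches the paper's appeal to Lemma~\ref{right-theorems-from-some-derived-rules-lem}.
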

\begin{proof}

  The proof involves induction on the syntax of formula $A_1$, with each
  instance of $B_1$ regarded as atomic.  We consider the cases when $A_1$ is
  $B_1$ itself, $\True$, a propositional variable $p$, $\Not C$, $C_1 \Or
  C_2$, $\Skip$, $C_1\chop C_2$, and $C\SChopstar$.  The first three of these
  involve quite routine conventional propositional reasoning.  The case for
  $\Skip$ is trivial since $A_1$ and $A_2$ are identical.  The case for
  chop-star is likewise trivial since this lemma does not permit replacement
  in its scope.

  For the case for chop, assume $A_1$ and $A_2$ have the forms $C_1\chop C_2$
  and $C_1\chop C'_2$, respectively.  Note that no replacements are done in
  the left of chop.  By induction on $A_1$'s syntax, we deduce the next
  implication:
  \begin{equation*}
    \theoremR \Box(B_1\equiv B_2) \Implies C_2\equiv C'_2
\dotspace.
  \end{equation*}
  This and $\PTL$ reasoning (see Derived Rule~\refid{BoxImpInferBoxImpBox} in
  Appendix~\ref{some-pitl-theorems-and-their-proofs-sec} and also in the
  abbreviated
  Table~\ref{list-of-pitl-theorems-and-derived-rules-mentioned-before-app-table}
  found later in
  \S\ref{formal-equivalence-of-the-two-representations-of-runs-subsec}) yields
  the implication below:
  \begin{equation*}
    \theoremR \Box(B_1\equiv B_2) \Implies \Box(C_2\equiv C'_2)
\dotspace.
  \end{equation*}
  Lemma~\ref{right-theorems-from-some-derived-rules-lem} ensures that our use
  here of Derived Rule~\refid{BoxImpInferBoxImpBox} indeed yields a
  right-theorem.

  We can also deduce the next implication using
  Axiom~\refid{BfAndBoxImpChopImpChop} and some further temporal reasoning
  (see $\PITL$ Theorem~\refid{BoxChopEqvChop} in
  Appendix~\ref{some-pitl-theorems-and-their-proofs-sec} and also in
  Table~\ref{list-of-pitl-theorems-and-derived-rules-mentioned-before-app-table}
  in \S\ref{formal-equivalence-of-the-two-representations-of-runs-subsec}):
  \begin{equation*}
    \theoremR \Box(C_2\equiv C'_2)
      \Implies (C_1\chop C_2) \equiv (C_1\chop C'_2)
\dotspace.
  \end{equation*}
  These two implications together yield our goal below:
  \begin{equation*}
    \theoremR \Box(B_1\equiv B_2)
      \Implies (C_1\chop C_2) \equiv (C_1\chop C'_2)
\dotspace.
  \end{equation*}
  This concludes Lemma~\ref{basic-limited-replacement-lemma-for-pitl-lem}'s proof.
\end{proof}

Lemma~\ref{basic-limited-replacement-lemma-for-pitl-lem} yields a derived
inference rule for \emphbf{Right Replacement} of formulas:
\begin{mylemma}[Right Replacement Rule]
  \label{right-replacement-derived-rule-lemma-for-pitl-lem}
  Let $A_1$, $A_2$, $B_1$ and $B_2$ be $\PITL$ formulas.  Suppose that $A_2$
  can be obtained from $A_1$ by replacing zero or more right-instances of
  $B_1$ in $A_1$ by $B_2$.  If $B_1$ and $B_2$ are deducibly equivalent as a
  right-theorem (i.e., $\thmR B_1\equiv B_2$), then so are $A_1$ and $A_2$.
\end{mylemma}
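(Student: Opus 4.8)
The plan is to obtain the conclusion from the preceding Lemma~\ref{basic-limited-replacement-lemma-for-pitl-lem} by discharging its antecedent with $\Box$-generalisation and modus ponens. First I would apply Inference Rule~\refid{BoxGen} to the hypothesis $\thmR B_1\equiv B_2$ to get $\thmR \Box(B_1\equiv B_2)$. Since $A_2$ is obtained from $A_1$ by replacing zero or more right-instances of $B_1$ by $B_2$, Lemma~\ref{basic-limited-replacement-lemma-for-pitl-lem} supplies the right-theorem $\thmR \Box(B_1\equiv B_2)\implies(A_1\equiv A_2)$. A single application of Inference Rule~\refid{MP} to these two right-theorems then delivers $\thmR A_1\equiv A_2$, which is the desired conclusion.

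At the level of ordinary theoremhood this is just a chaining of established facts; the real content is checking that the assembled deduction witnesses \emph{right}-theoremhood, i.e.\ that no right-variable of $A_1\equiv A_2$ is ever placed on the left of a chop or inside a chop-star. For the $\Box$-generalisation step I would first record the syntactic fact that $\Box C$ abbreviates $\Not(\True\chop\Not C)$, so that $C$ sits in the right operand of a chop; hence right-instances are not demoted and $\mathit{RV}(\Box C)=\mathit{RV}(C)$. Appending the $\Box$-generalisation step to a deduction witnessing $\thmR B_1\equiv B_2$ therefore keeps every member of $\mathit{RV}(B_1\equiv B_2)$ in a good position and yields $\thmR\Box(B_1\equiv B_2)$.

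The main obstacle is the modus-ponens step, because modus ponens need not preserve right-theoremhood in general: the right-variables of the conclusion must stay good throughout the deductions of \emph{both} premises, and Lemma~\ref{right-theorems-from-some-derived-rules-lem} does not apply directly here since $\mathit{RV}(A_1\equiv A_2)$ can contain variables of $A_1$ or $A_2$ that are absent from $B_1$ and $B_2$. I would settle this with a variable-tracking argument. Writing $F$ for the implication $\Box(B_1\equiv B_2)\implies(A_1\equiv A_2)$, the decisive observation is that each replaced right-instance of $B_1$ (now a copy of $B_2$) sits at a good position of $A_1\equiv A_2$; hence any variable occurring badly inside $B_1$ or $B_2$ would also occur badly inside the corresponding copy in $A_1$ or $A_2$, and so could not lie in $\mathit{RV}(A_1\equiv A_2)$. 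Consequently every $v\in\mathit{RV}(A_1\equiv A_2)$ that appears in $B_1\equiv B_2$ already lies in $\mathit{RV}(B_1\equiv B_2)$, which gives both $\mathit{RV}(A_1\equiv A_2)\subseteq\mathit{RV}(F)$ and the fact that the only members of $\mathit{RV}(A_1\equiv A_2)$ occurring in the deduction of $\Box(B_1\equiv B_2)$ stay good there. Splicing the deduction of $\Box(B_1\equiv B_2)$, the deduction of $F$ furnished by Lemma~\ref{basic-limited-replacement-lemma-for-pitl-lem}, and the closing modus-ponens step into one proof then witnesses $\thmR A_1\equiv A_2$, provided (as we may assume) the auxiliary variables introduced in the premise proofs are chosen fresh and hence disjoint from the variables of $A_1$ and $A_2$. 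The degenerate case of zero replacements, where $A_2=A_1$ and $A_1\equiv A_2$ is an instance of a propositional tautology, is a right-theorem outright and can be dispatched separately.
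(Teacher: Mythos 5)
Your proof is correct and follows exactly the paper's own argument: apply Inference Rule~\refid{BoxGen} to $\thmR B_1\equiv B_2$, combine with the implication from Lemma~\ref{basic-limited-replacement-lemma-for-pitl-lem}, and conclude by modus ponens. The additional bookkeeping you supply to confirm that right-theoremhood survives the $\Box$-generalisation and modus-ponens steps is a welcome elaboration of details the paper leaves implicit, but it does not change the route.
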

\begin{proof}
  By Lemma~\ref{basic-limited-replacement-lemma-for-pitl-lem}, we deduce the
  next implication:
  \begin{equation*}
    \TheoremR \Box(B_1\equiv B_2) \Implies A_1\equiv A_2
\dotspace.
  \end{equation*}
  Also, $\thmR B_1\equiv B_2$ and Inference Rule~\refid{BoxGen} yield $\thmR
  \Box(B_1\equiv B_2)$.  Then modus ponens yields $\thmR A_1\equiv A_2$.
\end{proof}

\section{Useful Subsets of PITL}

\label{useful-subsets-of-pitl-sec}

We now describe five subsets of $\PITL$ and some associated properties which
will be extensively used later on in different parts of the $\PITL$
completeness proof.  We have chosen to collect material about the subsets here
instead of introducing each subset as the need arises.  This should make it
easier for readers to review the definitions and features when required and
also make the main steps of the completeness proof shorter and more focused.
In addition, when taken as a whole, the combined presentation of the $\PITL$
subsets enables us to give a technical overview of some of the proof steps
encountered.  Table~\ref{kinds-of-variables-table} later lists variables used
for the subsets and other subsequently defined categories.

\subsection{PTL with only Unnested Next Constructs}

\label{nl-one-formulas-subsec}

Let \emphbf{$\NLone$} denote the subset of $\PTL$ formulas in which the only
temporal operators are unnested $\Next$s (e.g., $p \Or \Next\Not p$ but not $p
\Or \Next\Next\Not p$). It is not hard to see that $\NLone$ formulas only
examine an interval's first two states.  They are therefore useful for
describing automata transitions from one state to the next.  The variables $T$
and $T'$ denote formulas in $\NLone$.

Below are some theorems which contain $\NLone$ formulas and are required in
the completeness proof.  None of these theorems are themselves in $\NLone$.
The proofs are in Appendix~\ref{some-pitl-theorems-and-their-proofs-sec}.
\begin{center}
\begin{tabular}{l@{\qquad}L}
  \refid{DfMoreAndNLoneEqvMoreAndNLone} &
    \Theorem \Df(\More \And T) \Equiv \More \And T \\
  \refid{DfSkipAndNLoneEqvMoreAndNLone} &
    \Theorem \Df(\Skip \And T) \Equiv \More \And T \\
  \refid{NLoneAndSkipChopEqvNLoneAndNext} &
    \Theorem (\Skip \And T)\chop A \Equiv T \Andd \Next A
\end{tabular}
\end{center}

\subsection{PTL with Until}

\label{ptl-with-until-subsec}

Recall that for our purposes we define $\PTL$ to be the subset of $\PITL$ with
just $\Skip$ and the derived temporal operators $\Next$ and $\Diamond$ shown
in Table~\ref{pitl-derived-operators-table}.

We also use a more expressive version of $\PTL$ denoted here as
\emphbf{$\PTLU$} with a \emph{strong} version of the standard temporal
operator $\Until$, derivable in $\PITL$:
\begin{displaymath}
  T \Until A \Defeqv (\Skip \And T)\SChopstar\chop A
\dotspace.
\end{displaymath}
We limit $\Until\!$'s lefthand operand to be a formula in $\NLone$ (defined
previously in \S\ref{nl-one-formulas-subsec}).  Note that this definition of
$\Until$ using chop and chop-star results in any variable in the left operand
of $\Until$ not being a right-variable.  Let $Y$ and $Y'$ denote $\PTLU$
formulas.

We establish right-completeness for $\PITL$ by a reduction to $\PTLU$, instead
of directly to $\PTL$.  It is not hard to show that our axiom system is
complete for $\PTLU$ formulas.  This is because we can deduce the next two
$\PTLU$ axioms known to capture this kind of $\Until\!$'s behaviour (the
$\PITL$ proofs are in Appendix~\ref{some-pitl-theorems-and-their-proofs-sec}):
\begin{equation*}
  \refid{UntilEqv}\enspace
  \Theorem T\Until A \Equiv A \,\Or\, \bigr(T \And \Next(T\Until A)\bigr) \\
  \qquad\quad
  \refid{UntilImpDiamond}\enspace
  \Theorem T\Until A \Implies \Diamond A
\dotspace.
\end{equation*}
Consequently, we can reduce completeness for $\PTLU$ to it for $\PTL$.  In
fact every $\PTLU$ theorem is a right-theorem.  This is because the
right-variables in $T\Until A$ remain so in ~\refid{UntilEqv}
and~\refid{UntilImpDiamond}, Hence, the two $\PTLU$ axioms ensure that these
variables remain right-variables in the proof steps for deducing a $\PTLU$
theorem in the $\PITL$ axiom system.  See Kr{\"o}ger and
Merz~\citeyear{KroegerMerz08} for more about axioms for a variety of such
binary temporal operators.

\subsection{PITL without Omega-Iteration}

\label{pitl-without-omega-iteration-subsec}

Our completeness proof includes a step in which any chop-omega (defined in
Table~\ref{pitl-derived-operators-table}) is eliminated by re-expressing any
chop-star not in the left of chop or another chop-star.  This exploits a
convenient alternative characterisation of omega-regular languages described
by Thomas at the end of~\cite{Thomas79} which does not involve
omega-iteration.  It instead employs \emph{closure} under some other
operations which include \emph{complementation}:
\begin{mytheorem}[Omega-Regularity using Closures]
  \label{thomas'-theorem-thm}
  The omega-regular languages of an alphabet $\Sigma$ are exactly the closure
  of $\{\emptyset\}$ under
  the following: (1) union, (2) complementation (with respect to
  $\Sigma^\omega$) and (3) left concatenation by $\Sigma$'s regular languages.
\end{mytheorem}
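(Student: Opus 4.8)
The plan is to prove the two set inclusions separately. Write $\mathcal{R}$ for the class of omega-regular languages over $\Sigma$ and $\mathcal{C}$ for the closure of $\{\emptyset\}$ under union, complementation relative to $\Sigma^\omega$, and left concatenation by regular languages of finite words. For $\mathcal{C}\subseteq\mathcal{R}$ it suffices, since $\mathcal{C}$ is by definition the smallest such class, to check that $\mathcal{R}$ contains $\emptyset$ and is closed under the three generating operations. Containment of $\emptyset$ and closure under finite union are immediate from the standard $\bigcup_i U_i V_i^\omega$ normal form (or from sum constructions on B\"uchi automata). Closure under left concatenation $R\cdot L$ with $R$ regular and $L\in\mathcal{R}$ is the routine construction that runs a finite automaton for $R$ and, upon reaching an accepting state, hands control to a B\"uchi automaton for $L$. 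The one substantial ingredient here is closure under complementation, which is exactly B\"uchi's complementation theorem; I would simply invoke it.

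The reverse inclusion $\mathcal{R}\subseteq\mathcal{C}$ is the interesting direction, since it must rebuild every omega-regular language from $\emptyset$ using complementation in place of omega-iteration. First I would appeal to McNaughton's determinisation theorem to fix, for a given $L\in\mathcal{R}$, a deterministic Muller automaton $\mathcal{A}=(Q,\Sigma,\delta,q_0,\mathcal{F})$ recognising $L$. Because the run of each $\alpha\in\Sigma^\omega$ is now unique, its set of infinitely recurring states $\mathrm{Inf}(\alpha)$ is well defined and $L=\bigcup_{F\in\mathcal{F}}L_F$ with $L_F=\{\alpha:\mathrm{Inf}(\alpha)=F\}$. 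Since $\mathcal{C}$ is closed under finite union, it remains to place each $L_F$ in $\mathcal{C}$. I would also record at the outset that $\mathcal{C}$ is closed under intersection, via $A\cap B=\Sigma^\omega\setminus((\Sigma^\omega\setminus A)\cup(\Sigma^\omega\setminus B))$, and contains $\Sigma^\omega=\Sigma^\omega\setminus\emptyset$, so that $R\cdot\Sigma^\omega\in\mathcal{C}$ for every regular $R$.

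The heart of the argument is to express $L_F$ through three kinds of building block, each visibly in $\mathcal{C}$. Splitting a word at the position where its run permanently enters $F$ gives $L_F=\bigcup_{q\in F}\mathrm{Reach}_q\cdot M_F^q$, where $\mathrm{Reach}_q=\{w\in\Sigma^*:\delta(q_0,w)=q\}$ is regular and $M_F^q$ is the set of infinite continuations whose run from $q$ stays forever inside $F$ and visits every state of $F$ infinitely often. In turn $M_F^q=S_F^q\cap\bigcap_{p\in F}I_p^q$, where the safety set $S_F^q$ (the run from $q$ never leaves $F$) is the complement of $\{w:\delta(q,w)\notin F\}\cdot\Sigma^\omega$, and $I_p^q$ (the run from $q$ visits $p$ infinitely often) is the complement of its ``finitely often'' counterpart. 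That finitely-often set is itself a finite union $\bigcup_{q'}\{w:\delta(q,w)=q'\}\cdot N_p^{q'}$, where $N_p^{q'}$ (never visiting $p$ from $q'$) is the complement of $\{w:\delta(q',w)=p\}\cdot\Sigma^\omega$. Unwinding these definitions, every block is built from regular finite-word languages, $\Sigma^\omega$, complementation, union and intersection, hence lies in $\mathcal{C}$; assembling them gives $L_F\in\mathcal{C}$ and therefore $L\in\mathcal{C}$.

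The genuinely hard inputs are the two classical theorems I would cite rather than reprove: B\"uchi complementation for the forward inclusion and McNaughton determinisation for the reverse one. Beyond these, the main thing to get right is the bookkeeping that keeps every intermediate language inside $\mathcal{C}$ --- in particular, checking that each ``eventually'' statement really does decompose as a finite union of a regular prefix language concatenated with a complement-of-reachability tail, and verifying the boundary indexing in the ``permanent entry into $F$'' splitting so that states outside $F$ contribute only to the finite prefix while the states of $F$ recur in the tail. No omega-iteration appears anywhere; its role is taken over entirely by the complementation steps hidden inside the safety and infinitely-often sets.
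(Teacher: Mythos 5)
Your argument is correct, but it is worth being clear that the paper itself gives \emph{no} proof of this statement: Theorem~\ref{thomas'-theorem-thm} is deliberately invoked as an external black box, credited to Thomas~\cite{Thomas79}, with the remark that Choueka and Peleg~\cite{ChouekaPeleg83} later gave a simpler proof ``using standard deterministic omega automata''. What you have written is essentially a reconstruction of that Choueka--Peleg argument: the easy inclusion reduces to B\"uchi complementation (since the closure class is the smallest one containing $\emptyset$ and closed under the three operations, it suffices that the omega-regular languages have these closure properties), and the hard inclusion uses McNaughton determinisation to get a deterministic Muller automaton, splits $L=\bigcup_F L_F$ by the Muller condition, and expresses each $L_F$ via regular reachability prefixes, a safety tail obtained as the complement of $\{w:\delta(q,w)\notin F\}\cdot\Sigma^\omega$, and infinitely-often constraints obtained as complements of finitely-often sets. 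I checked the bookkeeping you flagged as delicate --- the split at the point of permanent entry into $F$, and the indexing in ``never visits $p$ from $q'$'' (where $N_p^{q'}=\emptyset$ when $q'=p$, consistently) --- and it goes through; determinism is what makes the prefix/tail decomposition of the word induce the corresponding decomposition of the unique run, and $\Sigma^\omega$ itself is available as the complement of $\emptyset$. The trade-off between the two treatments is exactly the one the paper advertises: by citing the theorem rather than proving it, the completeness proof stays modular and avoids embedding determinisation and complementation arguments into deductions, whereas your reconstruction makes visible that the only genuinely hard inputs are McNaughton's and B\"uchi's theorems and shows precisely where omega-iteration is traded for complementation.
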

Here $\emptyset$ denotes the omega-language with no elements.

Let $\mathbf{PITL^K}$ denote the $\PITL$ subset in which chop-star only occurs
on the left of chops (like (3) in Thomas' theorem above) and is therefore
restricted to finite intervals.  The $\textrm{K}$ in $\PITLK$ stands for
``Kleene star''.  For example, the next two formulas are in $\PITLK$:
\begin{displaymath}
  (\Skip \And p)\SChopstar\chop q
    \qquad (\Skip\SChopstar\chop \Skip) \Or \Next p
\dotspace.
\end{displaymath}
In contrast, the two formulas below are not in $\PITLK$:
\begin{displaymath}
  (\Skip \And p)\SChopstar
    \qquad p \imp \Diamond (\Skip \And q)\SChopstar
\dotspace.
\end{displaymath}
Observe that a $\PITLK$ formula can contain chop-star subformulas, which by
the definition of $\PITLK$ are not themselves in it.  An example is $(\Skip
\And p)\SChopstar$ in $(\Skip \And p)\SChopstar\chop q$.

With just finite time, any $\PITL$ formula $A$ is easily re-expressed in
$\PITLK$ as $A\chop\Empty$ (compare with Axiom~\refid{FiniteImpChopEmpty} in
Table~\ref{axiom-system-for-pitl-with-finite-and-infinite-time-table}).
However this technique does not work for infinite time.  We also need Thomas'
theorem (Theorem~\ref{thomas'-theorem-thm}) to ensure that any $\PITL$ formula
$A$ has a semantically equivalent $\PITLK$ formula $K$ for both finite and
infinite time (i.e., $\vld A\equiv K$).  For example, one way to re-express
the $\PITL$ formula $(\Skip \And p)\SChopstar$ in $\PITLK$ is $\Box(\More \imp
p)$.  It follows that any chop-omega formula is re-expressible in $\PITLK$.
For instance, for any $\PITL$ formula $B$, the formula $(\Skip \And
B)\Chopomega$ is semantically equivalent to $\Box\Df(\Skip \And B)$.

Later on in Section~\ref{reduction-of-chop-omega-sec} we employ Thomas'
theorem to easily reduce axiomatic completeness for $\PITL$ to that for
$\PITLK$.  More precisely, we will formally establish there that for any
$\PITL$ formula $A$, there exists a semantically equivalent $\PITLK$ formula
$K$ such that the formula $A\equiv K$ is deducible as a $\PITL$ theorem.
Hence, by simple propositional reasoning, if $A$ is consistent, so is $K$ and
any model for $K$ is also one for $A$.  The remainder of the overall
completeness proof then reduces completeness for $\PITLK$ to it for $\PTLU$.

Choueka and Peleg~\citeyear{ChouekaPeleg83} give a simpler proof of Thomas'
theorem using standard deterministic omega automata. Readers favouring an
automata-theoretic perspective can therefore regard the theorem in the context
of $\PITL$ as a basis for \emph{implicitly} determinising the original $\PITL$
formula, resulting in a semantically equivalent one in $\PITLK$.

\subsection{Right-Chops and Chain Formulas}

\label{right-chops-and-chain-formulas-subsec}

For any $\PITL$ formula $A$, we call a chop formula in $A$ a
\emphbf{right-chop} if it is not in another chop's left operand or in a
chop-star.  Right-chops help reduce $\PITLK$ to $\PTLU$. We illustrate them
with the formula below:
\begin{equation}
  \label{right-chop-example-1-eq}
  \bigl((p\chop p') \chop \Not(q\chop q')\bigr) \;\Or\; (p \chop p')
\dotspace.
\end{equation}
The following three formulas all occur as right-chops in this:
\begin{displaymath}
  (p\chop p') \chop \Not(q\chop q')
  \qquad
  q\chop q'
  \qquad
  p \chop p'
\dotspace.
\end{displaymath}
Only the second instance of $p\chop p'$ in
formula~\eqref{right-chop-example-1-eq} is a right-chop.  In contrast, the
first instance of $p\chop p'$ is not a right-chop since it is within the left
operand of another chop.  Observe that the right-chops of a $\PITL$ formula
$A$ are exactly those subformulas in $A$, including possibly $A$ itself, which
have chop as their main operator and are right-instances (previously defined
in Section~\ref{right-instances-right-variables-and-right-theorems-sec}).

\begin{mycomment}
  For any $n\ge 0$, let $\mathbf{PITL^K_n}$ denote the set of all $\PITLK$
  formulas with \emph{exactly} $n$ right-chops.  Observe the $\PITLK$ itself
  is the union $\bigcup_{i\ge 0}\PITLK_i$.  Also, the only temporal operator
  in formulas in $\PITLK_0$ is $\Skip$.  We will then inductively obtain the
  alternative version of axiomatic completeness
  (Lemma~\ref{alternative-completeness-lem}) for all of $\PITLK$.  In
  particular, we ultimately show by induction on $j\ge 0$ that for any
  $\PITLK_j$ formula $K$ and $\PTLU$ formula $Y$, if the conjunction $K\And Y$
  is consistent, then it is satisfiable.  Observe that for the base case with
  $j=0$, the formula $K\And Y$ is in $\PTLU$ and hence we can use the
  axiomatic completeness for $\PTLU$ previous discussed in
  \S\ref{ptl-with-until-subsec}.
\end{mycomment}

In addition to right-chops, the reduction of a $\PITLK$ formula to $\PTLU$
employs a class of $\PTLU$ formulas involving disjunctions and sequential
\emphbf{chains} of restricted constructs.
Let a \emphbf{chain formula} be any $\PTLU$ formula with the syntax below,
where $w$ is a state formula, $T$ is an $\NLone$ formula and $G$ and $G'$ are
themselves chain formulas:
\begin{displaymath}
    \Empty  \qquad w \And G \qquad  G \Or G' \qquad   T \Until G
\dotspace.
\end{displaymath}

The operator $\UntilOp$ in chain formulas involves a quite limited version of
the $\PITL$ operator chop-star which is much easier to reason about than full
chop-star.  The next lemma exploits this and shows that a chop in which the
left operand is a chain formula and the right one is in $\PTLU$ can be
re-expressed as a deducibly equivalent $\PTLU$ formula.
\begin{mylemma}
  \label{re-expressing-g-chop-x-in-ptlu-lem}
  For any chain formula $G$ and $\PTLU$ formula $Y$, there exists some $\PTLU$
  formula $Y'$ such that the equivalence $(G\chop Y)\equiv Y'$ is deducible as
  a right-theorem.
\end{mylemma}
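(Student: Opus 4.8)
The plan is to argue by induction on the structure of the chain formula $G$, following the four-case grammar $\Empty$, $w \And G$, $G \Or G'$, and $T \Until G$. In each case I produce the required $\PTLU$ formula $Y'$ and establish $\thmR (G \chop Y) \equiv Y'$, appealing to the induction hypothesis on the proper chain subformulas of $G$ and to the Right Replacement Rule (Lemma~\ref{right-replacement-derived-rule-lemma-for-pitl-lem}) to splice the recursively obtained equivalences into position. A guiding observation throughout is that in $G \chop Y$ the operand $Y$, and every subformula descended from it, always sits in the right operand of a chop and never inside a chop-star; hence the variables at issue remain right-variables and all the intermediate deductions can be kept as right-theorems.

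For the base case $G = \Empty$, Axiom~\refid{EmptyChop} gives $\thmR \Empty \chop Y \equiv Y$, so $Y' = Y$ works and is already in $\PTLU$. For $G = w \And G_1$, I would first deduce the right-theorem $\thmR (w \And G_1) \chop Y \equiv w \And (G_1 \chop Y)$, an instance of the valid $\PITL$ law $(w \And A)\chop B \equiv w \And (A \chop B)$. The induction hypothesis supplies a $\PTLU$ formula $Y_1$ with $\thmR (G_1 \chop Y) \equiv Y_1$; since $G_1 \chop Y$ occurs as a right-instance in $w \And (G_1 \chop Y)$, the Right Replacement Rule yields $\thmR w \And (G_1 \chop Y) \equiv w \And Y_1$, and transitivity of $\equiv$ (ordinary propositional reasoning available through Axiom~\refid{VPTL}) gives $\thmR (w \And G_1) \chop Y \equiv w \And Y_1$, so $Y' = w \And Y_1 \in \PTLU$. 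The disjunctive case $G = G_1 \Or G_2$ is analogous: I first establish $\thmR (G_1 \Or G_2)\chop Y \equiv (G_1 \chop Y) \Or (G_2 \chop Y)$, whose forward implication is Axiom~\refid{OrChopImp} and whose converse follows from monotonicity of chop (Axiom~\refid{BfAndBoxImpChopImpChop}), and then I replace both right-instances $G_1 \chop Y$ and $G_2 \chop Y$ by the $\PTLU$ formulas $Y_1$, $Y_2$ from the induction hypothesis, obtaining $Y' = Y_1 \Or Y_2$.

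The remaining case $G = T \Until G_1$ is where the shape of $\Until$ enters, and it turns out to be clean rather than obstructive. Unfolding the definition $T \Until G_1 \Defeqv (\Skip \And T)\SChopstar \chop G_1$ and applying chop associativity (Axiom~\refid{ChopAssoc}) gives $\thmR (T \Until G_1) \chop Y \equiv (\Skip \And T)\SChopstar \chop (G_1 \chop Y)$, whose right-hand side is by definition exactly $T \Until (G_1 \chop Y)$. The induction hypothesis again yields a $\PTLU$ formula $Y_1$ with $\thmR (G_1 \chop Y) \equiv Y_1$; since $G_1 \chop Y$ is a right-instance inside $(\Skip \And T)\SChopstar \chop (G_1 \chop Y)$ (it lies in the right operand of the outer chop, not within the chop-star), the Right Replacement Rule converts it to $T \Until Y_1$. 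Because $T$ is an $\NLone$ formula, $T \Until Y_1$ is a legitimate $\PTLU$ formula, so $Y' = T \Until Y_1$ closes this case. This deductive step parallels the semantic fact that $(T \Until G_1)\chop Y$ and $T \Until (G_1 \chop Y)$ coincide, but pleasantly it reduces to a single application of associativity.

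The main thing to watch is not the algebra but the bookkeeping of right-theoremhood: at each replacement one must check that the subformula being spliced out really is a right-instance (which holds because $Y$ and its descendants always remain in right operands of chops and outside chop-stars), and one must confirm that the two auxiliary equivalences used in the $w \And G_1$ and $G_1 \Or G_2$ cases can themselves be derived as right-theorems, i.e.\ that their derivations keep the variables of $Y$ on the right of chops. Given the replacement infrastructure of Section~\ref{some-lemmas-for-replacement-sec} and the monotonicity and associativity axioms, these checks are routine, and the induction closes.
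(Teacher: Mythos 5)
Your proof is correct and follows essentially the same route as the paper's: structural induction on the chain formula using the four case equivalences for $\Empty$, $w\And G'$, $G'\Or G''$, and $T\Until G'$, with the Until case handled by unfolding the chop-star definition and applying chop associativity. The only difference is that you spell out the Right Replacement bookkeeping in more detail, which the paper leaves implicit.
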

\begin{proof}
  We do induction on $G$'s syntax using the deducible equivalences below in
  which $w$ is a state formula, $T$ is an $\NLone$ formula and $G'$ and $G''$
  are themselves chain formulas:
  \begin{displaymath}
    \begin{array}{@{\qquad}l@{\qquad}l@{\qquad}l@{\qquad}l@{}}
      \TheoremR \Empty\chop Y \EQUIV Y
        & \TheoremR (G' \Or G'')\chop Y\EQUIV (G'\chop Y) \Or (G''\chop Y) \\
      \TheoremR (w \And G')\chop Y \EQUIV w \And (G'\chop Y)
        & \TheoremR (T \Until G')\chop Y \EQUIV T\Until(G'\chop Y)
\dotspace.
    \end{array}
  \end{displaymath}
  The first of these is an instance of $\PITL$ Axiom~\refid{EmptyChop}.  The
  second and third are respective instances of $\PITL$
  Theorems~\refid{StateAndEmptyChop} and~\refid{OrChopEqv} in
  Appendix~\ref{some-pitl-theorems-and-their-proofs-sec} (see also the
  abbreviated
  Table~\ref{list-of-pitl-theorems-and-derived-rules-mentioned-before-app-table}
  found later in
  \S\ref{formal-equivalence-of-the-two-representations-of-runs-subsec}).  The
  fourth uses the earlier $\ITL$-based definition of the temporal operator
  $\UntilOp$ in~\S\ref{ptl-with-until-subsec} and Axiom~\refid{ChopAssoc}
  which itself concerns chop's associativity.
\end{proof}
For example, the left chop operand in the $\PITL$ formula $\bigl(p \And
(q\Until \Empty))\chop \Skip$ is a chain formula.  The chop itself is
deducibly equivalent to the $\PTLU$ formula $p \And (q\Until \Skip)$.

Our completeness proof will ultimately apply
Lemma~\ref{re-expressing-g-chop-x-in-ptlu-lem} when in
Section~\ref{reduction-of-pitl-to-ptlu-sec} we later replace the left operands
of a consistent $\PITLK$ formula's right-chops with chain formulas.  For this
to work, we will also need auxiliary variables of the kind now described.

\subsection{Auxiliary Temporal Assignments}

\label{auxiliary-temporal-assignments-subsec}

When we later represent automata runs in $\PITL$, it is convenient to
generalise formulas of the form $p\Tassign B$ (the temporal assignment
construct defined in Table~\ref{pitl-derived-operators-table}) to conjunctions
of several of these.  Please refer back to Section~\ref{pitl-sec} for a brief
explanation about the meaning of temporal assignment.
We call such a conjunction an \emphbf{Auxiliary Temporal Assignment (ATA)}.
It has the form given below:
\begin{displaymath}
  \textstyle
  \bigwedge_{1\le i\le n} (q_i \Tassign A_i)
\dotspace,
\end{displaymath}
for some $n\ge 0$, where each $A_i$ is a $\PITL$ formula, there are $n$
distinct \emphbf{auxiliary} propositional variables $q_1$, \ldots $q_n$ and
the only ones of them permitted in each $A_i$ are $q_1$, \ldots $q_{i-1}$.
All other propositional variables are allowed in any $A_i$.  Here is a sample
ATA with one nonauxiliary variable $r$ and two auxiliary variables $p$ and
$q$:
\begin{equation*}
  (p\Tassign \Next r) \And (q\Tassign \Box (r\imp \Diamond p))
\dotspace.
\end{equation*}
Variables such as $D$ and $D'$ denote ATAs.  Two ATAs are \emphbf{disjoint} if
they have distinct auxiliary variables.

\label{some-derived-inference-rules-for-auxiliary-variables-sec}

Let us now look at how to formally introduce ATAs containing auxiliary
variable into deductions for later use within the completeness proof in
\S\ref{proof-of-the-main-completeness-theorem-subsec}.

\begin{mylemma}[Temporal Operators $\Bf$, $\Tassign$ and Right-Consistency]
  \label{temporal-operator-bf-and-consistency-lem}
  Let $A$ and $B$ be $\PITL$ formulas with no instances of propositional
  variable $p$.  If $A$ is right-consistent, so is the conjunction $A \Andd
  \Bf(p\Tassign B)$.
\end{mylemma}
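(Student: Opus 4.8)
The plan is to argue by contraposition. Assume the conjunction $A \Andd \Bf(p\Tassign B)$ is \emph{not} right-consistent, i.e.\ $\thmR \Not(A \Andd \Bf(p\Tassign B))$, and derive $\thmR \Not A$, contradicting the right-consistency of $A$. Purely propositional rearrangement (an instance of Axiom~\refid{VPTL} and modus ponens, which introduces no chops) turns the assumed right-theorem into $\thmR \Bf(p\Tassign B) \implies \Not A$. The engine of the argument is the inference rule $\Bf$Aux, which from $\thm \Bf((\Fin p)\equiv B) \implies C$ infers $\thm C$ whenever $p$ occurs in neither $C$ nor $B$; I would take $C$ to be $\Not A$, which is legitimate precisely because, by hypothesis, $p$ occurs in neither $A$ nor $B$.

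To match the premise shape demanded by $\Bf$Aux, I first observe that $\Bf(p\Tassign B)$ and $\Bf((\Fin p)\equiv B)$ are deducibly equivalent. Unfolding $p\Tassign B$ as $\Finite\imp((\Fin p)\equiv B)$, the operator $\Bf$ inspects only finite prefix subintervals, on each of which $\Finite$ holds, so the guard $\Finite$ is vacuous under $\Bf$. Deductively, this equivalence follows from $\Bf$-monotonicity together with $\thm\Bf\Finite$ (itself immediate from Inference Rule~\refid{BfFGen} applied to the valid $\Finite\imp\Finite$) and the distribution of $\Bf$ over conjunction and implication. Crucially, in both $\Bf(p\Tassign B)$ and $\Bf((\Fin p)\equiv B)$ every variable is buried under the chop hidden inside $\Bf$, so neither side has any right-variables and the equivalence is automatically a right-theorem. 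Rewriting with it (via the Right Replacement Rule, Lemma~\ref{right-replacement-derived-rule-lemma-for-pitl-lem}) yields $\thmR\Bf((\Fin p)\equiv B)\implies\Not A$, whereupon a single application of $\Bf$Aux produces $\Not A$ as a theorem.

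The delicate part is not the theoremhood of $\Not A$ but its \emph{right}-theoremhood: the witnessing deduction must keep every variable of $\mathit{RV}(A)=\mathit{RV}(\Not A)$ off the left of every chop and out of every chop-star. The assumed deduction guarantees this only for $\mathit{RV}(A)\setminus V_B$, since the copy of $B$ sitting inside $\Bf((\Fin p)\equiv B)$ lies on the left of a chop; the problematic variables are exactly those in $\mathit{RV}(A)\cap V_B$, and I expect this bookkeeping to be the main obstacle. The resolution I would pursue exploits the fact that $\Bf$Aux accepts an \emph{arbitrary} operand $B$: any substitution of a formula for a right-variable of $A$ can be pushed uniformly through the whole deduction, converting the $\Bf$Aux step into another legitimate $\Bf$Aux step with a correspondingly substituted operand. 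Hence the occurrences of $\mathit{RV}(A)$-variables inside $B$ are benign for the one purpose right-theoremhood must serve, namely that substitution instances into right-variables remain theorems (Lemma~\ref{substitution-instances-of-right-theorems-lem}). Making this precise---either by refining the notion of ``occurring on the left of a chop'' so that such absorbed occurrences inside an $\Bf$Aux operand are tolerated, or by an auxiliary renaming of the variables in $\mathit{RV}(A)\cap V_B$ to fresh ones and transferring right-consistency across the renaming through Lemma~\ref{substitution-instances-of-right-theorems-lem}---is where the real work lies; everything else is the routine propositional and $\Bf$-level reasoning sketched above.
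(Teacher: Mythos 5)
Your first two paragraphs reproduce the paper's proof exactly: it assumes non-right-consistency, passes propositionally to $\thmR \Bf(p\Tassign B)\imp\Not A$, rewrites $\Bf(p\Tassign B)$ as $\Bf\bigl((\Fin p)\equiv B\bigr)$, and applies Inference Rule \textbf{$\Bf$Aux}. Your discharge of the $\Finite$ guard hidden in $\Tassign$ (via \textbf{$\Bf$FGen} and the K-style distribution of $\Bf$) is more explicit than the paper's one-line ``we re-express''.

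The worry in your third paragraph is legitimate and is not addressed by the paper's proof, but the resolution is much lighter than the machinery you sketch. If $V_B\cap\mathit{RV}(A)=\emptyset$ there is nothing left to do: the right-variables of $\Bf(p\Tassign B)\imp\Not A$ are then exactly $\mathit{RV}(A)$, so the assumed right-deduction already keeps all of $\mathit{RV}(A)$ off the left of chops and out of chop-stars in every step, and the single \textbf{$\Bf$Aux} application appends only the chop-free formula $\Not A$. That disjointness holds everywhere the lemma is actually invoked --- via Lemma~\ref{temporal-operator-bf-atas-and-consistency-lem} inside Lemma~\ref{completeness-for-pitlk-lem}, where each $B$ belongs to an ATA built from the left operands of $K$'s right-chops, and the paper explicitly notes there that no right-variable of $K$ occurs in those ATAs. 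So the honest repair is to record that side condition (or restate the lemma with it), rather than the renaming-plus-substitution route: Lemma~\ref{substitution-instances-of-right-theorems-lem} pushes right-theoremhood forward into substitution instances and does not obviously carry right-consistency back in the direction your renaming argument would need.
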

\begin{proof}[Proof by contradiction.]
  Suppose $A \Andd \Bf(p\Tassign B)$ is not right-consistent. Then
  $\Bf(p\Tassign B) \imp \Not A$ is a right-theorem.  We re-express
  $\Bf(p\Tassign B)$ as $\Bf\bigl((\Fin p)\equiv B\bigr)$.  By this and
  Inference Rule~\refid{BfAux}, the formula $\Not A$ is a right-theorem.
  Therefore $A$ is not right-consistent.
\end{proof}

Lemma~\ref{temporal-operator-bf-and-consistency-lem} readily generalises to
reduce a formula's right-consistency to that for a conjunction of it and a
suitable ATA:
\begin{mylemma}[The Temporal Operator $\Bf$, ATAs and Right-Consistency]
  \label{temporal-operator-bf-atas-and-consistency-lem}
  Let $A$ be a $\PITL$ formula and $D$ an ATA with no auxiliary variables in
  $A$.  If $A$ is right-consistent, so is the formula $A \Andd \Bf D$.
\end{mylemma}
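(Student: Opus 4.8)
The plan is to argue by induction on the number $n$ of conjuncts in the ATA $D=\bigwedge_{1\le i\le n}(q_i\Tassign A_i)$, peeling off the last conjunct at each step and reducing to the single-variable case already established in Lemma~\ref{temporal-operator-bf-and-consistency-lem}. For the base case $n=0$, the conjunction $D$ is empty, so $\Bf D$ reduces to $\Bf\True$, which is a theorem; hence $A\Andd\Bf D$ is deducibly equivalent to $A$ and is therefore right-consistent whenever $A$ is.

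For the inductive step, suppose the claim holds for all ATAs with fewer than $n$ conjuncts, and let $D'=\bigwedge_{1\le i\le n-1}(q_i\Tassign A_i)$. Note that $D'$ is itself an ATA (each $A_i$ for $i\le n-1$ uses only auxiliary variables among $q_1,\ldots,q_{i-1}$) and its auxiliary variables $q_1,\ldots,q_{n-1}$ do not occur in $A$, so the induction hypothesis applies: since $A$ is right-consistent, so is $A\Andd\Bf D'$. I would then apply the single-variable Lemma~\ref{temporal-operator-bf-and-consistency-lem} with its ``$A$'' instantiated to the PITL formula $A\Andd\Bf D'$, its ``$p$'' to $q_n$, and its ``$B$'' to $A_n$. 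The hypotheses of that lemma require that $q_n$ occur neither in $A\Andd\Bf D'$ nor in $A_n$, which is exactly where the ordering discipline of ATAs is used: $A_n$ may contain auxiliary variables only among $q_1,\ldots,q_{n-1}$, hence not $q_n$; $q_n$ does not occur in $A$ by assumption; and $q_n$ does not occur in $\Bf D'$ because $D'$ only mentions $q_1,\ldots,q_{n-1}$ together with $A_1,\ldots,A_{n-1}$, none of which contains $q_n$. The lemma therefore yields the right-consistency of $A\Andd\Bf D'\Andd\Bf(q_n\Tassign A_n)$.

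It then remains to identify this last formula with $A\Andd\Bf D$. This uses the deducible (right-theorem) distribution of $\Bf$ over conjunction, namely $\Bf D\equiv\Bf D'\And\Bf(q_n\Tassign A_n)$ (compare the valid formula $\Bf(A\And B)\equiv(\Bf A\And\Bf B)$ noted in Section~\ref{pitl-sec}); conjoining with $A$ and applying the Right Replacement Rule (Lemma~\ref{right-replacement-derived-rule-lemma-for-pitl-lem}) gives the right-theorem equivalence $A\Andd\Bf D\equiv A\Andd\Bf D'\Andd\Bf(q_n\Tassign A_n)$, and right-consistency is preserved under deducibly equivalent formulas. Hence $A\Andd\Bf D$ is right-consistent, completing the induction. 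The only delicate point I anticipate is bookkeeping of variable occurrences: one must verify at each stage that the head variable $q_n$ being split off appears nowhere else in the formula, which is guaranteed precisely by the ATA requirement that each $A_i$ mention only the strictly earlier auxiliary variables $q_1,\ldots,q_{i-1}$.
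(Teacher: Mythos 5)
Your proposal is correct and follows essentially the same route as the paper: the paper likewise applies Lemma~\ref{temporal-operator-bf-and-consistency-lem} $n$ times (your induction is just an explicit unrolling of this) and then collapses the resulting conjunction of $\Bf$-formulas into a single $\Bf D$ via the deducible distribution $\Bf(A\And B)\equiv\Bf A\And\Bf B$ (Theorem~\refid{BfAndEqv}). Your variable-occurrence bookkeeping for the ordering discipline of ATAs is exactly the point the paper relies on implicitly.
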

\begin{proof}
  For some $n\ge 0$, the ATA $D$ contains $n$ auxiliary variables and has the
  form $\bigwedge_{1\le i\le n} (q_i \Tassign B_i)$.  We first apply
  Lemma~\ref{temporal-operator-bf-and-consistency-lem} $n$ times to reduce the
  formula $A$'s right-consistency to that for the next formula:
  \begin{equation}
    \label{temporal-operator-bf-atas-and-consistency-1-eq}
    \textstyle
    A \Andd \bigwedge_{1\le i\le n} \Bf(q_i \Tassign B_i)
\dotspace.
  \end{equation}
  The conjunction of $\Bf$-formulas is then re-expressed with a single $\Bf$
  (see $\PITL$ Theorem~\refid{BfAndEqv} found in
  Appendix~\ref{some-pitl-theorems-and-their-proofs-sec} and also included in
  the more abbreviated
  Table~\ref{list-of-pitl-theorems-and-derived-rules-mentioned-before-app-table}
  later in
  \S\ref{formal-equivalence-of-the-two-representations-of-runs-subsec}) to
  obtain the formula below which is deducibly equivalent
  to~\eqref{temporal-operator-bf-atas-and-consistency-1-eq}:
  \begin{equation*}
    \textstyle
    A \Andd \Bf\bigl(\bigwedge_{1\le i\le n} q_i \Tassign B_i\bigr)
\dotspace.
  \end{equation*}
  This is the same as our goal $A \Andd \Bf D$.
\end{proof}

\subsection{Overview of Role of PITL Subsets in Rest of Completeness Proof}

\label{overview-of-role-of-pitl-subsets-in-completeness-proof-subsec}

The $\PITL$ completeness proof can now be summarised using the $\PITL$ subsets
just presented. \emph{Some readers may prefer to skip this material and
  proceed directly to the proof which starts in
  Section~\ref{reduction-of-chop-omega-sec}.}  Our goal here is to show that
any right-consistent $\PITL$ formula $A$ is satisfiable.
Here is an informal sequence of the transformations involved:
\begin{displaymath}
  A
  \quad
  \xrightarrow{\text{Section~\ref{reduction-of-chop-omega-sec}}}
  \quad
  K
  \quad
  \xrightarrow{\text{Section~\ref{reduction-of-pitl-to-ptlu-sec}}}
  \quad 
  K'\And \Bf D'
  \quad
  \xrightarrow{\text{Section~\ref{reduction-of-pitl-to-ptlu-sec}}}
  \quad 
  Y\And X
\dotspace,
\end{displaymath}
where $K$ is a $\PITLK$ formula, $K'$ is a $\PITLK$ formula in which the left
operands of all right chops are chain formulas, $D'$ is an ATA and $Y$ and $X$
are respectively in $\PTLU$ and $\PTL$.  If $A$ is right-consistent, then so
are the formulas in all steps. From the completeness of the $\PTLU$ axiom
system as discussed in \S\ref{ptl-with-until-subsec} we have that the
conjunction $Y\And X$ is satisfiable.  Furthermore, our techniques ensure that
the models of a formula obtained from one of the transformations also satisfy
the immediately preceding formula and hence by transitivity the original
$\PITL$ formula $A$ as well.

\begin{mycomment}
  Observe that by the definition of $\PITLK$, any chop-star formula which
  occurs in a $\PITLK$ formula $K$ is located in the left of some chop.  It is
  not hard to see that any such chop-star formula is in fact located somewhere
  in the left side of a right-chop.  Also, it is not hard to check that any
  right-chop of a $\PITLK$ formula is itself in $\PITLK$ since any chop-stars
  in the right-chop must occur in its left operand.
\end{mycomment}

Important automata-theoretic techniques presented in
Sections~\ref{deterministic-semi-automata-and-automata-sec}
and~\ref{compound-semi-automata-for-suffix-recognition-sec} help with the
reductions to $K'\And \Bf D'$ and $Y\And X$ in
Section~\ref{reduction-of-pitl-to-ptlu-sec}.  We show in
Section~\ref{reduction-of-pitl-to-ptlu-sec} that the formulas $K\And \Bf D'$,
$K'\And \Bf D'$ and $Y\And X$ are deducibly equivalent.

Note that in the actual completeness proof (in
Lemma~\ref{completeness-for-pitlk-lem} in
\S\ref{proof-of-the-main-completeness-theorem-subsec}), which for technical
reasons involves a sequence of transformations from $K$ to $K'$, we make use
of a $\PITLK$ formula denoted $K'_{m+1}$ rather than simply $K'$.

\begin{mycomment}
We start by deducing $A$'s equivalence with a $\PITLK$ formula $K$ as a
right-theorem in Section~\ref{reduction-of-chop-omega-sec}.  Let $m$ be the
number of $K$'s right-chops. We obtain from $K$ another $\PITLK$ formula $K'$
(actually $K'_{m+1}$
in~\S\ref{proof-of-the-main-completeness-theorem-subsec}).  Left operands
$B_1,\ldots, B_m$ of all of $K$'s $m$ right-chops are replaced in $K'$ by
chain formulas $G_1,\ldots, G_m$ using techniques developed in
Sections~\ref{deterministic-semi-automata-and-automata-sec}--\relax
\ref{reduction-of-pitl-to-ptlu-sec}.  This exploits an automata-theoretic way
to do \emph{infix recognition}, that is, to test which of an interval's
finite-time infix subintervals satisfy some given $\PITL$ formula.  We employ
$m$ deducible implications of the form below (see later
Lemma~\ref{infix-recognition-lem})), where $B$ is the left operand of one of
$K$'s $m$ right-chops and both the ATA $D$ and chain formula $G$ are
constructed from $B$:
\begin{equation}
  \label{summary-of-role-of-pitl-subsets-in-rest-of-paper-1-eq}
  \TheoremR \Bf D \Implies \Box\Bf(B\equiv G)
  \dotspace.
\end{equation}
The subformula $B$ in the formula $\Box\Bf(B\equiv G)$ can occur in some
finite infix subintervals of the implication's own interval.  The overall
result of replacement in $K'$ is expressed by the right-theorem $\thmR \Bf D'
\imp (K\equiv K')$.  Here the ATA $D'$ is itself a conjunction of the $m$
disjoint ATAs $D_1\ldots, D_m$ obtained from the $m$ instances
of~\eqref{summary-of-role-of-pitl-subsets-in-rest-of-paper-1-eq} for the left
operands $B_1,\ldots, B_m$ of the $K$'s $m$ right-chops.  Now the formula $K
\And \Bf D'$ is right-consistent (by
Lemma~\ref{temporal-operator-bf-atas-and-consistency-lem}) and therefore so is
$K' \And \Bf D'$.  Formulas $K'$ and $\Bf D'$ are then re-expressed as
equivalent $\PTLU$ and $\PTL$ formulas $Y$ and $X$, respectively, so we have
the right-consistent $\PTLU$ formula $Y \And X$.  As we stated in
Section~\ref{right-instances-right-variables-and-right-theorems-sec},
right-completeness is identical to every right-consistent formula being
satisfiable.  So right-completeness for $\PTLU$ yields a model for $Y \And
X$. The model also satisfies $K' \And \Bf D'$ and $K$ and so the original
right-consistent formula $A$ as well. This shows right-completeness for
$\PITL$.

For example, if $K$ is the conjunction $\bigl(B_1\chop (B_2\chop\Empty)\bigr)
\And \Not\Skip$ with two right-chops, we construct from $B_1$ and $B_2$ two
ATAs $D_1$ and $D_2$ and chain formulas $G_1$ and $G_2$.  Take $D'$ to be
$D_1\And D_2$.  Now $K$'s right-consistency and the earlier
Lemma~\ref{temporal-operator-bf-atas-and-consistency-lem} about ATAs ensure $K
\And \Bf D'$ is right-consistent.  We take $K'$ to be $\bigl(G_1\chop
(G_2\chop\Empty)\bigr) \And \Not\Skip$ and have the right-theorem $\thmR \Bf
D' \imp (K\equiv K')$ by suitably using
implication~\eqref{summary-of-role-of-pitl-subsets-in-rest-of-paper-1-eq}.
Hence $K' \And \Bf D'$ is itself right-consistent.  We now re-express it in
$\PTLU$ as $Y \And X$ which is satisfiable and yields a model for $K$ and the
original formula $A$.

Here is an informal sequence of the transformations just presented:
\begin{displaymath}
  A
  \quad
  \xrightarrow{\text{Section~\ref{reduction-of-chop-omega-sec}}}
  \quad
  K
  \quad
  \xrightarrow{\text{Section~\ref{reduction-of-pitl-to-ptlu-sec}}}
  \quad 
  K'\And \Bf D'
  \quad
  \xrightarrow{\text{Section~\ref{reduction-of-pitl-to-ptlu-sec}}}
  \quad 
  Y\And X
\dotspace.
\end{displaymath}
As we noted earlier,
Sections~\ref{deterministic-semi-automata-and-automata-sec}
and~\ref{compound-semi-automata-for-suffix-recognition-sec} contain important
automata-theoretic techniques which help with the reductions to $K'\And \Bf D'$
and $Y\And X$ in Section~\ref{reduction-of-pitl-to-ptlu-sec}.  For example, in
Section~\ref{compound-semi-automata-for-suffix-recognition-sec} we show the
validity of a version of
implication~\eqref{summary-of-role-of-pitl-subsets-in-rest-of-paper-1-eq}
which is restricted to finite time and recognises \emph{suffix subintervals}
rather than infix ones:
\begin{equation*}
  \Valid \Finite \And \Bf D \Implies \Box(B\equiv G)
  \dotspace.
\end{equation*}

\end{mycomment}


\begin{mycomment}
Table~\ref{summary-of-completeness-proof-table} shows the various remaining
steps in the completeness proof for $\PITL$.  These
primarily concern automata-theoretic techniques to obtain axiomatic
completeness for $\PITLK$.
\begin{table*}
  \begin{center}
    \begin{tabular}{lll}
      Reduction of right-completeness for $\PITL$ to it for $\PITLK$:
      & Section~\ref{reduction-of-chop-omega-sec} \\
      Introduction of deterministic finite-state semi-automata and automata:
      & Section~\ref{deterministic-semi-automata-and-automata-sec} \\
      Introduction of a compound semi-automaton for suffix recognition:
      & Section~\ref{compound-semi-automata-for-suffix-recognition-sec} \\
      Reduction of right-completeness for  $\PITLK$ to $\PTLU$:
      & Section~\ref{reduction-of-pitl-to-ptlu-sec}
    \end{tabular}
    \caption{Summary of remainder of $\PITL$ axiomatic completeness proof}
    \label{summary-of-completeness-proof-table}
  \end{center}
\end{table*}
\end{mycomment}

\section{Reduction of Chop-Omega}

\label{reduction-of-chop-omega-sec}

If we assume right-completeness for $\PITLK$ (later proved as
Lemma~\ref{completeness-for-pitlk-lem}
in~\S\ref{proof-of-the-main-completeness-theorem-subsec}), then obtaining from
a $\PITL$ formula a deducibly equivalent $\PITLK$ one is relatively easy.  We
first look at re-expressing chop-omega formulas in $\PITLK$ and then extend
this to arbitrary $\PITL$ formulas.

\begin{mylemma}[Deducible Re-Expression of Chop-Omega in $\PITLK$]
  \label{deducible-re-expression-of-chop-omega-lem}
  Suppose we have right-completeness for $\PITLK$.  Then for any $\PITL$
  formula $B$, there exists a $\PITLK$ formula $K$ with the same variables and
  no right-variables and for which the equivalence $K\equiv B\Chopomega$ is a
  right-theorem (i.e., $\thmR K\equiv B\Chopomega$).
\end{mylemma}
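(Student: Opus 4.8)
The plan is to first obtain $K$ semantically and then deduce the two implications $K\imp B\Chopomega$ and $B\Chopomega\imp K$ separately, combining them into the required equivalence. For the semantic step I would invoke Thomas' theorem (Theorem~\ref{thomas'-theorem-thm}), by which the omega-language of $B\Chopomega$ over the alphabet of assignments to $B$'s variables lies in the closure of $\{\emptyset\}$ under union, complementation and left concatenation by regular languages. Translating this closure into $\PITL$ yields a $\PITLK$ formula $K$ with $\vld K\equiv B\Chopomega$: union becomes $\Or$, complementation becomes $\Not$, and left concatenation by a regular language becomes a chop whose left operand is a Kleene-iterated state description, so every chop-star sits on the left of a chop as $\PITLK$ requires. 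Since in this translation each propositional variable occurs only inside such regular left operands (on the left of a chop), $K$ has no right-variables, matching $B\Chopomega$; and by padding with valid tautologies that mention any missing variable of $B$ on the left of a chop, I can arrange that $K$ has exactly the variables of $B$ while preserving both its $\PITLK$ form and the absence of right-variables.

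For the direction $\thmR K\imp B\Chopomega$ I would use the chop-omega induction Axiom~\refid{ChopOmegaInduct}, instantiated with $A:=K$ and the given $B$, namely $K\And\Box\bigl(K\imp(B\And\More)\chop K\bigr)\imp B\Chopomega$. The premise $\Box\bigl(K\imp(B\And\More)\chop K\bigr)$ is a $\PITLK$ formula (the $\Box$, the implication, and the chop with $B$ on its left all keep every chop-star on the left of a chop), and it is valid: at any interval satisfying $K$, i.e.\ $B\Chopomega$, the first non-empty $B$-piece begins at the initial state, since all preceding empty pieces sit there, so $(B\And\More)\chop K$ holds. Hence by the assumed right-completeness for $\PITLK$ it is a right-theorem, and the axiom instance, having no right-variables, is trivially a right-theorem; propositional reasoning then yields $\thmR K\imp B\Chopomega$.

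The direction $\thmR B\Chopomega\imp K$ is the main obstacle. Unfolding alone does not suffice: both $K$ and $B\Chopomega$ are fixpoints of $Z\mapsto(B\And\More)\chop Z$ — indeed $\thmR K\equiv(B\And\More)\chop K$ is a valid $\PITLK$ equivalence and hence a right-theorem — but the fixpoint property only licenses the coinductive direction $K\imp B\Chopomega$, not its converse, because $B\Chopomega$ is deductively the \emph{greatest} such fixpoint. I would therefore argue the contrapositive $\Not K\imp\Not B\Chopomega$, exploiting that the failure of $B\Chopomega$ on an infinite interval is witnessed at a finite prefix after which no $B$-continuation survives (a well-foundedness phenomenon underlying the complementation step of Thomas' theorem). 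Concretely I would express this ``stuck'' condition as a finite-time property and discharge the resulting implications using finite-time completeness (Theorem~\ref{completeness-of-pitl-axiom-system-for-finite-time-thm}) together with the deducible unfolding $\thmR K\equiv(B\And\More)\chop K$, lifting the finite witnesses back to the infinite interval via $\Df$ and $\Box$. Making this reduction precise — in particular isolating a chop-omega-free, $\PITLK$-expressible witness and verifying its validity — is where the real work lies, and it is the step I expect to be hardest.

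Finally, from $\thmR K\imp B\Chopomega$ and $\thmR B\Chopomega\imp K$ ordinary propositional reasoning gives $\thmR K\equiv B\Chopomega$. The right-theorem bookkeeping is routine throughout: since neither $K$ nor $B\Chopomega$ contributes any right-variable, every intermediate formula is free of right-variables, so each step is a right-theorem as soon as it is a theorem, and no variable is ever forced onto the left of a chop or into a chop-star in a way that would violate right-theoremhood.
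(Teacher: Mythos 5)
Your semantic construction of $K$ via Thomas' theorem and your deduction of $\thmR K\imp B\Chopomega$ (instantiating Axiom~\refid{ChopOmegaInduct} with $A:=K$, discharging the valid $\PITLK$ premise $K\imp(B\And\More)\chop K$ by the assumed right-completeness for $\PITLK$, then $\Box$-generalising) match the paper exactly. The problem is the converse direction, which you correctly identify as the crux but then leave genuinely open: your proposed route through the contrapositive $\Not K\imp\Not B\Chopomega$, a ``stuck'' finite prefix and a well-foundedness argument is not carried out, and as sketched it would amount to embedding the complementation step of Thomas' theorem inside the deductive system --- precisely the kind of explicit omega-automata reasoning the whole proof is designed to avoid. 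You have not isolated the witness formula, shown it is $\PITLK$-expressible, or shown how $\Df$ and $\Box$ lift it back; so this direction is a gap, not a proof.

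The paper closes this direction with a short substitution trick you are missing. Take a fresh propositional variable $p$ and note that the instance $p\And\Box\bigl(p\imp(B\And\More)\chop p\bigr)\imp B\Chopomega$ of Axiom~\refid{ChopOmegaInduct} is valid; replacing the conclusion by the semantically equivalent $K$ gives the valid formula $p\And\Box\bigl(p\imp(B\And\More)\chop p\bigr)\imp K$, which \emph{is} a $\PITLK$ formula (every chop-star of $B$ sits in the left of a chop, and $p$ occurs only on the right of chops). The assumed right-completeness for $\PITLK$ therefore makes it a right-theorem with $p$ as a right-variable, and Lemma~\ref{substitution-instances-of-right-theorems-lem} lets you substitute $B\Chopomega$ for $p$, yielding $\thmR B\Chopomega\And\Box\bigl(B\Chopomega\imp(B\And\More)\chop B\Chopomega\bigr)\imp K$. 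The remaining premise $\thmR B\Chopomega\imp(B\And\More)\chop B\Chopomega$ follows from the definition of chop-omega and Axiom~\refid{SChopStarEqv}, and $\Box$-generalisation plus propositional reasoning finishes the direction. In short: the greatest-fixpoint obstacle you diagnose is real, but it is overcome by applying the coinduction axiom to a placeholder variable and then substituting into a right-variable --- not by a complementation-style contrapositive argument.
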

\begin{proof}
  Thomas' theorem (Theorem~\ref{thomas'-theorem-thm}) ensures that there
  exists some $\PITLK$ formula which is semantically equivalent to
  $B\Chopomega$ and contains the same variables.  From that formula we obtain
  one denoted here as $K$ which has no right-variables by conjoining a
  trivially true $\Df$-formula containing a disjunction of all of $B$'s
  variables and their negations.  We therefore have $\vld K\equiv B\Chopomega$
  and now deduce $\thmR K\equiv B\Chopomega$:

  \textbf{Case for showing $\boldsymbol{\theoremR K\imp B\Chopomega}$:}

  The first step involves an instance of Axiom~\refid{ChopOmegaInduct}:
  \begin{equation}
    \label{deducible-re-expression-of-chop-omega-1-eq} 
    \theoremR K \Andd \Box(K \imp (B \And \More)\chop K) \Implies B\Chopomega
\dotspace.
  \end{equation}
  In addition, the next formula is valid:
  \begin{displaymath}
    \Valid B\Chopomega \implies (B \And \More)\chop B\Chopomega
\dotspace.
  \end{displaymath}
  From this and $\vld K \equiv B\Chopomega$, we have $\vld K \imp (B \And
  \More)\chop K$.  We then use the assumed right-completeness of $\PITLK$ to
  deduce the implication as a right-theorem.  Now invoke $\Box$-generalisation
  (Axiom~\refid{BoxGen}) on this to obtain $\thmR \Box(K \imp (B \And
  \More)\chop K)$.  Simple propositional reasoning involving that and the
  earlier deduced
  implication~\eqref{deducible-re-expression-of-chop-omega-1-eq} establishes
  our immediate goal $\theoremR K\imp B\Chopomega$.

  \textbf{Case for showing $\boldsymbol{\theoremR B\Chopomega \imp K}$}:

  Let $p$ be a propositional variable not in $B\Chopomega$ or $K$.  The next
  formula is valid (and an instance of Axiom~\refid{ChopOmegaInduct}):
  \begin{displaymath}
    \Valid p \Andd \Box(p \imp (B \And \More)\chop p) \Implies B\Chopomega
\dotspace.
  \end{displaymath}
  We then replace $B\Chopomega$ by the semantically equivalent $K$:
  \begin{equation}
    \label{deducible-re-expression-of-chop-omega-2-eq} 
    \Valid p \Andd \Box(p \imp (B \And \More)\chop p) \Implies K
\dotspace.
  \end{equation}
  Now $K$ is a $\PITLK$ formula and furthermore $(B \And \More)\chop p$ is as
  well since even if $B$ does contain some chop-stars, $B$ is located within
  the left of a chop.  The valid
  formula~\eqref{deducible-re-expression-of-chop-omega-2-eq} is in $\PITLK$
  and hence a right-theorem by the assumed right-completeness for $\PITLK$:
  \begin{equation*}
    \TheoremR p \Andd \Box(p \imp (B \And \More)\chop p) \Implies K
\dotspace.
  \end{equation*}
  Therefore, we can use
  Lemma~\ref{substitution-instances-of-right-theorems-lem} to obtain the
  theoremhood of the next $\PITL$ implication which has the formula
  $B\Chopomega$ substituted into the right-variable $p$:
  \begin{equation}
    \label{deducible-re-expression-of-chop-omega-3-eq} 
    \theoremR\; B\Chopomega
      \And \Box\bigl(B\Chopomega \imp (B \And \More)\chop B\Chopomega\bigr)
        \Implies K
\dotspace.
  \end{equation}
  We also deduce the following from the definition of chop-omega in terms of
  chop-star together with Axiom~\refid{SChopStarEqv} and some simple temporal
  reasoning:
  \begin{displaymath}
     \TheoremR B\Chopomega \implies (B \And \More)\chop B\Chopomega
\dotspace.
  \end{displaymath}
  We now do $\Box$-generalisation (Axiom~\refid{BoxGen}) on this and then use
  propositional reasoning on it with the previous
  formula~\eqref{deducible-re-expression-of-chop-omega-3-eq} to obtain the
  right-theorem $\thmR B\Chopomega \imp K$, which is our immediate goal.
\end{proof}

\begin{mylemma}[Reduction of $\PITL$ to $\PITLK$]
  \label{reduction-of-pitl-pitlk-lem}
  If right-completeness holds for $\PITLK$, then for any $\PITL$ formula $A$,
  there exists an equivalent $\PITLK$ formula $K$ with exactly the same
  propositional variables and right-variables such that $\thmR A\equiv K$.
\end{mylemma}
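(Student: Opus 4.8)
The plan is to prove this by induction on the syntax of the $\PITL$ formula $A$, building the equivalent $\PITLK$ formula $K$ compositionally while checking at each step that the propositional variables and right-variables are preserved. The observation that drives the construction is that a chop-star occurrence is admissible in $\PITLK$ precisely when it lies within the left operand of some chop, so the only occurrences that must be removed are the chop-stars which are right-instances. For the reassembly at each inductive step I would repeatedly invoke the Right Replacement Rule (Lemma~\ref{right-replacement-derived-rule-lemma-for-pitl-lem}), which lets us substitute a right-instance $B_1$ by a deducibly equivalent $B_2$ inside a larger formula while preserving right-theoremhood, so that the inductive equivalences assemble into $\thmR A\equiv K$.

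For $A\in\{\True,p,\Skip\}$ I take $K=A$, which is already in $\PITLK$ and has the same variables and right-variables. For $A=\Not C$ and $A=C_1\Or C_2$ I apply the induction hypothesis to the immediate subformulas and set $K=\Not K_C$ and $K=K_1\Or K_2$; since negation and disjunction introduce neither chops nor chop-stars, the results stay in $\PITLK$, the variable sets are unchanged, and the right-variables are preserved because an instance inside $C$ (resp.\ $C_i$) is a right-instance of the whole formula exactly when it is one of $C$ (resp.\ $C_i$). The equivalences $\thmR\Not C\equiv\Not K_C$ and $\thmR (C_1\Or C_2)\equiv(K_1\Or K_2)$ then follow from the Right Replacement Rule, since $C$ and each $C_i$ occur as right-instances. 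For $A=C_1\chop C_2$ I leave the left operand $C_1$ completely untouched and recurse only on the right operand to obtain $K_2$, setting $K=C_1\chop K_2$: every chop-star in $C_1$ already sits within the left operand of the outer chop and is therefore $\PITLK$-admissible, while $K_2\in\PITLK$ by hypothesis, so $C_1\chop K_2\in\PITLK$. As the right operand of a chop is a right-instance, the Right Replacement Rule applied with $\thmR C_2\equiv K_2$ gives $\thmR (C_1\chop C_2)\equiv(C_1\chop K_2)$, and one checks $\mathit{RV}(C_1\chop C_2)=\mathit{RV}(C_1\chop K_2)$, since the variables of $C_1$ are disqualified as right-variables in both formulas and the right-variables contributed by the right operand agree.

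The crucial case is $A=C\SChopstar$, where the top-level chop-star is a right-instance and must be eliminated. Here I would first split $A$ into its finite and infinite parts via the deducible equivalence
\begin{equation*}
  \theoremR C\SChopstar \,\equiv\, (C\SChopstar \chop \Empty) \,\Or\, C\Chopomega \dotspace.
\end{equation*}
This is obtained directly by case analysis on the valid dichotomy $\Finite\Or\Inf$ (deducible by Axiom~\refid{VPTL}): on finite intervals Axiom~\refid{FiniteImpChopEmpty} gives $C\SChopstar\chop\Empty\equiv C\SChopstar$ while $C\Chopomega$, being $C\SChopstar\And\Inf$, is false; on infinite intervals $C\SChopstar\chop\Empty$ is false (since a chop whose right operand is $\Empty$ forces the interval to be finite) while $C\SChopstar\equiv C\Chopomega$ holds by the definition $C\Chopomega\equiv C\SChopstar\And\Inf$. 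The left disjunct $C\SChopstar\chop\Empty$ is already in $\PITLK$, as its chop-star lies in the left operand of a chop. For the right disjunct I would apply Lemma~\ref{deducible-re-expression-of-chop-omega-lem} to obtain a $\PITLK$ formula $K_\omega$ with the same variables as $C$, no right-variables, and $\thmR K_\omega\equiv C\Chopomega$; replacing the right-instance $C\Chopomega$ by $K_\omega$ via the Right Replacement Rule yields $K=(C\SChopstar\chop\Empty)\Or K_\omega$, which is in $\PITLK$, shares the variables of $A$, and has no right-variables exactly as $C\SChopstar$ has none. This is the only point at which the assumption of right-completeness for $\PITLK$ is used, and it enters solely through Lemma~\ref{deducible-re-expression-of-chop-omega-lem}.

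The main obstacle I anticipate is the bookkeeping rather than any deep deduction: one must verify in every case that the constructed $K$ has exactly the same variable set and right-variable set as $A$, and this analysis interacts subtly with which subformula positions are right-instances (left operands of chops and chop-star bodies never contribute right-variables, whereas negation, disjunction and the right operand of a chop propagate them). The only genuinely deductive work lies in the finite/infinite split of the chop-star case, and because $C\SChopstar$ has no right-variables this equivalence is automatically a right-theorem, so no delicate tracking of right-variables through its own proof is needed. Assembling all cases then produces the required $\PITLK$ formula $K$ with the same variables and right-variables as $A$ and with $\thmR A\equiv K$.
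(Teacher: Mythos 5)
Your proof is correct and follows essentially the same route as the paper's: both split each offending chop-star via $\thmR B\SChopstar\equiv(B\SChopstar\chop\Empty)\Or B\Chopomega$ (the paper's Theorem~\refid{SChopStarEqvSChopstarChopEmptyOrChopOmega}), eliminate the chop-omega part using Lemma~\ref{deducible-re-expression-of-chop-omega-lem}, and reassemble with the Right Replacement Rule. The only difference is presentational: you organise this as a structural induction that locates the right-instance chop-stars case by case, whereas the paper performs one simultaneous replacement of all chop-stars not in the left of a chop or another chop-star, which is the same set of occurrences.
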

\begin{proof}
  We first re-express each of $A$'s chop-stars $B_i\SChopstar$ not in the left
  of chop or another chop-star using the next deducible equivalence (see
  $\PITL$ Theorem~\refid{SChopStarEqvSChopstarChopEmptyOrChopOmega} found in
  Appendix~\ref{some-pitl-theorems-and-their-proofs-sec} and also included in
  the more abbreviated
  Table~\ref{list-of-pitl-theorems-and-derived-rules-mentioned-before-app-table}
  in \S\ref{formal-equivalence-of-the-two-representations-of-runs-subsec}):
  \begin{equation}
    \label{reduction-of-pitl-pitlk-1-eq}
    \TheoremR B_i\SChopstar \Equiv (B_i\SChopstar\chop\Empty) \Or B_i\Chopomega
\dotspace.
  \end{equation}
  This splits $B_i\SChopstar$ into cases for finite and infinite time.  Note
  that there there are no right-variables
  in~\eqref{reduction-of-pitl-pitlk-1-eq} since any
  variables occur in a chop-star.  Hence the equivalence, once deduced, is
  trivially a right-theorem.

  Lemma~\ref{deducible-re-expression-of-chop-omega-lem} ensures some $\PITLK$
  formula $K'_i$ exists with the same variables as $B_i$, no right-variables
  and the right-theorem $\thmR K'_i \equiv B_i\Chopomega$.
  Hence like~\eqref {reduction-of-pitl-pitlk-1-eq},
  the next equivalence is a right-theorem and both sides have the same
  variables and no right-variables:
  \begin{equation*}
    \TheoremR B_i\SChopstar \Equiv (B_i\SChopstar\chop\Empty) \Or K'_i
\dotspace.
  \end{equation*}
  Then Right Replacement
  (Lemma~\ref{right-replacement-derived-rule-lemma-for-pitl-lem}) in $A$ of
  each $B_i\SChopstar$ by $(B_i\SChopstar\chop\Empty)\Or K'_i$ yields a
  $\PITLK$ formula $K$ which the same variables as $A$ and equivalent to it
  (i.e., $\thmR A\equiv K$).  No right-variables in $A$ are in any replaced
  $B_i\SChopstar$. Hence $A$ and $K$ have the same right-variables.
\end{proof}

\section{Deterministic Finite-State Semi-Automata And Automata}

\label{deterministic-semi-automata-and-automata-sec}

The remainder of our axiomatic completeness proof for $\PITL$ mostly concerns
reducing $\PITLK$ to $\PTLU$.  Now $\PITL$ with finite time expresses the
regular languages and can readily encode regular expressions (see for
example~\cite{Moszkowski04a} which reproduces our results with J.~Halpern
in~\cite{Moszkowski83a}).  We can therefore employ some kinds of
\emph{deterministic finite-state semi-automata and automata} which provide a
convenient low-level framework for finite time to encode the behaviour of an
arbitrary $\PITL$ formula.  Our completeness proof utilises these
semi-automata and automata to build a variant semi-automaton discussed in the
next Section~\ref{compound-semi-automata-for-suffix-recognition-sec} to assist
in reducing $\PITL$ formulas on the left of right-chops to chain formulas in
$\PTLU$.  The reduction applying these techniques to go from $\PITLK$ to
$\PTLU$ is presented in Section~\ref{reduction-of-pitl-to-ptlu-sec}.

After introducing the semi-automata and automata, we will consider various
semantically equivalent ways to represent them in temporal logic, each with
its benefits.  Some require $\PITL$ and others just $\PTL$.  The
representations in $\PITL$ are at a higher level and fit well with our proof
system, especially since we can assume completeness for $\PITL$ with finite
time.  In some later sections, we consider deducing some of the properties as
theorems.

In order to define an \emphbf{alphabet} for our semi-automata and automata, we
introduce a special kind of state formula which serves as a \emphbf{letter}
and is called here an \emphbf{atom}.  An atom is any finite conjunction in
which each conjunct is some propositional variable or its negation and no two
conjuncts share the same variable.  The Greek letters $\alpha$ and $\beta$
denote an individual atom.  For any finite set of propositional variables
$\VV$, let $\boldsymbol{\Sigma_V}$ be some set of $2^{\size{V}}$ logically
distinct atoms containing exactly the variables in $\VV$.  For example, if
$\VV=\{p,q\}$, we can let $\Sigma_{\VV}$ be the set of the four atoms shown
below:
\begin{displaymath}
  p \And q
  \qquad
  p \And \Not q
  \qquad
  \Not p \And q
  \qquad
  \Not p \And \Not q
\dotspace.
\end{displaymath}
One simple convention is to assume that the propositional variables in an atom
occur from left to right in lexical order.  If $V$ is the empty set, then 
 $\Sigma_V$ contains just the formula $\True$.

 A finite, nonempty sequence of atoms form a \emphbf{word}.  Each possible
 word corresponds to some collective state-by-state behaviour of the selected
 variables in a finite interval.  For our interval-oriented application of
 words we never utilise the word containing no letters (commonly denoted
 $\epsilon$ in the literature).

\subsection{Deterministic Finite-State Semi-Automata}

\label{deterministic-finite-state-semi-automata-subsec}

We define a \emphbf{deterministic finite-state semi-automaton} $S$ to be a
quadruple $(V_S, Q_S, q_S^I, \delta_S)$ consisting of a finite set of
propositional variables $V_S$, together with a finite, nonempty set of
\emphbf{control states} $Q_S=\{q_1, \ldots, q_m\}$, an \emphbf{initial control
  state} $q_S^I\in Q_S$ and a \emphbf{deterministic transition function}
$\delta_S\colon Q_S\times \Sigma_{V_S} \rightarrow Q_S$.  The sets $V_S$ and
$Q_S$ must be disjoint, i.e., $V_S\intersection Q_S=\emptyset$.  We use
propositional variables $q_1, \dots, q_m$ to denote control states since this
helps when expressing the semi-automaton's behaviour in $\PITL$.  A
\emphbf{run} on a finite word $\alpha_1\ldots \alpha_k$ in $\Sigma_{V_S}^+$
with $k$ atoms is a sequence of $k$ control states $q'_1\ldots q'_k$ all in
$Q_S$ with $q'_1=q_S^I$ and $\delta_S(q'_i,\alpha_i)=q'_{i+1}$ for each
$i\colon 1\le i\lt k$.  Hence the semi-automaton makes just $k-1$ transitions
and consequently ignores the details of the last atom $\alpha_k$.  Therefore
the semi-automaton differs from a conventional automaton which would have a
run with $k+1$ control states involving $k$ transitions and the examination of
all $k$ atoms.  Furthermore, the definition of a semi-automaton has no set of
\emph{final control states} and hence no \emph{acceptance condition}.  We
abbreviate the set of atoms $\Sigma_{V_S}$ as $\Sigma_S$ since the elements of
$\Sigma_{V_S}$ serve as $S$'s letters.

The semi-automaton $S$'s behaviour is expressible in temporal logic by
regarding each control state $q_i$ to be a propositional variable which is
true when $q_i$ is $S$'s current control state. Before showing how $S$'s runs
are expressed in $\PTL$, we first define a state formula $\init_S$ which
ensures that the initial control state is $q_S^I$ and also a transitional
formula $T_S$ in $\NLone$ which captures the behaviour of $\delta_S$:
\begin{center}
  \begin{tabular}{L@{\enspace}L}
    \init_S\colon
      & \displaystyle
        q_S^I \And \bigwedge_{q \in Q_S\colon q\ne q_S^I}\Not q \\[3pt]
    T_S\colon
      & \displaystyle
        \bigwedge_{q \in Q_S}
        \bigl((\Next q) \equiv
        \bigvee_{q' \in Q_S}
	\,
        \bigvee_{\alpha\in \Sigma_S\colon \delta_S(q',\alpha)=q}
            (q' \And \alpha)\bigr)
\dotspace.
  \end{tabular}
\end{center}  
If we assume finite time, then a run starting at $S$'s initial control state
is expressed as the $\PTL$ formula $\init_S \And \Box(\More \imp T_S)$ or
alternatively as the chain formula $\init_S \And (T_S \Until \Empty)$ in
$\PTLU$.

\subsection{Deterministic Finite-State Automata}

\label{deterministic-finite-state-automata-subsec}

Semi-automata do not have an acceptance test and hence do not have associated
\emph{accepting} runs.  We therefore now define a \emphbf{deterministic
  finite-state automaton} which includes an acceptance test.  As we shortly
illustrate, this can be constructed to recognise a given $\PITL$ formula in a
finite interval.  Let $M$ be a quintuple $(V_M, Q_M, q_M^I, \delta_M,
\tau_M)$.  The first four entries are as for a semi-automaton.  The last entry
$\tau_M \colon Q_M\rightarrow 2^{\Sigma_M}$ is a \emphbf{conditional
  acceptance function} from control states to sets of letters.  A run is the
same as for a semi-automaton.  Our notion of acceptance of a word does not use
a conventional \emph{set of final control states} but instead has the function
$\tau_M$ make all control states \emphbf{conditionally final}.  An
\emphbf{accepting run} on a finite word $\alpha_1\ldots \alpha_k$ in
$\Sigma_M^+$ with $k$ atoms is any run of $k$ control states $q'_1\ldots q'_k$
with $q'_k\in\tau_M(\alpha_k)$.  Therefore, a control state $q\in Q_M$ is
regarded as a final one only when the automaton sees an atom $\alpha$ with
$\alpha\in\tau_M(q)$.  A test for this is expressible as the state formula
$\acc_M$ defined below:
\begin{displaymath}
  \acc_M \colon\enspace 
    \bigvee_{q \in Q_M}\bigvee_{\alpha\in \tau_M(q)} (q \And \alpha)
\dotspace.
\end{displaymath}

If we assume finite time, an accepting run of $M$ starting at $M$'s initial
control state is expressed as the $\PTL$ formula $\init_M \And \Box(\More \imp
T_M) \And \Fin\!\acc_M$ or alternatively as the chain formula $\init_M \And
(T_M \Until (\acc_M \And \Empty))$ in $\PTLU$. As a result of our convention
for runs and accepting runs, the automaton $M$'s  operation requires one state
less than a conventional one to accept a word.  For example, it can accept
one-letter words without the need for any state transitions. In fact, such an
automaton $M$ only recognises words with at least one letter (i.e., in
$\Sigma_M^+$).  This is perfect when we utilise semi-automata and automata to
mimic $\PITL$ formulas since $\ITL$ intervals have at least one state.

The regular expressiveness of $\PITL$ with finite time ensures that any
$\PITL$ formula $B$ can be recognised by some $M$.  The set $V_B$ of
propositional variables in $B$ and the set $Q_M$ of $M$'s control states are
assumed to be distinct.  Formally, we have the next valid formula expressed in
$\QPITL$ (defined in Section~\ref{pitl-sec}):
\begin{equation*}
  \Valid \Finite \Implies
     B \EQUIV \Exists{q_1,\ldots,q_{\size{Q_M}}}
      \bigl(\init_M \Andd \Box(\More \imp T_M) \Andd \Fin\!\acc_M\bigr)
\dotspace.
\end{equation*}
For instance, below is a sample automaton $M$ to recognise finite intervals
satisfying the formula $(\Skip \And p)\chop \Skip \chop \Skip\SChopstar \chop
(\Empty \And \Not p)$, which is semantically equivalent to the $\PTL$ formula
$p \And \Next\Next\Diamond(\Empty \And \Not p)$:
\begin{equation}
  \label{deterministic-semi-automaton-and-automaton-1-eq}
  \begin{array}{l}
    V_M=\{p\} \text{\ (so $\Sigma_M=\{p, \Not p\}$)}
      \quad Q_M = \{ q_1, q_2, q_3, q_4 \}
      \quad q_M^I=q_1 \\
    \delta_M(q_1,p) = q_2 \quad \delta_M(q_1,\Not p) = q_4
      \quad \delta_M(q_2,p) =\delta_M(q_2,\Not p) = q_3 \\
    \delta_M(q_3,p) =\delta_M(q_3,\Not p) = q_3
      \quad \delta_M(q_4,p) =\delta_M(q_4,\Not p) = q_4 \\
    \tau_M(q_1)=\tau_M(q_2)=\tau_M(q_4)=\{\}
      \quad \tau_M(q_3)=\{\Not p\}
  \end{array}
\end{equation}
Here is an accepting run for the 5-letter word $p\,\Not p\, p\,p\, \Not
p$:\quad $q_1\, q_2\, q_3\, q_3\, q_3$:
\begin{center}
  \begin{tabular}{L@{\enspace}L}
    \init_M\colon & q_1 \And \Not q_2 \And \Not q_3 \And \Not q_4
      \quad \acc_M\colon\enspace q_3 \And \Not p \\
    T_M\colon & (\Next q_1)\equiv \False
      \,\Andd\, (\Next q_2)\equiv (q_1 \And p) \\
    & \Andd\, (\Next q_3)\equiv (q_2 \Or q_3)
        \,\Andd\, (\Next q_4)\equiv ((q_1 \And \Not p) \Or q_4) \\
    \multicolumn{2}{l}{Accepting run in $\PTL$:
      $\Finite \And \init_M \And \Box(\More \imp T_M) \And \Fin\!\acc_M$}
  \end{tabular}
\end{center}
Below are the values of $q_1, \ldots, q_4$ over an associated 5-state interval
in which $p$ has the behaviour $p\,\Not p\, p\, p\, \Not p$:
\begin{equation}
  \label{deterministic-semi-automaton-and-automaton-2-eq}
  \begin{array}{l}
   (\boldsymbol{q_1}, \Not q_2, \Not q_3, \Not q_4)
      \quad (\Not q_1, \boldsymbol{q_2}, \Not q_3, \Not q_4)
      \quad (\Not q_1, \Not q_2, \boldsymbol{q_3}, \Not q_4) \\
    \qquad (\Not q_1, \Not q_2, \boldsymbol{q_3}, \Not q_4)
      \quad (\Not q_1, \Not q_2, \boldsymbol{q_3}, \Not q_4)
\dotspace.
  \end{array}
\end{equation}
In each tuple, we show the unique active control state in boldface.  For
instance, $q_2$ is true in the second interval state since $q_1 \And p$ is
true in the first one.

\subsection{ATAs for Semi-Automata and Automata}

\label{atas-for-semi-automata-and-automata-subsec}

The runs of a deterministic semi-automaton or deterministic automaton from the
initial control state can alternatively be expressed with an ATA (defined in
\S\ref{auxiliary-temporal-assignments-subsec}).  We will consider the case for
a semi-automaton $S$, but the technique is identical for an automaton $M$.
Now $\PITL$ with finite time can express all regular languages in
$\Sigma_S^+$.  For each control state $q$ of $S$, the set of words in
$\Sigma_S^+$ for which $S$ starts in the initial control state $q_S^I$ and
ends in $q$ is regular.  The regular expressiveness of $\PITL$ with finite
time ensures that there exists some corresponding $\PITL$ formula $C_{S,q}$
which only has variables in the set $\VV_S$ and expresses this set of words.
In principle, such a formula can be obtained by adapting standard techniques
for constructing a regular expression from a conventional finite-state
automaton.  Now let the ATA $D_S$ denote the conjunction $ \bigwedge_{q \in
  Q_S}(q\Tassign C_{S,q})$. We express finite runs in $\PITL$ using $\Finite
\And \Bf D_S$.  Here is such an ATA for the earlier sample automaton
in~\eqref{deterministic-semi-automaton-and-automaton-1-eq}:
\begin{equation*}
  q_1{\Tassign}(\Empty \And p)
  \,\Andd\, q_2{\Tassign}(\Skip \And p)
  \,\Andd\, q_3{\Tassign}(\Skip \And p)\chop\Skip\chop\Skip\SChopstar
  \,\Andd\, q_4{\Tassign} (\More \And \Not p)
\dotspace.
\end{equation*}
Note that the case for $q_3$ simplifies to $q_3\Tassign (p \And
\Next\Next\True)$.  The 5-tuple sample run
in~\eqref{deterministic-semi-automaton-and-automaton-2-eq} reflects behaviour
in prefix subintervals for the previous illustrative word $p\,\Not p\, p\, p\,
\Not p$.  For example, $q_2$ is true in just the second interval state since
the 2-state prefix subinterval is the only prefix subinterval satisfying the
formula $\Skip \And p$.

For any deterministic automaton $M$, let $D_M$ denote some ATA obtained from
$M$ in exactly the same way as for a semi-automaton.

\subsection{Formal Equivalence of the Two Representations of Runs}

\label{formal-equivalence-of-the-two-representations-of-runs-subsec}

For finite time, the $\PITL$ formula $\Bf D_S$ expresses all runs of $S$
starting from its initial control state.  Hence for finite time this formula
is semantically equivalent to the previous formulas for this behaviour (e.g.,
the $\PTL$ formula $\init_S \And \Box(\More \imp T_S)$).  Consequently, the
next valid formula relates the two ways of expressing $S$'s runs:
\begin{equation}
  \label{auxiliary-temporal-assignments-1-eq}
  \textstyle
  \Valid
      \Finite \implies
        \Bigl(\Bf D_S \,\EQUIV\,
	   \bigl(\mathit{init}_S \And \Box(\More \imp T_S)\bigr)\Bigr)
\dotspace.
\end{equation}
The use of a single
example~\eqref{deterministic-semi-automaton-and-automaton-1-eq} for both
representations of $S$'s runs can be justified from this.  An automaton $M$'s
accepting runs can be expressed with $\Finite \And (\Bf D_M) \And
\Fin\!\acc_M$.  The $\QPITL$ formula below is valid for any $\PITL$ formula
$B$ and automaton $M$ which recognises $B$:
\begin{equation*}
  \Valid \Finite \Implies
     B \EQUIV
       \Exists{q_1,\ldots,q_{\size{Q_M}}}\bigl(\Bf D_M \Andd \Fin\!\acc_M\bigr)
\dotspace.
\end{equation*}

The valid $\PITLK$ formula~\eqref{auxiliary-temporal-assignments-1-eq} just
given relates two ways of representing in temporal logic the runs of a
finite-state semi-automaton (that is, $\Bf D_S$ and $\mathit{init}_S \And
\Box(\More \imp T_S)$).  It includes an explicit assumption about finite time.
The next Lemma~\ref{deductive-eqv-of-two-reps-of-auto-runs-lem} eliminates
this requirement and provides a way to re-express $\Bf D_S$ as an equivalent
$\PTL$ formula in deductions concerning infinite time.  The proof of
Lemma~\ref{deductive-eqv-of-two-reps-of-auto-runs-lem} only involves temporal
logic and requires no explicit knowledge about omega automata.

For the convenience of readers studying our deductions here and later on in
Section~\ref{reduction-of-pitl-to-ptlu-sec},
Table~\ref{list-of-pitl-theorems-and-derived-rules-mentioned-before-app-table}
lists every $\PITL$ theorem and derived rule explicitly mentioned somewhere
prior to Appendix~\ref{some-pitl-theorems-and-their-proofs-sec}.  The appendix
itself contains all needed $\PITL$ theorems and derived rules and as well as
their individual proofs.  \bgroup
\def\MyThmsOne#1&#2\EndMyThmsOne{
  \expandafter\ifx \csname #1-used-in-mythms\endcsname\relax\else
  \refid{#1}&#2\\\fi} \ifx\MyThmsMacroA\relax \else
  \begin{table*}
    \begin{center}
      \begin{tabular}{l@{\qquad}L}
        \MyThmsMacroA
      \end{tabular}
      \caption{$\PITL$ theorems and derived rules mentioned before Appendix~\ref{some-pitl-theorems-and-their-proofs-sec}}
      \label{list-of-pitl-theorems-and-derived-rules-mentioned-before-app-table}
    \end{center}
  \end{table*}
  \fi
\egroup

\begin{mylemma}
  \label{deductive-eqv-of-two-reps-of-auto-runs-lem}
  For any deterministic finite-state semi-automaton $S$, the next $\PITLK$
  equivalence involving $S$'s ATA $D_S$ and a $\PTL$ formula is a $\PITL$
  theorem:
  \begin{equation}
    \label{deductive-eqv-of-two-reps-of-auto-runs-2-eq}
    \textstyle
    \Theorem \Bf D_S
	\Equiv \bigl(\mathit{init}_S \And \Box(\More \imp T_S)\bigr)
\dotspace.
  \end{equation}
\end{mylemma}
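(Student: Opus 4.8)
The plan is to reduce the all-time equivalence~\eqref{deductive-eqv-of-two-reps-of-auto-runs-2-eq} to the finite-time case, which is already packaged in the valid equivalence~\eqref{auxiliary-temporal-assignments-1-eq}, and then to lift it to infinite time by purely temporal manipulation of $\Bf$ and $\Box$. Write $P$ for the $\PTL$ formula $\init_S \And \Box(\More \imp T_S)$. From~\eqref{auxiliary-temporal-assignments-1-eq} and completeness for finite time (Theorem~\ref{completeness-of-pitl-axiom-system-for-finite-time-thm}) I first obtain the finite-time theorem $\theorem \Finite \imp (\Bf D_S \equiv P)$. Everything automaton-specific, in particular the content of the regular formulas $C_{S,q}$ buried inside $D_S$, is thereby absorbed into this one finite-time step, so that no explicit reasoning about omega automata is required afterwards.

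Next I lift this equivalence underneath a $\Bf$. Applying the inference rule $\Bf$FGen (which takes $\thm\Finite\imp A$ to $\thm\Bf A$) to $\theorem \Finite \imp (\Bf D_S \equiv P)$ yields $\thm \Bf(\Bf D_S \equiv P)$. Using the normal-modal behaviour of $\Bf$ (its distribution over equivalence, a consequence of $\Bf(A\And B)\equiv(\Bf A\And\Bf B)$ and monotonicity, established in the appendix) I pass to $\thm \Bf\Bf D_S \equiv \Bf P$, and the idempotence theorem $(\Bf\Bf A)\equiv\Bf A$ then collapses the left side to $\thm \Bf D_S \equiv \Bf P$. It therefore remains only to prove the prefix-determinacy of $P$, namely $\theorem \Bf P \equiv P$; conjoining the two gives the goal $\theorem \Bf D_S \equiv P$.

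The heart of the matter, and the step I expect to be the main obstacle, is this deducible prefix-determinacy $\theorem \Bf P \equiv P$, which must hold for infinite time as well and so cannot come from finite-time completeness. I would split $P$ via $\Bf(A\And B)\equiv(\Bf A \And \Bf B)$ and handle the two conjuncts separately. For the state formula $\init_S$, Axiom $w\imp\Bf w$ gives one direction, while the converse $\Bf w \imp w$ follows from $\Bf\False\equiv\False$ together with $\Bf(w\And\Not w)\equiv(\Bf w \And \Bf\Not w)$ and $\Not w\imp\Bf\Not w$; hence $\theorem \Bf\init_S \equiv \init_S$. For the safety conjunct I would commute the two modalities using the valid equivalence $(\Bf\Box A)\equiv(\Box\Bf A)$, reducing $\Bf\Box(\More\imp T_S)$ to $\Box\Bf(\More\imp T_S)$, and then exploit the two-state locality of $\More\imp T_S$: since $T_S$ is an $\NLone$ formula, the $\PITL$ theorem $\Df(\More\And T)\Equiv\More\And T$ (applied with $T$ the $\NLone$ formula $\Not T_S$ and read through negation) yields $\theorem \Bf(\More\imp T_S)\equiv(\More\imp T_S)$; a $\Box$-congruence step then gives $\theorem \Box\Bf(\More\imp T_S)\equiv\Box(\More\imp T_S)$, and hence $\theorem \Bf\Box(\More\imp T_S)\equiv\Box(\More\imp T_S)$. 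The delicate point is that $\Bf$ quantifies over all finite prefixes whereas $\Box$ quantifies over all (possibly infinite) suffixes, so the argument must lean on the commuting law between $\Bf$ and $\Box$ and on the fact that both $\init_S$ and $T_S$ inspect only a bounded prefix of the interval. This locality is precisely what makes $P$ a prefix-determined safety property, and it is where the real content of the lemma resides.
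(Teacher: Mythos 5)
Your proposal is correct and follows essentially the same route as the paper's proof: finite-time completeness plus $\Bf$FGen, distribution and idempotence of $\Bf$, the equivalence $\Bf w \equiv w$ for the initial-state conjunct, commutation of $\Bf\Box$ into $\Box\Bf$, and the $\NLone$ locality theorem to discharge the inner $\Bf$. The only cosmetic difference is that you factor the argument through the explicit sub-goal $\thm \Bf P \equiv P$ (with your own small variant derivation of $\Bf w \imp w$ via $\Bf\False \equiv \False$), whereas the paper distributes $\Bf$ over the conjunction directly and cites its Theorem \refid{BfState}; the lemmas invoked are otherwise identical.
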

\begin{proof}
  The validity of implication~\eqref{auxiliary-temporal-assignments-1-eq},
  together with completeness for $\PITL$ with finite time ensures
  that~\eqref{auxiliary-temporal-assignments-1-eq} is also a deducible
  theorem:
  \begin{equation*}
    \textstyle
    \Theorem
      \Finite \Implies
	\Bigl(\Bf D_S \,\EQUIV\,
	  \bigl(\mathit{init}_S \And \Box(\More \imp T_S)\bigr)\Bigr)
\dotspace.
  \end{equation*}
  We then deduce from that and Inference Rule~\refid{BfFGen} the next theorem:
  \begin{equation*}
    \textstyle
    \Theorem
      \Bf\Bigl(\Bf D_S \,\EQUIV\,
	\bigl(\mathit{init}_S \And \Box(\More \imp T_S)\bigr)\Bigr)
\dotspace.
  \end{equation*}
  From this and some interval-based temporal reasoning about $\Bf$ (using
  properties of the underlying modal system \ModalSys{K} -- see
  Appendix~\relax
  \ref{some-properties-of-bf-involving-the-modal-system-k-and-axiom-d-subsec})
  we can then deduce the equivalence below:
  \begin{equation*}
    \textstyle
    \Theorem \Bf D_S \Equiv \Bf \mathit{init}_S \Andd \Bf\Box(\More \imp T_S)
\dotspace.
  \end{equation*}
  Let us now re-express $\Bf \mathit{init}_S$ as the equivalent state formula
  $\mathit{init}_S$ (see PITL Theorem~\refid{BfState}):
  \begin{equation*}
    \textstyle
    \Theorem \Bf D_S \Equiv \mathit{init}_S \And \Bf\Box(\More \imp T_S)
\dotspace.
  \end{equation*}
  We also want to re-express $\Bf\Box(\More \imp T_S)$ as the $\PTL$ formula
  $\Box(\More \imp T_S)$.  This can be done by first re-expressing $\Bf\Box$
  as $\Box\Bf$ (see $\PITL$ Theorem~\refid{BfBoxEqvBoxBf}) to yield the
  equivalence below:
  \begin{equation}
    \label{deductive-eqv-of-two-reps-of-auto-runs-3-eq}
    \textstyle
    \Theorem \Bf D_S \Equiv \mathit{init}_S \And \Box\Bf(\More \imp T_S)
\dotspace.
  \end{equation}

  Let us now consider how to eliminate the operator $\Bf$ in the subformula
  $\Box\Bf(\More \imp T_S)$.  The fact that any $\NLone$ formula $T$ only sees
  an interval's first two states ensures that the next equivalence is valid
  and also deducible (see $\PITL$
  Theorem~\refid{DfMoreAndNLoneEqvMoreAndNLone}):
  \begin{equation*}
    \Theorem \Df(\More \And T) \Equiv \More \And T
\dotspace.
  \end{equation*}
  A dual form (see $\PITL$ Theorem~\refid{BfMoreImpNLoneEqvMoreImpNLone}) is
  readily deduced for use with $T_S$:
  \begin{equation*}
    \Theorem \Bf(\More \imp T_S) \Equiv \More \imp T_S
\dotspace.
  \end{equation*}
  We employ this with Derived Rule~\refid{BoxEqvBox} to obtain an equivalence
  for eliminating the $\Bf$ operator in $\Box\Bf(\More \imp T_S)$:
  \begin{equation}
    \label{deductive-eqv-of-two-reps-of-auto-runs-4-eq}
    \Theorem \Box\Bf(\More \imp T_S) \Equiv \Box(\More \imp T_S)
    \dotspace.
  \end{equation}
  Equivalence~\eqref{deductive-eqv-of-two-reps-of-auto-runs-2-eq}'s
  theoremhood, which is our immediate goal, then readily follows by simple
  propositional reasoning from the deduced
  equivalences~\eqref{deductive-eqv-of-two-reps-of-auto-runs-3-eq}
  and~\eqref{deductive-eqv-of-two-reps-of-auto-runs-4-eq}.
\end{proof}

\section{Compound Semi-Automata for Suffix Recognition}

\label{compound-semi-automata-for-suffix-recognition-sec}

Let a \emphbf{compound semi-automaton} $R$ be a vector of semi-automata $S_1,
\ldots, S_n$ for some $n\ge 1$ with disjoint sets of control states.  We take
$V_R$ to be the set of propositional variables in the semi-automata $S_1,
\ldots, S_n$ which are not also control states.  The purpose of $R$ is to
perform what we call \emphbf{suffix recognition}. This is a way to determine
which of an finite interval's suffix subintervals satisfy some given $\PITL$
formula $B$.  Suffix recognition is a stepping stone enabling us to subsequently
perform the \emph{infix recognition} already briefly mentioned in
\S\ref{overview-of-role-of-pitl-subsets-in-completeness-proof-subsec}.  Later on
in Section~\ref{reduction-of-pitl-to-ptlu-sec} this feature of $R$ ensures that
for a given $\PITLK$ formula $K$ with $m$ right-chops (previously defined in
\S\ref{right-chops-and-chain-formulas-subsec}), we can utilise $m$ such compound
semi-automata to obtain an ATA for infix recognition to replace the left sides
of $K$'s right-chops with $\PTLU$ chain formulas (also introduced in
\S\ref{right-chops-and-chain-formulas-subsec}).  The $n$ individual
semi-automata $S_1, \ldots, S_n$ in $R$ are meant to operate lockstep in
parallel and so simultaneously make state transitions. For each $i: 1\le i\lt
n$, we require for the set $V_{S_{i+1}}$, which contains propositional variables
examined by $S_{i+1}$, that $V_{S_{i+1}} \subseteq V_{S_i}\union Q_{S_i}$.
Hence the control states of $S_i$ are allowed occur within the letters for
$S_{i+1}$ and any semi-automata of higher index but not vice versa.  This
enables each semi-automaton to optionally observe control states of all
semi-automata with lower index when it makes transitions.  In our particular
construction of $R$, the set $V_R$ simply equals the set $V_B$ of propositional
variables in the $\PITL$ formula $B$ and also equals the lowest-indexed
semi-automata $S_1$'s set $V_{S_1}$ of propositional variables used to form the
atoms $\Sigma_{S_1}$.  Let $R$'s ATA $D_R$ be a conjunction of the ATAs for the
semi-automata $S_1, \ldots, S_n$.  It is not hard to check that $D_R$ obeys the
ATA requirement limiting where auxiliary variables can occur (as specified in
the definition of ATAs in \S\ref{auxiliary-temporal-assignments-subsec}) and is
therefore well-formed.

We perform suffix recognition by exploiting standard techniques originally
developed by McNaughton~\citeyear{McNaughton66} to construct deterministic
omega automata.  Choueka~\citeyear{Choueka74} later applied McNaughton's
insights to some constructions for automata on finite words.  Our discussion
here likewise concerns finite-time behaviour and avoids omega automata.
Furthermore, this section deals with semantic issues but not deductions.

wil\subsection{Overview of Construction of Compound Semi-Automaton}

The compound semi-automaton $R$ to suffix recognise $B$ is built from several
modified copies of a deterministic automaton running lockstep in parallel.  We
also define an associated chain formula $G_R$.  Here is a summary:
\begin{iteMize}{$\bullet$}
\item We initially construct $R$ and $G_R$ to just check whether $B$ is true
  in any given finite suffix subinterval of the overall \emph{finite} interval
  in which $R$ is run.  Consequently, $G_R$ can be used to mimic $B$.
\item We first construct a deterministic finite-state automaton $M$ (discussed
  in \S\ref{deterministic-finite-state-automata-subsec}) to recognise the
  regular language associated with $B$ in finite time.  Let $n$ be the number of
  control states, that is, $n=\size{Q_M}$.
\item We do not use $M$ directly but instead construct $n+1$ semi-automata
  $S_1$, \ldots, $S_{n+1}$ based on $M$. The compound semi-automaton $R$ is a
  vector of them.
\item Our construction ensures that always at least one semi-automaton is in
  (its copy of) $M$'s initial control state and so available to start
  testing for $B$ in the suffix subinterval commencing at the current state.
\item A suffix subinterval satisfies $B$ iff there is exists a simulation of
  an accepting run of $M$ which starts in the subinterval's first state, ends
  in its last one (the same as the overall interval's final state) and is
  formed by combining up to $n+1$ pieces of runs of the semi-automata $S_1$,
  \ldots, $S_{n+1}$.  The successive partial runs are performed on
  semi-automata of decreasing index.
\end{iteMize}

\subsection{Construction of the Individual Semi-Automata}

Let us now consider the details of the $n+1$ semi-automata variants $S_1$,
\ldots, $S_{n+1}$ of $M$.  A semi-automaton $S_k$ has its own disjoint set
$Q_{S_k}=\{q^{S_k}_1,\ldots, q^{S_k}_n\}$ of copies of the $n$ control states
in $M$ and is initialised exactly as $M$ would be and hence starts in (its
copy of) $M$'s initial control state.  We let $S_k$ examine the control states
of semi-automata with lower index (i.e., $S_1, \dots, S_{k-1}$) when it makes
its transitions in lockstep with them.  Hence, the set of propositional
variables $V_{S_k}$ is the union of $V_M$ and $\bigcup_{1\le j\lt k}Q_{S_j}$
and all propositional variables in an atom $\alpha$ in $\Sigma_{S_k}$ are
therefore either in $V_M$ or are control states in the semi-automata $S_1,
\dots, S_{k-1}$.

We now define the \emphbf{transition function $\boldsymbol{\delta_{S_k}}$} of
each semi-automaton $S_k$ in $R$ for use when all of the semi-automata operate
in lockstep.  The transition function $\delta_{S_k}\colon Q_{S_k}\times
\Sigma_{S_k}\rightarrow Q_{S_k}$ is deterministic like $M$'s, but more
complicated.  For each pair $\langle q_i^{S_k},\alpha\rangle$ in
$Q_{S_k}\times \Sigma_{S_k}$, there are two distinct possible cases based on
the values of $q_i^{S_k}$ and $\alpha$.  We now define these cases and the
associated transitions:
\begin{iteMize}{$\bullet$}
\item \emphbf{The pair $\boldsymbol{\langle q_i^{S_k},\alpha\rangle}$ is
    active:} This occurs when for every $j\lt k$, the pair's atom $\alpha$
  assigns the control variable $q_i^{S_j}$ to be false.  It corresponds to a
  situation where $S_k$ is the semi-automaton of lowest index in $R$ currently
  in (its own copy $q_i^{S_k}$ of) $M$'s control state $q_i^M$ and itself
  also called \emphbf{active}.

  Let $\beta\in\Sigma_M$ be the atom in $\Sigma_M$ obtained from $\alpha$ by
  only using the propositional variables in $V_M$ and thereby ignoring the
  control variables in $\alpha$.  Now we have that
  $\delta_M(q_i^M,\beta)=q_j^M$ for some $q_j^M\in Q_M$.  Define the
  transition $\delta_{S_k}(q_i^{S_k},\alpha)$ to be the corresponding
  $q_j^{S_k}\in Q_{S_k}$.

\item \emphbf{The pair $\boldsymbol{\langle q_i^{S_k},\alpha\rangle}$ is
    inactive:} If the first case does not apply, then $S_k$ shares (its copy
  of) $M$'s control state $q_i^M$ with some semi-automaton of lesser index as
  seen by $S_k$ via the atom $\alpha$.  We define the transition
  $\delta_{S_k}(q_i^{S_k},\alpha)$ to equal the initial control state of
  $S_k$.  Hence $S_k$ makes a transition from its current control state to
  (its copy of) $M$'s initial control state so in effect reinitialises itself.
  Our construction of $R$ ensures that some other semi-automaton with lower
  index which is both active and presently in (its own copy of) the same
  control state $q_i^M$ of $M$ now indeed takes over from $S_k$.  We also say
  that $S_k$ is \emphbf{inactive} and that the two semi-automata \emphbf{merge}.
\end{iteMize}
Figure~\ref{sample-behaviour-of-compound-semi-automaton-fig} gives an example
of an deterministic automaton $M$ with four states and a run of an associated
compound semi-automaton with five semi-automata $S_1,\dots,S_5$.
\bgroup
\def\clap#1{\hbox to 0pt{\hss #1\hss}}
\def\U#1{\underline{#1}}
\def\X{\text{\llap{${\scriptstyle S_2}{\leftarrow}$}}}
\begin{figure*}
  \begin{center}
  \fbox{\vbox{\hsize=0.85\textwidth\flushleft\vspace{2pt} %
    \begin{tabular}{L}
        \text{Sample formula $B$}\colon
          (\Skip\And p)\chop\Skip\chop\Skip\SChopstar
            \chop(\Empty \And \Not p)
        \\\noalign{\vspace{5pt}}
        \text{Sample automaton $M$ for $B$ (already presented
              in~\eqref{deterministic-semi-automaton-and-automaton-1-eq}):}
          \\[2pt]
        \enspace
	\begin{array}{l}
          V_M=\{p\} \text{\ (so $\Sigma_M=\{p, \Not p\}$)}
            \quad Q_M = \{ q_1, q_2, q_3, q_4 \}
            \quad q_M^I=q_1 \\[1pt]
          \delta_M(q_1,p) = q_2 \quad \delta_M(q_1,\Not p) = q_4
            \quad \delta_M(q_2,p) =\delta_M(q_2,\Not p) = q_3 \\
          \delta_M(q_3,p) =\delta_M(q_3,\Not p) = q_3
            \quad \delta_M(q_4,p) =\delta_M(q_4,\Not p) = q_4 \\[1pt]
          \tau_M(q_1)=\tau_M(q_2)=\tau_M(q_4)=\{\}
            \quad \tau_M(q_3)=\{\Not p\} \\[1pt]
	  \init_M\colon\enspace q_1 \And \Not q_2 \And \Not q_3 \And \Not q_4
            \quad \acc_M\colon\enspace q_3 \And \Not p \\[1pt]
          \mbox{\rlap{$T_M\colon$}\hphantom{$\init_M\colon$\enspace}}
	    (\Next q_1)\equiv \False
            \,\Andd\, (\Next q_2)\equiv (q_1 \And p) \\
           \mbox{\hphantom{$\init_M\colon$\enspace}}
	      \Andd\, (\Next q_3)\equiv (q_2 \Or q_3)
              \,\Andd\, (\Next q_4)\equiv ((q_1 \And \Not p) \Or q_4)
          \end{array}
        \end{tabular} \\
    \vspace{7pt}
    \mbox{\quad Control state behaviour of each $S_k$
        in sample 8-state interval $\sigma$:}
    \vspace{3pt}
    \mbox{\qquad
    \begin{tabular}{CCCCCCC}
      \text{State in $\sigma$} & \text{$p$'s value}
         & S_1 & S_2 & S_3 & S_4 & S_5\\\hline\noalign{\vspace{2pt}}
      \sigma_0 & \Not p & \bf 1 & 1 & 1 & 1 & 1 \\
      \sigma_1 &      p & \bf 4 & \bf \U1 & 1 & 1 & 1 \\
      \sigma_2 & \Not p & \bf 4 & \bf \U2 & \bf 1 & 1 & 1 \\
      \sigma_3 &      p & \bf 4 & \bf \U3 & 4 & \bf \U1 & 1 \\
      \sigma_4 &      p & \bf 4 & \bf \U3 & \bf \U1 & \bf \U2 & 1 \\
      \sigma_5 & \Not p & \bf 4 & \bf \U3 & \bf \U2 & \X3 & \bf 1 \\
      \sigma_6 &      p & \bf 4 & \bf \U3 & \X3 & \bf 1 & \bf 4 \\
      \sigma_7 & \Not p & \bf 4 & \bf \U3 & \bf 1 & \bf 2 & 1 \\\hline
        \noalign{\vspace{7pt}}
      \multicolumn{7}{@{}c}{Value of $\acc'_k$ for each $S_k$ at end
              in state $\sigma_7$:}
        \\\noalign{\vspace{2pt}}
        & & \False & \True & \False & \False & \False
    \end{tabular}}
    \vspace{5pt}
    \begin{tabular}{l}
      Some explanations about the sample 8-state interval
	$\sigma_0\ldots\sigma_7$: \\
      \quad Only control states' indices are shown (e.g., 1 for $q_1$). \\
      \quad Active semi-automata are shown in \textbf{boldface}. \\
      \quad All control states used in any accepting runs of $M$ are
          $\underline{\text{underlined}}$. \\
      \quad ``${\scriptstyle S_2}{\leftarrow}$''
          shows merge into semi-automaton $S_2$ in accepting run for $M$.
    \end{tabular}
    \vspace{5pt}
    \begin{tabular}[t]{l}
      Compound accepting runs of $M$ to recognise $B$:
        \\[2pt]
      \quad Suffix subinterval $\sigma_1\ldots\sigma_7$
        ($S_2$: $\sigma_1\sigma_2\sigma_3\sigma_4\sigma_5\sigma_6\sigma_7$):
          $\underbrace{q_1,q_2,q_3,q_3,q_3,q_3,q_3}_{S_2}$ \\
      \quad Suffix subinterval $\sigma_3\ldots\sigma_7$
        ($S_4$: $\sigma_3\sigma_4$, $S_2$: $\sigma_5\sigma_6\sigma_7$):
          $\underbrace{q_1,q_2}_{S_4},\underbrace{q_3,q_3,q_3}_{S_2}$ \\
      \quad Suffix subinterval $\sigma_4\ldots\sigma_7$
        ($S_3$: $\sigma_4\sigma_5$, $S_2$: $\sigma_6\sigma_7$):
          $\underbrace{q_1,q_2}_{S_3},\underbrace{q_3,q_3}_{S_2}$
    \end{tabular}
    \vspace{2pt}}}
  \end{center}
  \caption{Sample behaviour of compound semi-automaton in 8-state interval}
  \label{sample-behaviour-of-compound-semi-automaton-fig}      
\end{figure*}
\egroup

Recall that our representation of $M$'s $n$ control states using $n$
propositional variables $q^M_1,\ldots, q^M_n$ has exactly one of the variables
being true at any time.  Hence we represent the $n$ control states for a
semi-automata $S_k$ using $n$ propositional variables $q^k_1, \ldots, q^k_n$.
Therefore the subset of atoms in $\Sigma_{S_k}$ extracted from $R$'s composite
runs always have exactly one variable $q^j_i$ true for each semi-automaton
$S_j$ with $j\lt k$.  This property of the runs follows by induction on
$k$. In contrast, the full set of atoms for $\Sigma_{S_k}$ includes for each
index $j$ with $j\lt k$ some \emphbf{pathological atoms} in which none or more
than one of the $q^j_i$ are true.  Nevertheless, actual runs of $S_k$ in $R$
never encounter such atoms so we need not concern ourselves with the precise
way $\delta_{S_k}$ is defined to handle them in transitions.

\subsection{Formalisation of Suffix Recognition in $\PITL$}

The following lemma formalises the finite-time behaviour of the compound
semi-automaton $R$ in $\PITL$ and uses an associated chain formula $G_R$ in
$\PTLU$ which we construct in the proof:
\begin{mylemma}
  \label{ata-and-df-chain-formula-lem}
  For any $\PITL$ formula $B$, there exists a compound semi-automaton $R$ with
  $V_R=V_B$ and associated ATA $D_R$ and chain formula $G_R$ such that $R$'s
  control variables are not in $B$ and the next implication is valid:
  \begin{equation}
    \label{ata-and-df-chain-formula-1-eq}
    \Valid \Finite \And \Bf D_R \Implies \Box (B \equiv G_R)
\dotspace.
  \end{equation}
\end{mylemma}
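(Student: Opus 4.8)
The plan is to prove the implication semantically, treating the compound semi-automaton $R$ and its chain formula $G_R$ as the objects to be constructed and then verified against a fixed finite model. First I would invoke the regular expressiveness of $\PITL$ with finite time to obtain a deterministic finite-state automaton $M$ recognising $B$ with $n = \size{Q_M}$ control states, and then build the compound semi-automaton $R = (S_1, \ldots, S_{n+1})$ with its merge discipline exactly as set up in the preceding subsections, taking $V_R = V_B$ and letting $D_R$ be the conjunction of the component ATAs $D_{S_k}$. Fixing any finite interval $\sigma$ with $\sigma \vld \Finite \And \Bf D_R$, the temporal-assignment semantics of $\Bf D_R$ guarantees that at each state $\sigma_j$ the control variables encode precisely the control states reached by $S_1, \ldots, S_{n+1}$ after the lockstep run on the prefix $\sigma_{0:j}$. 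Thus the whole problem reduces to a purely combinatorial statement about these runs: for every $i$, the suffix word read on $\sigma_{i\uparrow}$ is accepted by $M$ (equivalently $\sigma_{i\uparrow} \vld B$) iff $G_R$ holds at $\sigma_i$.

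The central structural fact I would establish by induction on the state index is that in every state at least one semi-automaton is \emph{available}, i.e.\ active and sitting in (its copy of) $M$'s initial control state: since the merge rule forces distinct active semi-automata to occupy distinct copies of $M$'s $n$ control states, and there are $n+1$ semi-automata, a pigeonhole/reinitialisation argument keeps one free to begin recognising a fresh suffix. Using this, I would define $G_R$ as a chain formula tracing a \emph{compound accepting run}: starting from the available semi-automaton $S_k$ at the current state, follow $S_k$'s transitions (captured by the $\NLone$ formula $T_{S_k}$, used as the left operand of an $\Until$) until either it reaches acceptance $\acc'_k$ in the final ($\Empty$) state, or it becomes inactive and merges into some lower-index $S_{k'}$, at which point the recognition hands off to a continuation tracked by $S_{k'}$. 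Because each handoff strictly decreases the index, the recursion terminates after at most $n+1$ stages, so $G_R$ is a genuine chain formula of bounded nesting, built only from disjunction over the starting index, state conjuncts $w$, $\Until$, and $\Empty$, exactly matching the restricted $\PTLU$ syntax required.

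With $G_R$ in hand, the biconditional under $\Box$ follows by relating $M$'s accepting runs on suffixes to these compound accepting runs. For the forward direction, given an accepting run of $M$ on $\sigma_{i\uparrow}$, I would show it is simulated by the compound semi-automaton as a concatenation of partial runs on semi-automata of strictly decreasing index (cf.\ the worked suffixes in Figure~\ref{sample-behaviour-of-compound-semi-automaton-fig}), witnessing $G_R$ at $\sigma_i$; conversely, any compound accepting run witnessing $G_R$ stitches its partial runs back together into a single accepting run of $M$ on the suffix word, giving $\sigma_{i\uparrow} \vld B$. I expect the main obstacle to be precisely this simulation correctness — verifying that the merge discipline makes the available semi-automaton pick up exactly where the previous one leaves off, so that the decreasing-index handoffs reconstruct $M$'s run without gaps or duplication, and that this holds uniformly at every suffix position simultaneously under a single setting of the control variables. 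This is the finite-word adaptation of McNaughton's construction, and the bookkeeping in its inductive invariant is the technically delicate part; once it is in place, validity of the implication is immediate, since $\Finite \And \Bf D_R$ forces the control variables to realise exactly these runs.
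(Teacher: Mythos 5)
Your proposal follows essentially the same route as the paper's proof: construct the recogniser $M$, assemble $R$ from $n+1$ lockstep copies with the merge discipline, and define $G_R$ as a disjunction of $\Until$-chains that hand off at merges to semi-automata of strictly decreasing index, so that compound accepting runs correspond to $M$'s accepting runs on suffixes. Your explicit pigeonhole argument for the availability of a fresh initial-state semi-automaton and your flagging of the simulation invariant are points the paper leaves implicit, but the construction and the correctness argument are the same.
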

This lemma provides a way to replace right-instances of a $\PITL$ formula $B$
by a chain formula $G_R$ in formulas restricted to finite time.  However, it
serves as basis for later replacing \emph{lefthand} sides of chops with chain
formulas.  The lemma is entirely semantic and so does not depend on any
particular axiom system or deductions. We will later readily deduce the
lemma's implication~\eqref{ata-and-df-chain-formula-1-eq} by invoking the
completeness for $\PITL$ with finite time to obtain immediate theoremhood of
the implication and some valid variants of it.  Hence, from the standpoint of
axiom systems and deductions, there is no need to know
Lemma~\ref{ata-and-df-chain-formula-lem}'s proof or even any further details
of $R$, $D_R$ and $G_R$.

\begin{proof}[Proof of Lemma~\ref{ata-and-df-chain-formula-lem}.]
  The construction for $R$ ensures that the set union $Q_{S_1}\union \cdots
  \union Q_{S_{n+1}}$ of control variables of the semi-automata $S_1$, \ldots,
  $S_{n+1}$ contains no elements of the set $V_B$ of propositional variables
  occurring in $B$.

  We will obtain the chain formula $G_R$ by mimicking an accepting run of $M$.
  This involves combining together pieces of runs from the some of the
  semi-automata $S_1$, \ldots, $S_{n+1}$.  It needs at most $n$ merges since
  when two semi-automata merge, only the one of lesser index continues testing.
  The chain formula $G_R$, when suitably combined with the compound
  semi-automaton $R$'s ATA, will capture the needed behaviour which we
  previously formalised in the
  implication~\eqref{ata-and-df-chain-formula-1-eq}.

  We first define state formulas to test for active and merging semi-automata
  and also introduce a modified acceptance test:
  \begin{iteMize}{$\bullet$}
  \item \emphbf{$\boldsymbol{\mathit{active}_k}$:} True iff semi-automaton
    $S_k$ is \emph{active}.
      \begin{equation*}
        \mathit{active}_k
          \Defeqv \bigvee_{1\le i\le n+1}
            \bigl(q^{S_k}_i \And \bigwedge_{1\le j\lt k}\Not q^{S_j}_i\bigr)
\dotspace.
      \end{equation*}

    \item \emphbf{$\boldsymbol{\mathit{merge}_{j,k}}$:} True iff the active
      semi-automaton $S_j$ and inactive semi-automaton $S_k$ merge.
      \begin{equation*}
        \mathit{merge}_{j,k}
          \Defeqv \bigvee_{1\le i\le n+1}
		    (q^{S_j}_i \And q^{S_k}_i
		       \And \mathit{active}_j \And \Not\mathit{active}_k)
\dotspace.
      \end{equation*}
      It follows from the definition of an active semi-automaton that $j\lt
      k$.
  \item \emphbf{$\boldsymbol{\mathit{acc}'_k}$:} Let
    us also define a propositional test $\mathit{acc}'_k$ based on the state
    formula $\acc_M$ for checking $M$'s conditional acceptance test $\tau_M$.
    We use a substitution instance of $\acc_M$ to adapt it to $S_k$ and its
    own copies of $M$'s control states.
    \begin{equation*}
      \acc'_k  \Defeqv  (\acc_M)_{q^M_1,\ldots, q^M_n}^{q^{S_k}_1,\ldots, q^{S_k}_n}
  \dotspace.
    \end{equation*}
    Note that a semi-automaton $S$ has no conditional acceptance test $\tau_S$
    and indeed the role of $\mathit{acc}'_k$ here somewhat differs from that of
    $\acc_M$.
  \end{iteMize}
  As usual, for an individual semi-automaton $S_k$ in the compound
  semi-automaton $R$, the state formula $\init_{S_k}$ tests for the initial
  control state of $S_k$ and the $\NLone$ formula $T_{S_k}$ expresses the
  transition function $\delta_{S_k}$ of $S_k$ in temporal logic.
  
  Let us now inductively define for each pair $j,k: 1\le j\le k\le n+1$ a
  chain formula $G'_{k,j}$ to be true iff a run segment starts with currently
  active semi-automaton $S_k$ in some unspecified control state, involves
  exactly $j$ active automata (i.e., $j-1$ mergers) and ends with acceptance
  of the word seen.
  \begin{displaymath}
    \begin{array}{ll}
      G'_{k,1}\colon
        & (\mathit{active}_k \And T_{S_k})\Until (\mathit{acc}'_k \And \Empty) \\
      G'_{k,j+1}\colon
         & \displaystyle
	   (\mathit{active}_k \And T_{S_k}) \Until
             \bigvee_{1\le i\lt k}
               \bigl(\mathit{merge}_{i,k} \And G'_{i,j}\bigr)
\dotspace.
  \end{array}
  \end{displaymath}
  For example, the chain formula $\init_{S_1} \And \mathit{active}_1 \And
  G'_{1,1}$ corresponds to an accepting run of $M$ in which the semi-automaton
  $S_1$ recognises $B$ on its own.  The conjunction $\init_{S_2} \And
  \mathit{active}_2 \And G'_{2,2}$ corresponds to an accepting run of $M$
  involving first semi-automaton $S_2$ and then semi-automaton $S_1$.  The
  semi-automaton $S_2$ starts recognising $B$ and eventually merges with
  semi-automaton $S_1$ which completes the accepting run.

  Now let us construct from the chain formulas $G'_{k,j}$ the chain formula
  $G_R$ specifying an accepting run involving some of the $n+1$ semi-automata
  to recognise the $\PITL$ formula $B$. Like in the examples, we start in some
  active copy of $M$'s initial control state:
  \begin{displaymath}
       \textstyle
       G_R\colon
	\quad \bigvee_{1\le k\le n+1}
		\bigl(\init_{S_k} \And \mathit{active}_k
		  \And \bigvee_{1\le j\le k} G'_{k,j}\bigr)
\dotspace.
  \end{displaymath}
  The construction of the compound semi-automaton $R$ together with $D_R$ and
  $G_R$ ensures the desired validity of
  implication~\eqref{ata-and-df-chain-formula-1-eq}.
\end{proof}


To assist readers, we list in Table~\ref{kinds-of-variables-table} a variety
of variables and where they are introduced.
\bgroup
\def\mySection{\S}
\begin{table*}
  \begin{center}
    \begin{tabular}{Ll@{}c}
       \text{Variable names} & \multicolumn{1}{c}{Category}
	  & Where defined \\\hline\noalign{\vspace{2pt}}
       A, A', B, C & Arbitrary $\PITL$ formulas
	 & \mySection\ref{pitl-sec} \\
       \alpha, \beta & Atoms (letters)
	 & \mySection\ref{deterministic-semi-automata-and-automata-sec} \\
       \acc_M & State formula for automaton $M$'s acceptance 
	 & \mySection\ref{deterministic-finite-state-automata-subsec} \\
       D, D' & Auxiliary temporal assignments (ATA)
	 & \S\ref{auxiliary-temporal-assignments-subsec} \\
       D_S, D_M, D_R & ATA for use in expressing runs
	   of $S$, $M$ and $R$
	 & \S\ref{atas-for-semi-automata-and-automata-subsec},
	   \S\ref{compound-semi-automata-for-suffix-recognition-sec} \\
       \delta_S, \delta_M & Deterministic transition function
	 & \S\ref{deterministic-finite-state-semi-automata-subsec},
	   \S\ref{deterministic-finite-state-automata-subsec} \\
        & for semi-automaton $S$ and automaton $M$ & \\
       G, G' & Chain formulas
	 & \S\ref{right-chops-and-chain-formulas-subsec} \\
       \init_S, \init_M & State formula to force the initial control state
	 & \S\ref{deterministic-finite-state-semi-automata-subsec},
	   \S\ref{deterministic-finite-state-automata-subsec} \\
        & of semi-automaton $S$ and automaton $M$ \\
       K, K' & $\PITLK$ formulas
	 & \S\ref{pitl-without-omega-iteration-subsec} \\
       M & Deterministic finite-state automaton
	 & \S\ref{deterministic-finite-state-automata-subsec} \\
       p, p', q, r & Propositional variables
	 & \mySection\ref{pitl-sec} \\
       Q_S, Q_M & Sets of control states of semi-automaton $S$
	 & \S\ref{deterministic-finite-state-semi-automata-subsec},
	   \S\ref{deterministic-finite-state-automata-subsec} \\
        & and automaton $M$ & \\
       R & Compound finite-state semi-automaton
	 & \mySection\ref{compound-semi-automata-for-suffix-recognition-sec} \\
       S & Deterministic finite-state semi-automaton
	 & \S\ref{deterministic-finite-state-semi-automata-subsec} \\
       \Sigma_{\VV}
	 & Atoms (letters) formed from variables in set $V$
	 & \mySection\ref{deterministic-semi-automata-and-automata-sec} \\
       \Sigma_S, \Sigma_M
	 & Atoms tested by semi-automaton $S$ and automaton $M$
	 & \S\ref{deterministic-finite-state-semi-automata-subsec},
	   \S\ref{deterministic-finite-state-automata-subsec} \\
       T, T' & $\NLone$ formulas & \S\ref{nl-one-formulas-subsec} \\
       T_S, T_M & $\NLone$ formula for transitions
	    of semi-automaton $S$
	 & \S\ref{deterministic-finite-state-semi-automata-subsec},
	   \S\ref{deterministic-finite-state-automata-subsec} \\
	 & and automaton $M$ & \\
       \tau_M & Conditional acceptance test for automaton $M$
	 & \S\ref{deterministic-finite-state-automata-subsec} \\
       V & Finite set of propositional variables
	 & \mySection\ref{pitl-sec} \\
       V_A, V_S, V_M, V_R
	 & Finite set of propositional variables in $\PITL$
	 & \S\ref{pitl-sec},
	   \S\ref{deterministic-finite-state-semi-automata-subsec},
	   \S\ref{deterministic-finite-state-automata-subsec},
	   \mySection\ref{compound-semi-automata-for-suffix-recognition-sec} \\
	 & formula $A$ and in atoms of semi-automaton $S$, \\
         & automaton $M$ and compound semi-automaton $R$ \\
       w, w' & State formulas & \S\ref{pitl-sec} \\
       X, X' & $\PTL$ formulas & \S\ref{pitl-sec} \\
       Y, Y' & $\PTLU$ formulas & \S\ref{ptl-with-until-subsec}
    \end{tabular}
    \caption{Naming conventions for different variables}
    \label{kinds-of-variables-table}
  \end{center}
\end{table*}
\egroup

\section{Reduction of PITL to PTL with Until}

\label{reduction-of-pitl-to-ptlu-sec}

Most of the remaining part of the $\PITL$ completeness proof concerns using
compound semi-automata to show right-completeness for $\PITLK$ by reduction to
$\PTLU$.  Recall from \S\ref{right-chops-and-chain-formulas-subsec} that any
chop construct in a formula $A$ is a right-chop iff it does not occur in
another chop's left operand or in a chop-star.

The $\PITL$ theorems mentioned here in proofs are found in
Table~\ref{list-of-pitl-theorems-and-derived-rules-mentioned-before-app-table}
in \S\ref{formal-equivalence-of-the-two-representations-of-runs-subsec} and
also Appendix~\ref{some-pitl-theorems-and-their-proofs-sec}.

\subsection{Application of Suffix Recognition, Right-Chops and Chain Formulas}

\label{application-of-suffix-recognition-right-chops-and-chain-formulas-subsec}

The next Lemma~\ref{infix-recognition-lem}, which employs the compound
semi-automaton $R$, generalises suffix recognition to \emphbf{infix
  recognition} for checking which of a (possibly infinite-time) interval's
finite-time infix subintervals satisfy some given $\PITL$ formula by instead
using a chain formula.
\begin{mylemma}
  \label{infix-recognition-lem}
  For any $\PITL$ formula $B$, there exists a compound semi-automaton $R$ with
  $V_R=V_B$, associated ATA $D_R$ and chain formula $G_R$ such that $R$'s
  control variables are not in $B$ and the next formula is a $\PITL$ theorem:
  \begin{equation}
    \label{infix-recognition-1-eq}
    \Theorem \Bf D_R \Implies \Box\Bf(B \equiv G_R)
\dotspace.
  \end{equation}
\end{mylemma}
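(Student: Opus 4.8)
The plan is to reuse the very compound semi-automaton $R$, its ATA $D_R$ and the chain formula $G_R$ that are already supplied by the suffix-recognition Lemma~\ref{ata-and-df-chain-formula-lem}, so that the side conditions $V_R=V_B$ and ``$R$'s control variables are not in $B$'' hold automatically. The only work left is to turn the \emph{semantic}, finite-time, suffix-oriented statement of that lemma into the \emph{deducible}, arbitrary-time, infix-oriented Theorem~\eqref{infix-recognition-1-eq}. The bridge from semantics to deduction is completeness for $\PITL$ with finite time (Theorem~\ref{completeness-of-pitl-axiom-system-for-finite-time-thm}), and the bridge from suffixes to infixes is the operator $\Bf$ together with its commutation with $\Box$.

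First I would rewrite the valid implication~\eqref{ata-and-df-chain-formula-1-eq} of Lemma~\ref{ata-and-df-chain-formula-lem} in the equivalent form $\Finite \imp \bigl(\Bf D_R \imp \Box(B\equiv G_R)\bigr)$, which is a valid $\PITL$ implication with $\Finite$ as its immediate antecedent. Theorem~\ref{completeness-of-pitl-axiom-system-for-finite-time-thm} then yields its theoremhood:
\[
  \Theorem \Finite \imp \bigl(\Bf D_R \imp \Box(B\equiv G_R)\bigr).
\]
Since the hypothesis is literally an implication of the shape $\Finite\imp A$, Inference Rule~\refid{BfFGen} applies with $A$ being $\Bf D_R \imp \Box(B\equiv G_R)$, giving
\[
  \Theorem \Bf\bigl(\Bf D_R \imp \Box(B\equiv G_R)\bigr).
\]

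Next I would push the outer $\Bf$ inward. Because $\Bf$ obeys the distribution (\ModalSys{K}) principle recorded in Appendix~\ref{some-properties-of-bf-involving-the-modal-system-k-and-axiom-d-subsec}, from the last theorem and modus ponens I obtain
\[
  \Theorem \Bf\Bf D_R \imp \Bf\Box(B\equiv G_R).
\]
Finally I would simplify each side using two standard deducible $\PITL$ equivalences: the idempotence $\Bf\Bf A \equiv \Bf A$ on the antecedent and the commutation $\Bf\Box A \equiv \Box\Bf A$ on the consequent (both appear among the valid $\PITL$ formulas displayed in Section~\ref{pitl-sec}, with deductions in the appendix). Purely propositional reasoning then delivers the goal $\Theorem \Bf D_R \imp \Box\Bf(B\equiv G_R)$, i.e.\ equation~\eqref{infix-recognition-1-eq}. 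The point of the commutation step is exactly that $\Box\Bf$ quantifies over all finite \emph{infix} subintervals whereas the lemma we start from only sees \emph{suffixes}: $\Bf$ relativises suffix recognition to every finite prefix, and $\Bf\Box\equiv\Box\Bf$ repackages this as infix recognition.

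Given the assumed lemmas these deductive steps are routine, so the genuine difficulty lies entirely upstream, in the construction of $R$ and in the semantic suffix-recognition Lemma~\ref{ata-and-df-chain-formula-lem}. The one place demanding care here is the order of the $\Bf$/$\Box$ manipulations: I must check that the hypothesis of~\refid{BfFGen} really has the required $\Finite\imp{}$ form, and that the commutation $\Bf\Box\equiv\Box\Bf$ is valid (and deducible) in \emph{arbitrary}, including infinite, time, since it is precisely this equivalence that promotes finite-time suffix recognition to unrestricted infix recognition.
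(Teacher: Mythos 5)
Your proposal is correct and follows essentially the same route as the paper's own proof: it invokes Lemma~\ref{ata-and-df-chain-formula-lem}, promotes the valid finite-time implication to a theorem via Theorem~\ref{completeness-of-pitl-axiom-system-for-finite-time-thm}, applies Inference Rule~\refid{BfFGen}, distributes $\Bf$ over the implication (Theorem~\refid{BfImpDist}), and finishes with $\Bf\Bf A\equiv\Bf A$ (Theorem~\refid{BfBfEqvBf}) and $\Bf\Box A\equiv\Box\Bf A$ (Theorem~\refid{BfBoxEqvBoxBf}). The steps and their order match the paper's deduction exactly.
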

\begin{proof}
  Lemma~\ref{ata-and-df-chain-formula-lem} ensures the validity of the
  implication below for some compound semi-automaton $R$, associated ATA $D_R$
  and chain formula $G_R$:
  \begin{displaymath}
    \Valid \Finite \And \Bf D_R \Implies \Box(B \equiv G_R)
\dotspace.
  \end{displaymath}
  This and completeness for $\PITL$ with finite time
  (Theorem~\ref{completeness-of-pitl-axiom-system-for-finite-time-thm})
  ensures the next implication's theoremhood:
  \begin{displaymath}
    \Theorem \Finite
	\Implies \bigl(\Bf D_R \,\implies\, \Box(B \equiv G_R)\bigr)
\dotspace.
  \end{displaymath}
  This and Inference Rule~\refid{BfFGen} yield the next formula:
  \begin{displaymath}
    \Theorem \Bf\bigl(\Bf D_R \implies \Box(B \equiv G_R)\bigr)
\dotspace.
  \end{displaymath}
  Simple reasoning about $\Bf$ (see $\PITL$ Theorem~\refid{BfImpDist}) results
  in the following:
  \begin{displaymath}
    \Theorem \Bf\Bf D_R \Implies \Bf\Box(B \equiv G_R)
\dotspace.
  \end{displaymath}
  We re-express $\Bf\Bf D_R$ as $\Bf D_R$ and commute $\Bf\Box$ (see $\PITL$
  Theorems~\refid{BfBfEqvBf} and~\refid{BfBoxEqvBoxBf}) to obtain our
  goal~\eqref{infix-recognition-1-eq}.
\end{proof}

The lemma below later plays a key role in reducing right-chops in a $\PITLK$
formula to $\PTLU$ formulas by first replacing their left sides with chain
formulas in $\PTLU$:
\begin{mylemma}
  \label{ata-and-ptlu-formula-lem}
  For any $\PITL$ formulas $B$ and $C$, there exists a compound semi-automaton
  $R$ with $V_R=V_B$, associated ATA $D_R$ and chain formula $G_R$ such that
  $R$'s control variables are not in $B$ or $C$ and the next formula is
  deducible as a right-theorem:
  \begin{equation}
    \label{ata-and-ptlu-formula-1-eq}
    \TheoremR \Bf D_R \Implies \Box\bigl((B\chop C) \equiv (G_R\chop C)\bigr)
    \dotspace.
  \end{equation}
\end{mylemma}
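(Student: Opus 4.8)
The plan is to build on the infix-recognition result already in hand. Using Lemma~\ref{infix-recognition-lem} (which rests on Lemma~\ref{ata-and-df-chain-formula-lem}), I would first fix a compound semi-automaton $R$ recognising $B$, together with its ATA $D_R$ and chain formula $G_R$, taking care to choose $R$'s control variables \emph{fresh}, i.e.\ lying outside $V_B\union V_C$; this immediately discharges the side condition that the control variables occur neither in $B$ nor in $C$. Lemma~\ref{infix-recognition-lem} then supplies the $\PITL$ theorem
\begin{displaymath}
  \Theorem \Bf D_R \Implies \Box\Bf(B\equiv G_R)
\dotspace.
\end{displaymath}
Since $C$ appears only on the \emph{right} of the chops in the goal~\eqref{ata-and-ptlu-formula-1-eq}, its variables are right-variables there, and so I would introduce a single fresh propositional variable $p$ (not in $B$, not in $C$, not a control variable of $R$) to stand in for $C$ throughout the deduction, reinstating $C$ only at the very end.

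The deductive heart is a \emph{left-operand chop congruence}: agreement of $B$ and $G_R$ on all finite prefix subintervals forces $B\chop p$ and $G_R\chop p$ to coincide. I would obtain this from Axiom~\refid{BfAndBoxImpChopImpChop} applied in both directions (once with left operands $B,G_R$ and once with them swapped), using that $\Box(p\imp p)$ is a right-theorem by Axiom~\refid{VPTL} and that $\Bf$ is monotone and distributes over conjunction, to deduce
\begin{displaymath}
  \theoremR \Bf(B\equiv G_R) \Implies \bigl((B\chop p)\equiv(G_R\chop p)\bigr)
\dotspace.
\end{displaymath}
Next I would apply Inference Rule~\refid{BoxGen} and the $\PTL$-valid distribution instance $\Box(X\imp Y)\imp(\Box X\imp\Box Y)$ (again via Axiom~\refid{VPTL}) to lift this to $\theoremR \Box\Bf(B\equiv G_R)\Implies\Box\bigl((B\chop p)\equiv(G_R\chop p)\bigr)$. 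Chaining this with the infix-recognition theorem above by ordinary propositional (hypothetical-syllogism) reasoning yields
\begin{displaymath}
  \theoremR \Bf D_R \Implies \Box\bigl((B\chop p)\equiv(G_R\chop p)\bigr)
\dotspace.
\end{displaymath}
Finally, substituting $C$ for the right-variable $p$ by Lemma~\ref{substitution-instances-of-right-theorems-lem} produces exactly~\eqref{ata-and-ptlu-formula-1-eq}; the hypothesis of that lemma (auxiliary/control variables of the proof absent from the substituted formula) holds because $R$'s control variables were chosen outside $V_C$.

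The routine parts—the modal $\Box$- and $\Bf$-reasoning and the two applications of Axiom~\refid{BfAndBoxImpChopImpChop}—are standard, so the main obstacle I expect is the \textbf{right-theorem bookkeeping}. One must verify that in the constructed deduction the formula's right-variables never surface on the left of a chop or inside a chop-star. The device of the fresh $p$ is what makes this clean: in the final fresh-variable formula every variable of $B$ and every control variable already occurs on the left of some chop (inside $\Bf D_R$, inside $B\chop p$, or inside $G_R$), so $p$ is the \emph{only} relevant right-variable, and $p$ is held strictly on the right of chops at every step. In particular the fact that Lemma~\ref{infix-recognition-lem} is stated only as a plain theorem $\thm$ is harmless, because $p$ does not occur in it at all; and the fresh-variable-then-substitute route sidesteps the awkward case in which $C$ shares variables with $B$ (where those shared variables would fail to be right-variables), since Lemma~\ref{substitution-instances-of-right-theorems-lem} absorbs that complication into a single substitution step.
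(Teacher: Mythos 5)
Your proposal is correct and takes essentially the same route as the paper's proof: invoke Lemma~\ref{infix-recognition-lem}, establish the left-chop congruence $\Bf(B\equiv G_R)\imp\bigl((B\chop C)\equiv(G_R\chop C)\bigr)$ (the paper simply cites its Theorem~\refid{BfChopEqvChop}, which is derived from Axiom~\refid{BfAndBoxImpChopImpChop} exactly as you sketch), lift it under $\Box$ via Inference Rule~\refid{BoxGen} and the $\PTL$ distribution instance from Axiom~\refid{VPTL}, and chain propositionally. Your only deviation is the fresh variable $p$ plus Lemma~\ref{substitution-instances-of-right-theorems-lem}; the paper works with $C$ directly, which already keeps $C$'s genuinely right variables strictly on the right of every chop in each step, so your detour is sound but not needed.
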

\begin{proof}
  Lemma~\ref{infix-recognition-lem} yields $R$, $D_R$, $G_R$ and the next
  implication for infix recognition of $B$:
  \begin{equation}
    \label{ata-and-ptlu-formula-2-eq}
    \Theorem \Bf D_R \Implies \Box\Bf(B \equiv G_R)
\dotspace.
  \end{equation}
  Note that this has no right variables.  We also employ the next implication
  which is an instance of $\PITL$ Theorem~\refid{BfChopEqvChop} and concerns
  interval-based reasoning about the left of chop:
  \begin{equation}
    \label{ata-and-ptlu-formula-3-eq}
    \TheoremR \Bf(B \equiv G_R) \Implies (B\chop C) \equiv (G_R\chop C)
\dotspace.
  \end{equation}
  Inference Rule~\refid{BoxGen} then obtains from
  implication~\eqref{ata-and-ptlu-formula-3-eq} the formula below:
  \begin{equation*}
    \TheoremR
      \Box\bigl(\Bf(B \equiv G_R)
	 \,\implies\, (B\chop C) \equiv (G_R\chop C)\bigr)
\dotspace.
  \end{equation*}
  This with $\PTL$-based reasoning involving the valid $\PTL$ formula $\Box(p
  \imp q) \imp \bigr((\Box p) \imp (\Box q)\bigr)$ with Axiom~\refid{VPTL},
  where $p$ is replaced by $\Bf(B \equiv G_R)$ and $q$ by $(B\chop C) \equiv
  (G_R\chop C)$, together with modus ponens results in the following:
  \begin{equation}
    \label{ata-and-ptlu-formula-4-eq}
    \TheoremR \Box\Bf(B \equiv G_R)
      \Implies \Box\bigl((B\chop C) \equiv (G_R\chop C)\bigr)
\dotspace.
  \end{equation}
  Implications~\eqref{ata-and-ptlu-formula-2-eq}
  and~\eqref{ata-and-ptlu-formula-4-eq} and simple propositional reasoning
  yield our goal~\eqref{ata-and-ptlu-formula-1-eq}.
\end{proof}

\begin{mylemma}
  \label{re-expressing-pitlk-with-chain-formulas-in-ptlu-lem}
  Any $\PITLK$ formula $K$ in which the left sides of all right chops are
  chain formulas is deducibly equivalent to some $\PTLU$ formula $Y$, that is,
  $\thmR K\equiv Y$.
\end{mylemma}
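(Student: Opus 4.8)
The plan is to prove this by structural induction on the $\PITLK$ formula $K$, with the already-established Lemma~\ref{re-expressing-g-chop-x-in-ptlu-lem} doing the essential work in the chop case. That lemma re-expresses $G\chop Y$ as a $\PTLU$ formula whenever $G$ is a chain formula and $Y$ is in $\PTLU$, and the whole argument amounts to peeling off the top-level structure of $K$ and invoking it whenever a right-chop is encountered.

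I would organise the induction by the main operator of $K$. Since $K$ lies in $\PITLK$, any chop-star in it occurs in the left operand of a chop, so $K$ itself cannot have chop-star as its main operator; that case is vacuous. The base cases $K=\True$, $K=p$ and $K=\Skip$ are immediate with $Y=K$, as these are already $\PTLU$ formulas. For the boolean cases $K=\Not C$ and $K=C_1\Or C_2$, the right-chops of $K$ are exactly those of its immediate subformulas (neither negation nor disjunction creates a left-of-chop or chop-star context), so each subformula is again a $\PITLK$ formula all of whose right-chops have chain left operands. The induction hypothesis supplies $\PTLU$ equivalents, and since the subformulas sit as right-instances, the Right Replacement Rule (Lemma~\ref{right-replacement-derived-rule-lemma-for-pitl-lem}) assembles $\thmR K\equiv\Not Y_C$, respectively $\thmR K\equiv Y_1\Or Y_2$, both in $\PTLU$.

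The substantive case is $K=C_1\chop C_2$, which is itself a right-chop of $K$, so by hypothesis $C_1$ is a chain formula. I would first check that $C_2$ inherits the inductive hypothesis: it cannot have a top-level chop-star, since that would put a chop-star on the right of $K$'s chop and violate $\PITLK$ membership, and its right-chops are precisely the right-chops of $K$ apart from $K$ itself, hence all have chain left operands. The hypothesis then yields a $\PTLU$ formula $Y_2$ with $\thmR C_2\equiv Y_2$; because $C_2$ is a right-instance of $C_1\chop C_2$, the Right Replacement Rule gives $\thmR (C_1\chop C_2)\equiv(C_1\chop Y_2)$. Applying Lemma~\ref{re-expressing-g-chop-x-in-ptlu-lem} to the chain formula $C_1$ and the $\PTLU$ formula $Y_2$ produces a $\PTLU$ formula $Y'$ with $\thmR (C_1\chop Y_2)\equiv Y'$, and transitivity of $\equiv$ finishes the case.

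The main obstacle is the bookkeeping rather than any hard deduction: one must verify carefully that the defining property of $K$, namely $\PITLK$ membership \emph{together with} chain left operands for every right-chop, genuinely descends to each subformula used in the induction, and that every replacement is carried out at right-instances so that right-theoremhood (the $\thmR$ judgement) is preserved at each step. Once this is in place, the only genuinely interval-theoretic content, the elimination of the chop in $C_1\chop Y_2$, is entirely delegated to Lemma~\ref{re-expressing-g-chop-x-in-ptlu-lem}.
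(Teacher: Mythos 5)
Your proposal is correct and follows essentially the same route as the paper: the paper's (much terser) proof also proceeds by inductively replacing right-chops, innermost first, using $n$ applications of Lemma~\ref{re-expressing-g-chop-x-in-ptlu-lem} together with the Right Replacement Rule, which is exactly what your structural induction carries out once the recursion bottoms out. Your version merely makes explicit the bookkeeping the paper leaves implicit, namely that $\PITLK$ membership and the chain-left-operand property descend through $\Not$, $\Or$ and the right operand of chop, and that all replacements occur at right-instances so $\thmR$ is preserved.
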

\begin{proof}
  Starting with $K$'s right-chops not nested in other right-chops, we
  inductively replace them by equivalent $\PTLU$ formulas. More precisely, if
  $n$ is the number of $K$'s right chops, then we use $n$ applications of
  Lemma~\ref{re-expressing-g-chop-x-in-ptlu-lem} and the Right Replacement
  Rule (Lemma~\ref{right-replacement-derived-rule-lemma-for-pitl-lem}) to
  show that $K$ is deducibly equivalent to some $\PTLU$ formula $Y$ (i.e.,
  $\thmR K \equiv Y$).
\end{proof}


For example, suppose $K$ is $(G_1\chop \Skip) \Or \bigl(G_2 \chop (G_3 \chop
w)\bigr)$ and hence has 3 right-chops.  We could start by first re-expressing
either $G_1\chop \Skip$ or $G_3\chop w$ by an equivalent $\PTLU$ formula.  For
instance, if $G_2$ is the chain formula $p \Until \Empty$ and $G_3$ is the
chain formula $q\Until \Empty$, then $G_3\chop w$ will be replaced by the
equivalent $\PTLU$ formula $q\Until w$. After this, $G_2 \chop (G_3 \chop w)$
will first reduce to $G_2 \chop (q\Until w)$ and finally to the $\PTLU$
formula $p \Until (q\Until w)$.

\subsection{Proof of the Main Completeness Theorem}

\label{proof-of-the-main-completeness-theorem-subsec}

We now establish right-completeness for $\PITLK$ and then use this to obtain
right-completeness for $\PITL$.
\begin{mylemma}
  \label{completeness-for-pitlk-lem}
  Any valid $\PITLK$ formula can be deduced as a right-theorem.
\end{mylemma}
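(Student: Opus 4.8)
The plan is to prove the dual, semantic form of the statement: by the right-theorem analogue of Lemma~\ref{alternative-completeness-lem} (noted in Section~\ref{right-instances-right-variables-and-right-theorems-sec}), right-completeness for $\PITLK$ is equivalent to the assertion that every \emph{right-consistent} $\PITLK$ formula is satisfiable, and that is what I would establish. So fix a right-consistent $\PITLK$ formula $K$ and let $B_1\chop C_1,\ldots,B_m\chop C_m$ enumerate its right-chops, with left operands $B_i$ and right operands $C_i$. Since each $B_i$ lies in the left of a chop, none of the variables of the $B_i$ is a right-variable of $K$; these left operands are precisely what must be removed. For each $i$ I would apply Lemma~\ref{ata-and-ptlu-formula-lem} to obtain a compound semi-automaton with fresh, mutually disjoint control variables, an ATA $D_i$ and a chain formula $G_i$, together with the right-theorem $\thmR \Bf D_i \imp \Box\bigl((B_i\chop C_i)\equiv(G_i\chop C_i)\bigr)$. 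Let $D'$ be the conjunction $\bigwedge_{1\le i\le m} D_i$, a well-formed ATA whose auxiliary variables do not occur in $K$.

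First I would introduce these auxiliary variables: by Lemma~\ref{temporal-operator-bf-atas-and-consistency-lem}, right-consistency of $K$ gives right-consistency of $K\Andd\Bf D'$. Next I would eliminate the $B_i$ by a sequence of replacements. Setting $K'_1=K$, for $i=1,\ldots,m$ form $K'_{i+1}$ from $K'_i$ by replacing the right-instance $B_i\chop C_i$ with $G_i\chop C_i$. This is the key reason Lemma~\ref{ata-and-ptlu-formula-lem} is phrased for whole right-chops: the left operand $B_i$ is itself \emph{not} a right-instance (it sits in the left of a chop), but the right-chop $B_i\chop C_i$ is, so it may be replaced as a unit. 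Lemma~\ref{basic-limited-replacement-lemma-for-pitl-lem} yields $\thmR \Box\bigl((B_i\chop C_i)\equiv(G_i\chop C_i)\bigr)\imp(K'_i\equiv K'_{i+1})$, which with the right-theorem above gives $\thmR \Bf D_i\imp(K'_i\equiv K'_{i+1})$; since $\Bf$ distributes over conjunction we have $\thm \Bf D'\imp\Bf D_i$, and chaining the $m$ steps produces $\thmR \Bf D'\imp(K\equiv K')$, where $K':=K'_{m+1}$ is a $\PITLK$ formula in which the left operand of every right-chop is a chain formula. From the right-consistency of $K\Andd\Bf D'$ and this deduced equivalence it follows that $K'\Andd\Bf D'$ is right-consistent as well.

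Finally I would reduce to $\PTLU$ and invoke its completeness. Lemma~\ref{re-expressing-pitlk-with-chain-formulas-in-ptlu-lem} supplies a $\PTLU$ formula $Y$ with $\thmR K'\equiv Y$. For the other conjunct, distributing $\Bf$ over the conjunction and applying Lemma~\ref{deductive-eqv-of-two-reps-of-auto-runs-lem} to each constituent semi-automaton's ATA re-expresses $\Bf D'$ as a $\PTL$ formula $X$ (a conjunction of formulas $\init\And\Box(\More\imp T)$); only the fresh control variables and the already non-right-variables of $K$ are moved, so right-theoremhood survives. Hence $K'\Andd\Bf D'$ is deducibly equivalent, as a right-theorem, to the $\PTLU$ formula $Y\Andd X$, which is therefore right-consistent. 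Completeness for $\PTLU$, reduced to completeness for $\PTL$ in \S\ref{ptl-with-until-subsec}, yields a model $\sigma\vld Y\Andd X$; soundness then gives $\sigma\vld K'\Andd\Bf D'$, and applying the sound implication $\Bf D'\imp(K\equiv K')$ gives $\sigma\vld K$. Thus every right-consistent $\PITLK$ formula is satisfiable, which is right-completeness for $\PITLK$.

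I expect the main obstacle to be the ordered sequence of replacements $K'_1,\ldots,K'_{m+1}$ over possibly nested right-chops: I must keep the auxiliary variables of the $m$ compound semi-automata mutually disjoint and disjoint from $K$, verify after each substitution that the next right-chop is still a right-instance in $K'_{i+1}$, and check throughout that no genuine right-variable of $K$ ever migrates into the left of a chop or into a chop-star, so that right-consistency and right-theoremhood are preserved at every stage. This careful accounting is exactly what the right-instance machinery and Lemmas~\ref{right-theorems-from-some-derived-rules-lem}--\ref{right-replacement-derived-rule-lemma-for-pitl-lem} are there to support, and it is the reason the proof manipulates $K'_{m+1}$ rather than a single monolithic $K'$.
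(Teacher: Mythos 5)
Your proposal is correct and follows essentially the same route as the paper's own proof: the same sequence $K=K'_1,\ldots,K'_{m+1}$ of replacements of right-chop left operands by chain formulas via Lemmas~\ref{ata-and-ptlu-formula-lem} and~\ref{basic-limited-replacement-lemma-for-pitl-lem}, the same introduction of the combined ATA $D'$ through Lemma~\ref{temporal-operator-bf-atas-and-consistency-lem}, and the same final reduction to a right-consistent $\PTLU$ conjunction $Y\And X$ via Lemmas~\ref{re-expressing-pitlk-with-chain-formulas-in-ptlu-lem} and~\ref{deductive-eqv-of-two-reps-of-auto-runs-lem}. The only cosmetic difference is that you enumerate $K$'s right-chops up front while the paper selects a right-chop of $K'_i$ at each stage, which is harmless since nested right-chops can only sit in right operands and so survive each replacement.
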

\begin{proof}
  We show that a right-consistent $\PITLK$ formula $K$ is satisfiable.  Our
  proof transforms $K$ to a $\PTLU$ formula. Let $m$ equal the number of $K$'s
  right-chops. We employ $m$ compound semi-automata to obtain ATAs for
  systematically replacing the left operands of $K$'s right-chops by $\PTLU$
  chain formulas.  Note that if $m=0$, then $K$ has no chops but perhaps
  $\Skip$ so $K$ itself is in $\PTL$.  We will construct a sequence of $m+1$
  $\PITLK$ formulas $K'_1$, \ldots, $K'_{m+1}$.  In the final one $K'_{m+1}$,
  left operands of all right-chops are chain formulas so $K'_{m+1}$ is
  deducibly equivalent to some $\PTLU$ formula by
  Lemma~\ref{re-expressing-pitlk-with-chain-formulas-in-ptlu-lem}.  For
  example, suppose $K$ has the form $(B_1\chop w) \imp \bigl(B_2 \chop (B_3
  \chop \Skip)\bigr)$.  Then $K$ has 3 right-chops so $m$ equals 3 and $K'_4$
  has the form $(G_1\chop w) \imp \bigl(G_2 \chop (G_3 \chop \Skip)\bigr)$,
  where $G_1$, $G_2$ and $G_3$ in $K'_4$'s 3 right-chops' left sides are all
  chain formulas.

  Let $K'_1$ be $K$.  For each $i\colon 1\le i\le m$, we choose a right-chop
  in $K'_i$.  This has the form $B_i\chop K''_i$ for some $\PITL$ formula
  $B_i$ and $\PITLK$ formula $K''_i$.  Lemma~\ref{ata-and-ptlu-formula-lem}
  yields a compound semi-automaton $R'_i$, ATA $D_{R'_i}$ and a chain formula
  $G_{R'_i}$ for which the next right-theorem is deducible:
  \begin{equation}
    \label{completeness-of-pitl-axiom-system-1-eq}
    \TheoremR \Bf D_{R'_i}
      \Implies \Box\bigl((B_i\chop K''_i)\equiv (G_{R'_i}\chop K''_i)\bigr)
\dotspace.
  \end{equation}
  We employ Lemma~\ref{basic-limited-replacement-lemma-for-pitl-lem}
  concerning replacement of right-instances to relate $K'_i$ and $K'_{i+1}$ by
  replacing the selected $B_i\chop K''_i$ by $G_{R'_i}\chop K''$:
  \begin{equation*}
    \TheoremR \Box\bigl((B_i\chop K''_i)\equiv (G_{R'_i}\chop K''_i)\bigr)
      \Implies K'_i \equiv K'_{i+1}
\dotspace.
  \end{equation*}
  This and implication~\eqref{completeness-of-pitl-axiom-system-1-eq} together
  ensure the right-theorem $\thmR \Bf D_{R'_i} \imp (K'_i\equiv K'_{i+1})$.
  Without loss of generality, assume the control variables in the compound
  semi-automata $R'_1,\ldots, R'_m$ are distinct.  We deduce from the $m$
  implications $\thmR \Bf D_{R'_i} \imp (K'_i\equiv K'_{i+1})$ just mentioned
  the next right-theorem:
  \begin{equation}
    \label{completeness-of-pitl-axiom-system-2-eq}
    \textstyle
    \TheoremR \bigwedge_{1\le i\le m}(\Bf D_{R'_i}) \Implies  K\equiv K'_{m+1}
\dotspace.
  \end{equation}
  The left operand of each right-chop in $K'_{m+1}$ is a chain formula.  Hence
  by Lemma~\ref{re-expressing-pitlk-with-chain-formulas-in-ptlu-lem}, we can
  deduce the equivalence of $K'_{m+1}$ and some $\PTLU$ formula $Y$ to obtain
  the $\PITL$ right-theorem $\thmR K'_{m+1}\equiv Y$.  By this and
  implication~\eqref{completeness-of-pitl-axiom-system-2-eq}, the next
  implication is a right-theorem:
  \begin{equation}
    \label{completeness-of-pitl-axiom-system-3-eq}
    \textstyle
    \TheoremR \bigwedge_{1\le i\le m}(\Bf D_{R'_i}) \Implies  K\equiv Y
    \dotspace.
  \end{equation}
  Right-variables in the original formula $K$ do not occur in any $D_{R'_i}$
  since the construction of each $D_{R'_i}$ only involves the left sides of
  $K$'s right-chops.  The right-variables in $K$ are still right-variables in
  $Y$ and implication~\eqref{completeness-of-pitl-axiom-system-3-eq}.  Now
  $K$'s right-consistency and $m$ applications of
  Lemma~\ref{temporal-operator-bf-atas-and-consistency-lem} ensure the
  right-consistency of $K \And \bigwedge_{1\le i\le m}(\Bf D_{R'_i})$.  This
  is re-expressible as $K \And \Bf D'$, where the ATA $D'$ is the conjunction
  of the ATAs $D_{R'_1},\ldots,D_{R'_m}$ (we use $\PITL$
  Theorem~\refid{BfAndEqv}).  Hence the formula $K\And \Bf D'$ is
  right-consistent.  We deduce the equivalence of $\Bf D'$ and some $\PTL$
  formula $X$ as $\thm X\equiv \Bf D'$ by invoking
  Lemma~\ref{deductive-eqv-of-two-reps-of-auto-runs-lem} on the individual
  basic semi-automata in each $R'_i$ to re-express each one's runs in $\PTL$
  and then forming the conjunction of results.  Now $D'$ and $X$ have the same
  variables.  Hence the equivalence $X\equiv \Bf D'$ has no right-variables
  because of $\Bf D'$ and is a right-theorem (i.e., $\thmR X\equiv \Bf D'$).
  This with the equivalence $\thmR \Bf D'\equiv\bigwedge_{1\le i\le m}(\Bf
  D_{R'_i})$ and implication~\eqref{completeness-of-pitl-axiom-system-3-eq}
  then yield the equivalence of formulas $K \And \Bf D'$ and $Y \And X$ as a
  right-theorem.  Therefore the $\PTLU$ formula $Y \And X$, like $K \And \Bf
  D'$, is right-consistent and by right-completeness for $\PTLU$ (discussed in
  \S\ref{ptl-with-until-subsec}) is satisfiable as is $K$.
\end{proof}

We now prove our main result
Theorem~\ref{completeness-of-pitl-axiom-system-thm} about right-completeness
for $\PITL$:
\begin{proof}[Proof of Theorem~\ref{completeness-of-pitl-axiom-system-thm}.]
  Let $A$ be a right-consistent $\PITL$ formula.
  Lemma~\ref{completeness-for-pitlk-lem} ensures right-completeness for
  $\PITLK$. Hence by this and Lemma~\ref{reduction-of-pitl-pitlk-lem}, there
  exists some $\PITLK$ formula $K$ having the same variables and
  right-variables as $A$ and with the deducible equivalence $\thmR A\equiv K$.
  Now $K$ like $A$ is right-consistent and so satisfiable by
  right-completeness for $\PITLK$ (Lemma~\ref{completeness-for-pitlk-lem}).
  Hence $A$ is satisfiable.
\end{proof}
As we already remarked in
Section~\ref{right-instances-right-variables-and-right-theorems-sec}, the
completeness proof can be regarded as two parallel proofs.  The simpler one
uses the extra inference
rule~\eqref{optional-inference-rule-for-right-variables-1-eq} mentioned there
to avoid right-theorems and right-completeness.  The more sophisticated proof
uses right-theoremhood instead of the inference rule and ensures that any
valid $\PITL$ formula is not just a theorem but a right-theorem.

 \vspace{5pt}

 This concludes the $\PITL$ completeness proof.

\section{Some Observations about the Completeness Proof}

\label{some-observations-about-the-completeness-proof-sec}

We now consider various issues concerning the new $\PITL$ axiom system and
techniques employed in the completeness proof.  Most of the points address
questions previously raised by others.

\subsection{Alternative Axioms for PTL}

\label{alternative-axioms-for-ptl-subsec}

Axiom~\refid{VPTL} in
Table~\ref{axiom-system-for-pitl-with-finite-and-infinite-time-table} can
optionally be replaced by four lower level axioms.  Readers may wish to skip
over the details now given.  One of the lower level axioms is~\refid{Taut} in
Table~\ref{pitl-axiom-system-for-finite-time-table} permitting $\PITL$
formulas which are substitution instances of conventional (nonmodal)
tautologies.  For example, from the valid propositional formula $p \imp (p \Or
q)$ follows $\thm A\imp (A \Or B)$, for any $\PITL$ formulas $A$ and $B$.  The
other three axioms involve $\PTL$.  These are Axioms~\refid{FNextImpWeakNext}
and~\refid{FBoxInduct} found in
Table~\ref{pitl-axiom-system-for-finite-time-table} and also $\thm
\Skip\imp\Finite$.  The three Axioms~\refid{Taut}, \refid{FNextImpWeakNext}
and~\refid{FBoxInduct} together with the remaining $\PITL$ axioms and
inference rules in
Table~\ref{axiom-system-for-pitl-with-finite-and-infinite-time-table} then
suffice to derive a slight variant proposed by us in~\cite{Moszkowski04a} of
the complete $\PTL$ axiom system $\mathit{D^0\!X}$ for $\Next$ and $\Diamond$
(and $\Box$) of Gabbay et al.~\citeyear{GabbayPnueli80}, itself based on an
earlier one $\mathit{DX}$ of Pnueli~\citeyear{Pnueli77}.  We denote our
$\mathit{D^0\!X}$ variant here as $\mathit{D^0\!X'}$.  It permits both finite
and infinite time, whereas $\mathit{D^0\!X}$ assumes infinite time.  We
previously did an explicit deduction of $\mathit{D^0\!X'}$ in our completeness
proof for $\PITL$ with just finite time as described in~\cite{Moszkowski04a}.
However, for infinite time we need the additional axiom $\thm \Skip \imp
\Finite$ because Axiom~\refid{FiniteImpChopEmpty} (unlike
Axiom~\refid{FChopEmpty} in
Table~\ref{pitl-axiom-system-for-finite-time-table}) does not suffice on its
own to deduce $\thm \Skip\equiv\Next\Empty$ to re-express $\Skip$ using
$\Next$. Without $\thm \Skip \imp \Finite$, we can only deduce the $\PITL$
theorem $\thm \Finite \imp(\Skip\equiv\Next\Empty)$ from
Axiom~\refid{FiniteImpChopEmpty} together with the definition of $\Next$ in
terms of $\Skip$ and chop.  In addition, from $\mathit{D^0\!X'}$ (once
deduced), we can obtain $\thm (\Next\Empty) \imp \Finite$.  These two
implications combined with $\thm \Skip \imp \Finite$ and simple propositional
reasoning (involving Axiom~\refid{Taut} and modus ponens) yield our goal $\thm
\Skip\equiv \Next\Empty$.

\subsection{Feasibility of Reduction from PITL to PTL}

\label{feasibility-of-reduction-from-pitl-to-ptl-subsec}

Some people have expressed serious doubts about our proof's technical
feasibility owing to the significant gap in expressiveness between $\PITL$ and
$\PTL$.  We therefore believe it is worthwhile to emphasis that in spite of
this gap, any $\PITL$ formula can be represented by some $\PTL$ formula
containing auxiliary variables.  This is because conventional semantic
reasoning about omega-regular languages and omega automata ensures that for
any $\PITL$ formula $A$, there exist conventional nondeterministic omega
automata (such as B\"uchi automata) which recognise $A$.  For example, we
present in~\cite{Moszkowski00} a decidable version of quantified $\ITL$ which
includes $\QPITL$ (defined earlier in Section~\ref{pitl-sec}) as a subset and
then show how to encode formulas in B\"uchi automata.  Various deterministic
omega automata (e.g., with Muller, Rabin and Streett acceptance conditions)
are also suitable for this.  Such an automaton's accepting runs can be
trivially encoded by some $\PTL$ formula $X$ with auxiliary variables $p_1$,
$\ldots$, $p_n$ representing the automaton's control state.  Hence the $\PITL$
formula $A$ and the $\QPTL$ formula $\Exists{p_1\ldots p_n}X$ are semantically
equivalent, where $\exists$ is defined earlier in Section~\ref{pitl-sec}.
Furthermore, the (quantifier-free) $\PITL$ implication $X\imp A$ is valid and
consequently any model of $X$ can also serve as one for $A$.  Indeed the
technique of re-expressing formulas in omega-regular logics by means of
nondeterministic and deterministic omega automata expressed in versions of
$\PTL$ (subsequently enclosed in a simple sequence of existential quantifiers)
is central to the completeness proofs for $\QPTL$ variants by Kesten and
Pnueli~\citeyear{KestenPnueli2002} and French and
Reynolds~\citeyear{FrenchReynolds03}.  A related approach can be used to
reduce decidability of $\PTL$ with the (full) $\Until$ operator to $\PTL$
without $\Until$.  This works in spite of the fact that $\PTL$ with $\Until$
is strictly more expressive as proved by Kamp~\citeyear{Kamp68} (see also
Kr{\"o}ger and Merz~\citeyear{KroegerMerz08}).  We replace each $\Until$ in a
formula with an auxiliary variable which mimics its behaviour along the lines
of the two axioms for $\Until$ previously mentioned in
\S\ref{ptl-with-until-subsec}.  For example, when testing the satisfiability
of the formula $p \And \Next(p\Until q) \And \Not(p\Until q)$, we transform it
into the formula below with an extra auxiliary variable $r$:
  \begin{equation*}
    p \Andd \Next r \Andd \Not r
     \Andd \Box\bigl(r \EQUIV q \Or (p \And \Next r)\bigr)
     \Andd \Box(r \imp \Diamond q)
\dotspace.    
  \end{equation*}

\subsection{Benefits of Restricted Chop-Stars in Chain Formulas}

Lemma~\ref{completeness-for-pitlk-lem} states that any valid $\PITLK$ formula
can be deduced as a right-theorem.  Within the proof of this lemma, all
chop-star formulas found in the $\PITLK$ formula $K'_{m+1}$ only occur in
chain formulas. Such chop-star formulas therefore have the very restricted
form $(\Skip \And T)\SChopstar$ for expressing the $\PITL$-based version of
$\Until$ defined earlier in \S\ref{ptl-with-until-subsec} for $\PTLU$.  The
simplicity of these chop-star constructs greatly helps us to reduce $K'_{m+1}$
to the semantically equivalent $\PTLU$ formula $Y$ and show that their
equivalence is a deducible theorem.  Incidentally,
in~\cite{Moszkowski07\LMCSPaperBib} we prove that any $\PITL$ formula $(\Skip
\And T)\SChopstar$ can be expressed in $\PTL$ as $\Box(\More \imp T)$ and make
extensive use of this equivalence.  In contrast, arbitrary chop-star formulas
cannot necessarily be re-expressed as semantically equivalent $\PTL$ formulas.

\subsection{Thomas' Theorem and the Size of  Deductions}

Section~\ref{reduction-of-chop-omega-sec} uses Thomas' theorem to re-express a
$\PITL$ formula $A$ as a semantically equivalent $\PITLK$ formula $K$.  The
two known proofs of Thomas' theorem by Thomas himself~\citeyear{Thomas79} and
Choueka and Peleg~\citeyear{ChouekaPeleg83} unfortunately do not ensure that
$K$ is in some sense natural and succinct or even obtainable in a
computationally feasible way.  Therefore our completeness proof does not
guarantee simple deductions.  The main problem concerns the difficulties in
nontrivial transformations on the underlying omega automata representing
$\PITL$ formulas.  Other established completeness proofs for comparable
omega-regular logics with nonelementary complexity such as
$\QPTL$~\cite{KestenPnueli95,KestenPnueli2002,FrenchReynolds03} currently
share a similar fate. However, our proof bypasses an explicit embedding of the
intricate process of complementing nondeterministic omega automata.

\subsection{Justification for Using ATAs in the Completeness Proof}

Some readers will wonder why we need ATAs introduced in
\S\ref{atas-for-semi-automata-and-automata-subsec} and do not just use the
$\PTL$-based representation of semi-automata and automata presented
in~\S\ref{deterministic-finite-state-semi-automata-subsec}
and~\S\ref{deterministic-finite-state-automata-subsec}.  The main reason is
that, as far as we currently know, this requires a more intricate inference
rule than our $\PITL$-based one~\refid{BfAux}.  In particular, a $\PTL$-based
rule suitable for our purposes must permit the simultaneous introduction of
\emph{multiple} auxiliary propositional variables analogous to the one French
and Reynolds~\citeyear{FrenchReynolds03} were compelled to employ for $\QPTL$
without past time (see also~\cite{KroegerMerz08}).

\section{Existing Completeness Proofs for Omega-Regular Logics}

\label{existing-completeness-proofs-for-omega-regular-logics-sec}

We now compare our axiomatic completeness proof with related ones for other
omega-regular logics. Here is a list of a number of such formalisms:
\begin{iteMize}{$\bullet$}
\item Logics with nonelementary complexity:
\begin{iteMize}{$-$}
\item The \emph{Second-Order Theory of Successor
  ($\SOneS$)}~\cite{Buechi62\LMCSPaperBib}
\item \emph{Regular Logic}~\cite{Paech89} (This includes a $\PITL$ subset.)
\item  Various temporal logics with quantification:
\begin{iteMize}{$*$}
\item $\QPTL$ (with and without past time) (e.g., see~\cite{KroegerMerz08})
\item Quantified $\ITL$ with finite domains~\cite{Moszkowski00}
\end{iteMize}
\end{iteMize}
\item Logics with elementary complexity:
\begin{iteMize}{$-$}
\item \emph{Extended Propositional Linear-Time Temporal Logic
    ($\ETL$)}~\cite{Wolper83}
\item \emph{Linear-Time $\mu$-Calculus ($\nuTL$)}\relax
~\cite{BarringerKuiper86,BanieqbalBarringer89a}
\item \emph{Dynamic Linear Time Temporal Logic}\relax
~\cite{HenriksenThiagarajan99}
\end{iteMize}
\end{iteMize}
Kr{\"o}ger and Merz~\citeyear{KroegerMerz08} summarise $\QPTL$ and $\nuTL$ and
some axiomatisations.  See also the earlier surveys about the expressiveness
of various formalisms such as $\PTL$ and $\QPTL$ given by Lichtenstein et
al.~\citeyear{LichtensteinPnueli85} and Emerson~\citeyear{Emerson90}.  Like
$\SOneS$ and $\QPTL$, $\PITL$ has nonelementary complexity (e.g., see our
results in collaboration with J.~Halpern in~\cite{Moszkowski83a} (reproduced
in~\cite{Moszkowski04a})).  In contrast, $\ETL$ and $\nuTL$ have only
elementary complexity.

\subsection{Omega-Regular Logics with Nonelementary Complexity}

\label{omega-regular-logics-with-nonelementary-complexity-subsec}

Let us consider axiomatic completeness for omega-regular logics which, like
$\PITL$, have nonelementary complexity.  We later discuss some with elementary
complexity in~\S\ref{omega-regular-logics-with-elementary-complexity-subsec}.

We are not the first to consider a version of quantifier-free $\PITL$ with
infinite time.  Paech~\citeyear{Paech89} in a workshop paper presents
completeness proofs for Gentzen-style axiom systems for versions of a
\emph{Regular Logic} with branching-time and linear-time and both finite and
infinite time (see also~\cite{Paech88-techrep}).  The linear-time variant $\LRL$
can be regarded as $\PITL$ with the addition of a binary temporal operator
\emph{unless}.  Paech's framework is presented in a rather different way from
ours to accommodate both branching-time and linear-time models of time, with the
overwhelming emphasis on the branching-time one.  Perhaps more significantly,
the chop-star operator $A\ConventionalChopstar$ in $\LRL$ is limited, like
Kleene star, to finitely many iterations (we look at a closely related $\PITL$
subset, called by us $\PITLK$, in \S\ref{pitl-without-omega-iteration-subsec}).
Due to a theorem of Thomas~\cite{Thomas79} (which we discuss and use in
\S\ref{pitl-without-omega-iteration-subsec} and
Section~\ref{reduction-of-chop-omega-sec}), $\LRL$ has omega-regular
expressiveness, although it is less succinct than full $\PITL$.  Paech's
restricted chop-star does not support chop-omega's infinite iteration.  Indeed,
Thomas' theorem is not at all mentioned in the completeness proof and does not
serve as a bridge in the way we apply it in
Section~\ref{reduction-of-chop-omega-sec}.  Paech's stimulating and valuable
presentation is quite detailed, especially in the extended
version~\cite{Paech88-techrep}.  Nevertheless, in our opinion (based on many
years of experience with doing proofs in $\ITL$), its treatment of $\LRL$ needs
some clarification, as the following points demonstrate:
\begin{iteMize}{$\bullet$}
\item The unwinding of chop-star does not take into account that for induction
  over time to work in $\PITL$, individual iterations need to take at least
  two states.  This contrasts with our Axioms~\refid{SChopStarEqv}
  and~\refid{ChopOmegaInduct} in
  Table~\ref{axiom-system-for-pitl-with-finite-and-infinite-time-table} and an
  analogous one which Bowman and Thompson use in~\cite{BowmanThompson03}.
  Kono's tableaux-based decision procedure for $\PITL$~\cite{Kono95} likewise
  ensures that iterations have more than one state.
\item The proof system includes nonconventional rules requiring some temporal
  formulas to be in a form analogous to regular expressions.
\item The main proof concerns a branching-time semantics.
  In contrast, only a couple of sentences are devoted to extending the proof
  to a linear-time interval framework appropriate for $\LRL$.
\item The completeness proof uses constructions involving deterministic
  automata for finite words.  It also mentions Thomas' theorem which ensures
  omega-regular expressiveness of $\LRL$.  Now the proof by Choueka and
  Peleg~\citeyear{ChouekaPeleg83} of Thomas' theorem using standard
  deterministic omega automata quite clearly shows the link between $\LRL$ and
  these automata.  However Paech does not discuss how the $\LRL$ completeness
  proof relates to techniques previously developed by
  McNaughton~\citeyear{McNaughton66} and others for building deterministic
  omega automata from deterministic automata for finite words in order to
  recognise omega-regular languages. Some kind of explicitly described
  adaptation of such methods seems to us practically unavoidable.  In
  contrast, our proof quite clearly benefits from this work as we discuss in
  detail in \S\ref{compound-semi-automata-for-suffix-recognition-sec}.
\item Except for the $\LRL$ construct $L_0$ (the same as $\Empty$ in $\PITL$),
  no derived interval-oriented operators are defined (e.g, to examine prefix
  subintervals or to perform a test in a finite interval's final
  state). Moreover, it does not appear that the $\LRL$ proof system was ever
  used for anything.
\item One minor puzzling feature of the $\LRL$ axiom system is that in its
  stated form, the linear-time proof rules for Paech's unary construct $\Next A$
  (which is actually the weak-next operator $\WeakNext$ mentioned by us in
  Table~\ref{pitl-axiom-system-for-finite-time-table}) ensure that every state
  has a successor state.  This clearly forces the linear-time variant to be
  limited to infinite state sequences.  In practice, such a requirement is
  counterproductive for $\LRL$, which permits finite time and in particular has
  a primitive finite-time construct $L_1$ that is identical to our own construct
  $\Skip$ for two-state intervals.  The $\LRL$ formula
  $L_1\ConventionalChopstar$ is used in rules to force finite intervals.  The
  $\LRL$ proof rules for $\Next$ which impose infinite time clash with rules
  containing the formula $L_1\ConventionalChopstar$ and likewise with rules
  having $L_0$ to specify one-state intervals.  However, the difficulty with the
  $\LRL$ operator $\Next$ and infinite intervals seems to be an easily
  correctable oversight.
\end{iteMize}
Unfortunately, no subsequent versions of Paech's completeness proof for $\LRL$
with more explanations and clarifications have been published.  Indeed, the
difficulties faced at the time by Paech and others such as Rosner and
Pnueli~\citeyear{RosnerPnueli86} (discussed below) when attempting to develop
complete axiomatisations of versions of $\ITL$ with infinite time were such
that subsequent published work in this area did not appear until over ten
years later.  Incidentally, the manner of Paech's proof based on
\emph{Propositional Dynamic Logic}
($\PDL$)~\cite{FischerLadner79,HarelKozen2000} and the associated
\emph{Fischer-Ladner closures} suggests that it could have connections with
much later research by Henriksen and
Thiagarajan~\citeyear{HenriksenThiagarajan99} on axiomatising \emph{Dynamic
  Linear Time Temporal Logic}, a formalism combining $\PTL$ and $\PDL$ which
we shortly mention in
\S\ref{omega-regular-logics-with-elementary-complexity-subsec}.  On the other
hand, our own $\PITL$ completeness proof here and our earlier one for $\PITL$
with just finite time~\cite{Moszkowski04a} do not involve Fischer-Ladner
closures.

Completeness proofs for logics such as $\SOneS$~\cite{Siefkes70}, $\QPTL$ with
past time~\cite{KestenPnueli95,KestenPnueli2002} and without past
time~\cite{FrenchReynolds03} and one by us for quantified $\ITL$ with finite
domains~\cite{Moszkowski00} use quantified formulas encoding omega automata
and explicit deductions involving nontrivial techniques to complement them.
As we already noted in Section~\ref{introduction-sec}, our earlier axiomatic
completeness proof~\cite{Moszkowski00} for quantified $\ITL$ with finite
domains requires the use of quantifiers and does not work when formulas were
limited to have just propositional variables.  French and
Reynold's~\citeyear{FrenchReynolds03} axiom system for $\QPTL$ without past
time contains a nontrivial inference rule for introducing a variable number of
auxiliary variables.  This inference rule is required by the automata-based
completeness proof.

The axiomatic completeness proofs for the logics with quantification just
mentioned with nonelementary complexity involve using quantified auxiliary
variables to re-express a formula $A$ as another semantically equivalent
formula $\Exists{p_1\ldots p_n}X$, where $\exists$ for $\QPITL$ and $\QPTL$ is
defined earlier in Section~\ref{pitl-sec}.  Here $p_1,\dots, p_n$ are the
auxiliary variables and $X$ is a formula in a much simpler logical subset,
such as some version of (quantifier-free) $\PTL$.  Axiomatic completeness for
the subset is much easier to show than for the original logic.  Completeness
is then proved by the standard technique of demonstrating that any consistent
formula $A$ (i.e., not deducibly false) in the full logic is also satisfiable.
In particular, we deduce as a theorem the equivalence
$A\equiv\Exists{p_1\ldots p_n}X$.  Now from this, the assumed logical
consistency of $A$ and simple propositional reasoning, we readily obtain
consistency for $\Exists{p_1\ldots p_n}X$.  Standard reasoning about
quantifiers then ensures $X$ is consistent.  Completeness for the logical
subset yields a model for $X$ which can also serve as one for $A$.  Normally
in such completeness proofs, the formula $X$ encodes some kind of omega
automaton such as a nondeterministic B\"uchi automata. The details are not
relevant for our purposes here.  The deduction of the equivalence
$A\equiv\Exists{p_1\ldots p_n}X$ in these proofs has always involved
explicitly embedding nontrivial techniques for manipulating such omega
automata.

\begin{mycomment}
  In contrast to such established proofs for logics with nonelementary
  complexity, our approach does not need quantifiers.  We start by ensuring
  completeness for a subset of $\PITL$ formulas restricted to a version of
  $\PTL$ with an $\Until$ operator and called $\PTLU$.  From a consistent
  $\PITL$ formula $A$ we eventually obtain a $\PTL$ formula $X$ and $\PTLU$
  formula $Y$ which can include auxiliary propositional variables.  We ensure
  the implication $Y \And X\imp A$ is deducible as a $\PITL$ theorem and also
  that $Y\And X$, like $A$, is consistent.  Completeness for $\PTLU$ yields a
  model for $Y\And X$, which also serves as one for $A$.  Now if we assume
  that $p_1\ldots p_n$ are the auxiliary variables, then $A$ and
  $\Exists{p_1\ldots p_n}(Y\And X)$ are semantically equivalent and the
  $\QPITL$ formula $A\equiv\Exists{p_1\ldots p_n}(Y\And X)$ is valid.
  However, our proof, unlike the other established proofs for omega-regular
  logics with nonelementary complexity, does not need to deduce such a
  quantified formula and indeed cannot since basic $\PITL$ does not have
  quantifiers.
\end{mycomment}

In contrast to our approach, most of the established axiomatic completeness
proofs for logics with nonelementary complexity need quantifiers.  The one
exception is Paech's Regular Logic, which does not have quantifiers and in
linear time is like our $\PITLK$, the subset of $\PITL$ without chop-omega
defined earlier in \S\ref{pitl-without-omega-iteration-subsec}.  Our
quantifier-free proof also benefits from the hierarchical application of some
previously obtained semantic theorems and related techniques expressible as
valid formulas in restricted versions of $\PITL$ (such as $\PITL$ with just
finite time).  This largely spares us from explicit, tricky reasoning about
complementing omega automata.  Once we have ensured axiomatic completeness for
these versions of $\PITL$, valid formulas in them can be immediately deduced as
theorems. For example, we invoke (without proof) the theorem of Thomas at the
end of~\cite{Thomas79} to show that $\PITLK$ has the same expressiveness as full
$\PITL$.  Our completeness proof then combines this result with completeness for
$\PITLK$ to demonstrate that any $\PITL$ formula is deducibly equivalent to one
in $\PITLK$.

Our completeness proof for $\PITL$ with both finite and infinite time does not
actually require a proof of the axiomatic completeness of a version of $\PTL$
with this time model because Axiom~\refid{VPTL} in
Table~\ref{axiom-system-for-pitl-with-finite-and-infinite-time-table} includes
all substitution instances of valid $\PTL$ formulas.  For our purposes, even
axiomatic completeness for $\PTLU$ can be based on a reduction to $\PTL$ which
invokes Axiom~\refid{VPTL}.  However, as we noted in
\S\ref{alternative-axioms-for-ptl-subsec}, some alternative, lower level
axioms for the $\PITL$ axiom system can be used which would actually involve
the reliance on a complete $\PTL$ axiom system.  Our older axiom system for
$\PITL$ with just finite time in
Table~\ref{pitl-axiom-system-for-finite-time-table} includes explicit axioms
of this sort but of course can be readily modified to similarly use just a
version of Axiom~\refid{VPTL} for finite time.

Even if we choose to use the alternative axioms and therefore explicitly rely on
some provably complete $\PTL$ axiom system, the proofs are fairly easy to obtain
via tableaux and other means (e.g., see Gabbay et al.~\citeyear{GabbayPnueli80},
Lichtenstein and Pnueli~\citeyear{LichtensteinPnueli00}, Kr{\"o}ger and
Merz~\citeyear{KroegerMerz08} and
Moszkowski~\citeyear{Moszkowski07\LMCSPaperBib}).  Such methods often have
associated practical decision procedures which in many cases are not so hard to
implement.  This contrasts with the explicit encoding in deductions of much more
difficult automata-theoretic and combinatorical techniques to complement
omega-regular languages in completeness proofs for other omega-regular logics
with nonelementary complexity such as $\SOneS$~\cite{Siefkes70} and two versions
of $\QPTL$~\cite{KestenPnueli2002,FrenchReynolds03}.  Furthermore, the
completeness proofs for $\QPTL$ in any case also rely on reductions to some form
of axiomatic completeness for $\PTL$ (which, like in our presentation, can be
used without reproving it).  Those $\QPTL$ axiom systems could alternatively be
modified to include a suitable version of our Axiom~\refid{VPTL}.  So even if we
add a few extra axioms for $\PTL$, we still feel justified in regarding our
approach, which is partly based on invoking Thomas' theorem without having to
encode a proof of it in deductions, as indeed being much more implicit than
previous completeness proofs for omega-regular logics with nonelementary
complexity such as $\SOneS$ and $\QPTL$.

\begin{myremark}
  As noted above, unlike previous automata-based approaches, ours avoids
  explicitly defining omega automata and embedding various associated explicit
  deductions concerning complicated proofs of some known results about them.
  Nevertheless, omega automata can be used in a simple semantic argument
  ensuring that for any satisfiable $\PITL$ formula, there exists some
  satisfiable $\PTL$ formula which implies it.  This is because any
  omega-regular language can be recognised by such an automaton which itself
  is encodable in a $\QPTL$ formula of the form $\Exists{p_1\ldots p_n}X'$,
  for some $\PTL$ formula $X'$.  So for any $\PITL$ formula, there is some
  semantically equivalent $\QPTL$ formula of this kind and its quantifier-free
  part therefore implies the $\PITL$ formula.  Clearly, the $\PITL$ formula is
  satisfiable iff the $\PTL$ subformula is.
\end{myremark}

Rosner and Pnueli's version of $\PITL$~\citeyear{RosnerPnueli86} with infinite
time and without chop-star is not an omega-regular logic since it has the
(more limited) expressiveness of conventional $\PTL$.  Nevertheless, it in
common with $\SOneS$, $\QPTL$ and $\PITL$ has nonelementary computational
complexity.  Rosner and Pnueli's complete axiom system includes a complicated
inference rule which requires the construction of a table.

\subsection{Omega-Regular Logics with Elementary Complexity}

\label{omega-regular-logics-with-elementary-complexity-subsec}

As we previously noted, $\ETL$, $\nuTL$ and Dynamic Linear Time Temporal Logic
have only elementary complexity.  Wolper~\citeyear{Wolper82a,Wolper83} proves
axiomatic completeness for $\ETL$ but Banieqbal and
Barringer~\citeyear{BanieqbalBarringer86} later present a correction to
Wolper's axiom system and proof requiring a table-based inference rule.
Walukiewicz~\citeyear{Walukiewicz95a} is the first to show axiomatic
completeness for the \emph{modal
  mu-calculus}~\cite{Kozen83,Stirling01,BradfieldStirling06} which subsumes
$\nuTL$.  Kaivola's~\citeyear{Kaivola95} subsequent less complicated
completeness proof for just $\nuTL$ uses a partially semantic approach which
has some similar aims to ours for $\PITL$, but is nevertheless technically
quite different.  It involves a clever normal form and tableaux.  Every
formula is shown to be deducibly equivalent to one in the normal form.  We
believe that our proof, although longer, is in certain respects more natural
and straightforward than even Kaivola's at the deductive level.

Dynamic Linear Time Temporal Logic combines $\PTL$ and \emph{Propositional
  Dynamic Logic (PDL)}~\cite{FischerLadner79,HarelKozen2000} in a linear-time
framework with infinite time.  The axiom system for this formalism has axioms
concerning a variety of transitions~\cite{HenriksenThiagarajan99}.  The
completeness proof is an adaptation of an earlier one for $\PDL$ by Kozen and
Parikh~\citeyear{KozenParikh81}.  It uses consistent sets of formulas.

\section{Future Work}

\label{future-work-sec}

Our plans include using the axiom system as a hierarchical basis for
completeness of $\PITL$ variants with weak chop and chop-star taken as
primitives as well as quantification.  Further possibilities include multiple
time granularities (see our work~\cite{Moszkowski95a} for finite time), a
temporal Hoare logic and also logics such as $\QPTL$ (by encoding within
$\QPTL$ a complete axiom system for quantified $\PITL$ instead using of omega
automata).  The last would show interval logics can be applied to point-based
ones.


In~\cite{Moszkowski04a}, we used semantic techniques to prove axiomatic
completeness for $\PITL$ with finite time by a simple reduction to an equally
expressiveness subset called by us \emph{Fusion Logic} and closely related to
\emph{Propositional Dynamic Logic
  (PDL)}~\cite{FischerLadner79,HarelKozen2000}.  Fusion Logic, like some
variants of PDL, uses discrete linear sequences of states instead of binary
relations as its semantic basis.  Some of the semantic techniques we presented
in Section~\ref{reduction-of-chop-omega-sec} for reducing $\PITL$ to its
expressively equivalent subset $\PITLK$ by eliminating instances of chop-omega
could shorten the completeness proof for Fusion Logic in~\cite{Moszkowski04a},
since that proof contains a similar elimination of chop-star by reduction down
to $\PTL$.  Furthermore, our completeness proof for $\PITL$ with just finite
time in~\cite{Moszkowski04a} uses a separate complete axiom system for Fusion
Logic. This now seems unnecessary for the overall completeness proof for
$\PITL$ with finite time.  Instead, the $\PITL$ axiom system should also
suffice for Fusion Logic in view of our positive experiences with the current
much more streamlined approach for $\PITL$ with infinite time.

The $\PITL$ operators $\Df$ and $\Bf$ for finite prefix subintervals play a
major role in our new completeness proof and appear worthy of more
consideration.  For example, we have recently studied techniques for reasoning
about them with \emph{time reversal}~\cite{Moszkowski11-TIME}.  This is a
natural mathematical way to exploit the symmetry of time in finite intervals. We
can show the validity of suitable finite-time formulas concerning $\Bf$ and
prefix subintervals from the validity of analogous ones for $\Box$ and suffix
subintervals which themselves might even be in conventional $\PTL$ with the
operator $\UntilOp$.  The time symmetry considered here only applies to finite
intervals. However, a valid finite-time formula obtained in this way can
sometimes then be generalised to infinite intervals.  One potential use of time
reversal is to provide an algorithmic reduction of suitable higher-level $\PITL$
formulas to lower-level $\PTL$ ones for model checking.  It also helps extend
compositional techniques we described
in~\cite{Moszkowski94,Moszkowski96,Moszkowski98}.

\begin{mycomment}
  More precisely, an additional $\PITL$ construct is used which has the form
  $A^r$ for any $\PITL$ formula $A$.  Let $\sigma\vld A^r$ be true iff
  $\sigma$ has finite length and $\sigma^r \vld A$, where $\sigma^r$ is
  $\sigma$'s reversal (i.e., the finite interval
  $\sigma_{\intlen{\sigma}}\ldots\sigma_0$).  Using this construct, valid
  formulas involving suffix subintervals (as in basic $\PTL$ with finite time
  where all temporal operators concern such suffixes) can with care be
  interpreted in reverse to obtain useful valid $\PITL$ formulas concerning
  prefix subintervals.

  Likewise, testing for satisfiability for such formulas with $\Bf$ can be
  reduced to it for formulas with $\Box$.  In some cases, the formulas
  containing $\Bf$ can then be extended to deal with the infinite time
  behaviour of $\Bf$ as well.  Time reversal work wells with compositional
  fixpoints of the $\ITL$ operator chop-star (i.e., any formula $A$ for which
  $A\equiv A\ConventionalChopstar$ is valid) which we previous described
  in~\cite{Moszkowski94,Moszkowski96,Moszkowski98}.  Axiomatic completeness
  for $\PITL$ with the operator for time reversal does not appear to be
  difficult to achieve.

  We regard time reversal as potentially offering intriguing evidence
  suggesting that certain subsets of $\ITL$ might offer a fundamentally more
  appropriate and intuitive compositional framework for analysis of concurrent
  systems than the better established point-based logics.  Subsets of $\PITL$
  easily reducible to point-based formulas via time reversal would, like the
  propositional point-based logics, have elementary complexity.  From the
  interval-oriented perspective, traditional point-based temporal logics would
  then be regarded as being lower level, more computationally oriented
  frameworks having an important but nevertheless subordinate role.  Indeed,
  the independently made comments we already noted in our introduction in
  Section~\ref{introduction-sec} by B{\"a}umler et
  al.~\citeyear{BaeumlerSchellhorn09} and Olderog and
  Dierks~\citeyear{OlderogDierks2008} concerning their preference of
  interval-based temporal logics over point-based ones seems quite consistent
  with this observation.
\end{mycomment}

\begin{mycomment}
  Interestingly, in the context of model checking, Vardi~\citeyear{Vardi01}
  analogously regards branching-time temporal logics as being lower level,
  less natural and less compositional than linear-time point-based ones but
  offering a computationally attractive framework which can be profitably
  extended to linear-time temporal logic by means of reductions.  Yet when
  Vardi discusses his view of the ``ultimate'' temporal logic specification
  language for model checking, he only briefly mentions that there exist
  proponents of interval-based linear time logics.  He himself does not seem
  to regard such notations as having much to offer to the wider temporal logic
  community.  Perhaps such an assessment could at some time in the future
  benefit from a re-evaluation which considers computationally feasible
  systematic transformations for interval formulas to point-based ones, much
  like the reductions Vardi mentions from linear time to branching time.  For
  some applications, one might even come to regard solely point-based
  specifications and proofs about transition systems as analogous to the
  unstructured use of gotos criticised by Dijkstra~\citeyear{Dijkstra68}.
  Indeed B{\"a}umler et al.~\citeyear{BaeumlerSchellhorn09} note the following
  concerning algorithms analysed by the KIV interactive theorem prover:
  \begin{quote}
    The line numbers are given for explanatory purposes only. They are not
    used in KIV.
  \end{quote}
  In contrast, conventional point-based proofs often make extensive useful of
  line numbers or locations in programs when reasoning about state
  transitions.
\end{mycomment}

\section*{Conclusions}

We have presented a simple axiom system for $\PITL$ with infinite time and
proved completeness using a semantic framework and reductions to finite time
and $\PTL$.  Our axiom system is demonstrably simpler than the one which Paech
presents for $\LRL$, even though we support omega-iteration and $\LRL$ does
not.  Moreover, the explicitly stated deductions in our proof can be regarded
as being technically less complex then others for quantified omega-regular
logics with nonelementary complexity such as $\SOneS$ and $\QPTL$.  This is
because known completeness proofs for those logics involve an \emph{explicit}
deductive embedding of proofs of theorems about complementing omega-regular
languages and require reasoning about nontrivial algorithms (typically
utilising quantifier-based encodings of omega automata).  Such completeness
proofs therefore do not merely use one such theorem but incorporate
significant aspects of its complicated proof, in effect \emph{reproving it.}
In contrast, we simply invoke Thomas' theorem without referring to how it is
proved.  In our opinion, this conforms much more to the conventional
mathematical practice of using previously established theorems, even
hard-to-prove ones, as modular ``black boxes''.  However, we appreciate that
some readers will argue about the significance of this technical point.

The overall results we have described in our new completeness proof seem to
complement our recent analysis of $\PTL$ using
$\PITL$~\cite{Moszkowski07\LMCSPaperBib}.  One surprise during the development
of our completeness proof concerned how much explicit deductions could be
minimised by application of valid properties proved with semi-automata and
automata on finite words.  Another unexpected benefit arose from the insights
into time reversal.



\section*{Acknowledgements}

We thank Antonio Cau, Dimitar Guelev, Helge Janicke, Colin Stirling, Georg
Struth and the anonymous referees for suggestions.  Shirley Craig provided
outstanding library services and deserves special mention.

\bibliography{\jobname-strings,biblio}

\newcommand{\etalchar}[1]{$^{#1}$}
\begin{thebibliography}{BBK{\etalchar{+}}04}

\bibitem[BB86]{BanieqbalBarringer86}
Behnam Banieqbal and Howard Barringer.
\newblock A study of an extended temporal logic and a temporal fixed point
  calculus.
\newblock Technical Report UMCS-86-10-2, Dept.\ of Computer Science, University
  of Manchester, England, October 1986.
\newblock revised June 1987.

\bibitem[BB89]{BanieqbalBarringer89a}
Behnam Banieqbal and Howard Barringer.
\newblock Temporal logic with fixed points.
\newblock In Behnam Banieqbal, Howard Barringer, and Amir Pnueli, editors, {\em
  Temporal Logic in Specification, Proceedings (Altrincham, UK, April, 1987)},
  volume 398 of {\em LNCS}, pages 62--74, Berlin, 1989. Springer-\kern-0.15em
  Verlag.

\bibitem[BBK{\etalchar{+}}04]{BalserBaeumler04}
Michael Balser, Simon B{\"a}umler, Alexander Knapp, Wolfgang Reif, and Andreas
  Thums.
\newblock Interactive verification of {UML} state machines.
\newblock In Jim Davies, Wolfram Schulte, and Michael Barnett, editors, {\em
  Proc. 6th International Conference on Formal Engineering Methods (ICFEM
  2004)}, volume 3308 of {\em LNCS}, pages 434--448. Springer-\kern-0.15em
  Verlag, 2004.

\bibitem[BBN{\etalchar{+}}10]{BaeumlerBalser10}
Simon B{\"a}umler, Michael Balser, Florian Nafz, Wolfgang Reif, and Gerhard
  Schellhorn.
\newblock Interactive verification of concurrent systems using symbolic
  execution.
\newblock {\em AI Communications}, 23(2--3):285--307, 2010.

\bibitem[BKP86]{BarringerKuiper86}
Howard Barringer, Ruurd Kuiper, and Amir Pnueli.
\newblock A really abstract concurrent model and its temporal logic.
\newblock In {\em Proc. 13th ACM SIGACT-SIGPLAN Symposium on Principles of
  Programming Languages (POPL'86)}, pages 173--183. ACM, 1986.

\bibitem[BS06]{BradfieldStirling06}
Julian Bradfield and Colin Stirling.
\newblock Modal mu-calculi.
\newblock In {\em The Handbook of Modal Logic}, pages 721--756. Elsevier, 2006.

\bibitem[BSTR11]{BaeumlerSchellhorn09}
Simon B{\"a}umler, Gerhard Schellhorn, Bogdan Tofan, and Wolfgang Reif.
\newblock Proving linearizability with temporal logic.
\newblock {\em Formal Aspects of Computing}, 23:91--112, 2011.

\bibitem[BT03]{BowmanThompson03}
Howard Bowman and Simon~J. Thompson.
\newblock A decision procedure and complete axiomatization of finite {Interval
  Temporal Logic} with projection.
\newblock {\em Journal of Logic and Computation}, 13(2):195--239, April 2003.

\bibitem[B{\"u}c62]{Buechi62-lmcs-paper}
J.~R. B{\"u}chi.
\newblock On a decision method in restricted second-order arithmetic.
\newblock In {\em Proc. Int.\ Congress on Logic, Methodology, and Philosophy of
  Science 1960}, pages 1--12. Stanford University Press, 1962.

\bibitem[Che80]{Chellas80}
Brian~F. Chellas.
\newblock {\em Modal Logic: An Introduction}.
\newblock Cambridge University Press, Cambridge, England, 1980.

\bibitem[Cho74]{Choueka74}
Yaacov Choueka.
\newblock Theories of automata on omega-tapes: A simplified approach.
\newblock {\em Journal of Computer and System Sciences}, 8(2):117--141, 1974.

\bibitem[CP83]{ChouekaPeleg83}
Yaacov Choueka and David Peleg.
\newblock A note on {$\omega$}-regular languages.
\newblock {\em Bulletin of the European Association for Theoretical Computer
  Science}, 21:21--23, October 1983.

\bibitem[CZ97]{CauZedan97}
A.~Cau and H.~Zedan.
\newblock Refining {Interval} {Temporal} {Logic} specifications.
\newblock In M.~Bertran and T.~Rus, editors, {\em Transformation-Based Reactive
  Systems Development}, volume 1231 of {\em LNCS}, pages 79--94. AMAST,
  Springer-\kern-0.15em Verlag, 1997.

\bibitem[Dut95]{Dutertre95}
Bruno Dutertre.
\newblock Complete proof systems for first order {Interval Temporal Logic}.
\newblock In {\em Proc.\ 10th Ann. IEEE Symp. on Logic in Computer Science
  (LICS '95)}, pages 36--43, Los Alamitos, Calif., USA, June 1995. IEEE
  Computer Society Press.

\bibitem[DZ08]{DuanZhang08}
Zhenhua Duan and Nan Zhang.
\newblock A complete axiomatization of propositional projection temporal logic.
\newblock In {\em 2nd IEEE/IFIP Int'l Symp.\ on Theoretical Aspects of Software
  Eng. (TASE 2008)}, pages 271--278. IEEE Computer Society Press, 2008.

\bibitem[DZK12]{DuanZhang12}
Zhenhua Duan, Nan Zhang, and Maciej Koutny.
\newblock A complete axiomatization of propositional projection temporal logic.
\newblock {\em Theor.\ Comp.\ Sci.}, 2012.
\newblock doi: \url{10.1016/j.tcs.2012.01.026}.

\bibitem[Eme90]{Emerson90}
E.~Allen Emerson.
\newblock Temporal and modal logic.
\newblock In Jan van Leeuwen, editor, {\em Handbook of Theoretical Computer
  Science}, volume B: Formal Models and Semantics, chapter~16, pages 995--1072.
  Elsevier/MIT Press, Amsterdam, 1990.

\bibitem[Fis11]{Fisher11}
Michael Fisher.
\newblock {\em An Introduction to Practical Formal Methods Using Temporal
  Logic}.
\newblock John Wiley \& Sons, 2011.

\bibitem[FL79]{FischerLadner79}
Michael~J. Fischer and Richard~E. Ladner.
\newblock Propositional dynamic logic of regular programs.
\newblock {\em Journal of Computer and System Sciences}, 18(2):194--211, April
  1979.

\bibitem[FR03]{FrenchReynolds03}
Tim French and Mark Reynolds.
\newblock A sound and complete proof system for {QPTL}.
\newblock In P.~Balbiani, N-Y. Suzuki, F.~Wolter, and M.~Zakharyaschev,
  editors, {\em Advances in Modal Logic}, volume~4, pages 127--148. King's
  College Publications, London, 2003.

\bibitem[GPSS80]{GabbayPnueli80}
D.~Gabbay, A.~Pnueli, S.~Shelah, and J.~Stavi.
\newblock On the temporal analysis of fairness.
\newblock In {\em Proc. 7th Ann. ACM Symp. on Principles of Programming
  Languages (POPL '80)}, pages 163--173. ACM, 1980.

\bibitem[Gue07]{Guelev2007}
Dimitar~P. Guelev.
\newblock Probabilistic interval temporal logic and duration calculus with
  infinite intervals: Complete proof systems.
\newblock {\em Logical Methods in Computer Science}, 3(3), 2007.

\bibitem[HC96]{HughesCresswell96}
George~E. Hughes and Max~J. Cresswell.
\newblock {\em A New Introduction to Modal Logic}.
\newblock Routledge, London, 1996.

\bibitem[HKT00]{HarelKozen2000}
David Harel, Dexter Kozen, and Jerzy Tiuryn.
\newblock {\em Dynamic Logic}.
\newblock MIT Press, Cambridge, Massachusetts, 2000.

\bibitem[HMM83]{HalpernManna83}
J.~Halpern, Z.~Manna, and B.~Moszkowski.
\newblock A hardware semantics based on temporal intervals.
\newblock In J.~Diaz, editor, {\em Proc. 10th Int'l.\ Colloquium on Automata,
  Lan\-guages and Programming {(ICALP '83)}}, volume 154 of {\em LNCS}, pages
  278--291, Berlin, 1983. Springer-\kern-0.15em Verlag.

\bibitem[HT99]{HenriksenThiagarajan99}
Jesper~G. Henriksen and P.~S. Thiagarajan.
\newblock Dynamic linear time temporal logic.
\newblock {\em Annals of Pure and Applied Logic}, 96(1-3):187--207, 1999.

\bibitem[IEE08]{IEEE1647-lmcs-paper}
IEEE.
\newblock {\em Standard for the Functional Verification Language e, Standard
  1647-2008}.
\newblock ANSI/IEEE, New York, 2008.
\newblock Produced by the \textbf{e} {Functional Verification Language Working
  Group}.

\bibitem[ITL12]{ITLwebsite}
{Interval Temporal Logic} webpages.
\newblock \texttt{http://www.tech.dmu.ac.uk/STRL/ITL/}, 2012.

\bibitem[Kai95]{Kaivola95}
Roope Kaivola.
\newblock Axiomatising linear time mu-calculus.
\newblock In Insup Lee and Scott~A. Smolka, editors, {\em CONCUR '95}, volume
  962 of {\em LNCS}, pages 423--437. Springer-\kern-0.15em Verlag, 1995.

\bibitem[Kam68]{Kamp68}
Johan Anthony~Willem Kamp.
\newblock {\em Tense Logic and the Theory of Linear Order}.
\newblock PhD thesis, University of California, Los Angeles, 1968.

\bibitem[KM08]{KroegerMerz08}
Fred Kr{\"o}ger and Stephan Merz.
\newblock {\em Temporal Logic and State Systems}.
\newblock Texts in Theoretical Computer Science (An EATCS Series).
  Springer-\kern-0.15em Verlag, 2008.

\bibitem[Kon95]{Kono95}
Shinji Kono.
\newblock A combination of clausal and non-clausal temporal logic programs.
\newblock In Michael Fisher and Richard Owens, editors, {\em Executable Modal
  and Temporal Logics}, volume 897 of {\em LNCS}, pages 40--57, Berlin,
  February 1995. Springer-\kern-0.15em Verlag.

\bibitem[Koz83]{Kozen83}
Dexter Kozen.
\newblock Results on the propositional $\mu$-calculus.
\newblock {\em Theor.\ Comp.\ Sci.}, 27(3):333--354, 1983.

\bibitem[KP81]{KozenParikh81}
Dexter Kozen and Rohit Parikh.
\newblock An elementary proof of the completeness of {PDL}.
\newblock {\em Theor.\ Comp.\ Sci.}, 14:113--118, 1981.

\bibitem[KP95]{KestenPnueli95}
Y.~Kesten and A.~Pnueli.
\newblock A complete proof system for {QPTL}.
\newblock In {\em Proc. 10th IEEE Symp. on Logic in Computer Science
  (LICS'95)}, pages 2--12. IEEE Computer Society Press, 1995.

\bibitem[KP02]{KestenPnueli2002}
Y.~Kesten and A.~Pnueli.
\newblock Complete proof system for {QPTL}.
\newblock {\em Journal of Logic and Computation}, 12(5):701--745, December
  2002.

\bibitem[Lam02]{Lamport02}
Leslie Lamport.
\newblock {\em Specifying Systems: The {TLA+} Language and Tools for Hardware
  and Software Engineers}.
\newblock Addison-Wesley Professional, 2002.

\bibitem[LP00]{LichtensteinPnueli00}
Orna Lichtenstein and Amir Pnueli.
\newblock Propositional temporal logics: Decidability and completeness.
\newblock {\em Logic Journal of the IGPL}, 8(1):55--85, 2000.

\bibitem[LPZ85]{LichtensteinPnueli85}
O.~Lichtenstein, A.~Pnueli, and L.~Zuck.
\newblock The glory of the past.
\newblock In R.~Parikh et~al., editors, {\em Logics of Programs}, volume 193 of
  {\em LNCS}, pages 196--218, Berlin, 1985. Springer-\kern-0.15em Verlag.

\bibitem[McN66]{McNaughton66}
Robert McNaughton.
\newblock Testing and generating infinite sequences by a finite automaton.
\newblock {\em Inf.\ and Control}, 9(5):521--530, October 1966.

\bibitem[Mor99]{Morley99}
Matthew~J. Morley.
\newblock Semantics of temporal \textbf{\textit{e}}.
\newblock In T.~F. Melham and F.~G. Moller, editors, {\em {\em Banff'99} Higher
  Order Workshop: Formal Methods in Computation, \textrm{Ullapool, Scotland,
  9--11 Sept.\ 1999}}, pages 138--142. University of Glasgow, Department of
  Computing Science Technical Report, 1999.

\bibitem[Mos83a]{Moszkowski83a}
B.~Moszkowski.
\newblock {\em Reasoning about Digital Circuits}.
\newblock PhD thesis, Department of Computer Science, Stanford University, June
  1983.
\newblock Technical report STAN--CS--83--970.

\bibitem[Mos83b]{Moszkowski83}
B.~Moszkowski.
\newblock A temporal logic for multi-level reasoning about hardware.
\newblock In {\em Proc. 6th Int'l.\ Symp. on Computer Hardware Description
  Languages}, pages 79--90, Pittsburgh, Pennsylvania, 1983. North-Holland Pub.\
  Co.

\bibitem[Mos85]{Moszkowski85}
B.~Moszkowski.
\newblock A temporal logic for multilevel reasoning about hardware.
\newblock {\em Computer}, 18:10--19, 1985.

\bibitem[Mos86]{Moszkowski86}
B.~Moszkowski.
\newblock {\em Executing Temporal Logic Programs}.
\newblock Cambridge University Press, Cambridge, England, 1986.

\bibitem[Mos94]{Moszkowski94}
Ben Moszkowski.
\newblock Some very compositional temporal properties.
\newblock In E.-R. Olderog, editor, {\em Programming Concepts, Methods and
  Calculi (PROCOMET'94)}, volume A-56 of {\em IFIP Transactions}, pages
  307--326. IFIP, Elsevier Science B.V.~(North--Holland), 1994.

\bibitem[Mos95]{Moszkowski95a}
Ben Moszkowski.
\newblock Compositional reasoning about projected and infinite time.
\newblock In {\em Proc. 1st IEEE Int'l Conf.\ on Engineering of Complex
  Computer Systems (ICECCS'95)}, pages 238--245. IEEE Computer Society Press,
  1995.

\bibitem[Mos96]{Moszkowski96}
Ben Moszkowski.
\newblock Using temporal fixpoints to compositionally reason about liveness.
\newblock In {He Jifeng}, John Cooke, and Peter Wallis, editors, {\em BCS-FACS
  7th Refinement Workshop}, electronic Workshops in Computing, London, 1996.
  BCS-FACS, Springer-Verlag and British Computer Society.

\bibitem[Mos98]{Moszkowski98}
Ben Moszkowski.
\newblock Compositional reasoning using {Interval Temporal Logic} and
  {Tempura}.
\newblock In Willem-Paul de~Roever, Hans Langmaack, and Amir Pnueli, editors,
  {\em Compositionality: The Significant Difference}, volume 1536 of {\em
  LNCS}, pages 439--464, Berlin, 1998. Springer-\kern-0.15em Verlag.

\bibitem[Mos00]{Moszkowski00}
Ben Moszkowski.
\newblock A complete axiomatization of {Interval Temporal Logic} with infinite
  time (extended abstract).
\newblock In {\em Proc. 15th Ann. IEEE Symp. on Logic in Computer Science
  {(LICS 2000)}}, pages 242--251. IEEE Computer Society Press, June 2000.

\bibitem[Mos04]{Moszkowski04a}
Ben Moszkowski.
\newblock A hierarchical completeness proof for {Propositional Interval
  Temporal Logic} with finite time.
\newblock {\em Journal of Applied Non-Classical Logics}, 14(1--2):55--104,
  2004.
\newblock Special issue on Interval Temporal Logics and Duration Calculi. V.
  Goranko and A. Montanari, guest editors.

\bibitem[Mos07]{Moszkowski07-lmcs-paper}
Ben Moszkowski.
\newblock Using temporal logic to analyse temporal logic: A hierarchical
  approach based on intervals.
\newblock {\em Journal of Logic and Comp.}, 17(2):333--409, April 2007.

\bibitem[Mos11]{Moszkowski11-TIME}
Ben Moszkowski.
\newblock Compositional reasoning using intervals and time reversal.
\newblock In {\em 18th Int'l Symp. on Temporal Representation and Reasoning
  (TIME 2011)}, pages 107--114. IEEE Computer Society, 2011.

\bibitem[MWD11]{MoWang2011}
Dapeng Mo, Xiaobing Wang, and Zhenhua Duan.
\newblock Asynchronous communication in {MSVL}.
\newblock In Shengchao Qin and Zongyan Qiu, editors, {\em 13th Int'l Conf. on
  Formal Engineering Methods (ICFEM 2011)}, volume 6991 of {\em LNCS}, pages
  82--97. Springer-\kern-0.15em Verlag, 2011.

\bibitem[OD08]{OlderogDierks2008}
Ernst-R{\"u}diger Olderog and Henning Dierks.
\newblock {\em Real-Time Systems: Formal Specification and Automatic
  Verification}.
\newblock Cambridge University Press, Cambridge, England, 2008.

\bibitem[Pae88]{Paech88-techrep}
Barbara Paech.
\newblock Gentzen-systems for propositional temporal logics.
\newblock Technical Report 88/01, Institut f{\"u}r Informatik,
  Ludwig-Maximilians-Universit{\"at}, Munich, Germany, February 1988.

\bibitem[Pae89]{Paech89}
Barbara Paech.
\newblock Gentzen-systems for propositional temporal logics.
\newblock In E.~B{\"o}rger, H.~Kleine B{\"u}ning, and M.~M. Richter, editors,
  {\em Proceedings of the 2nd Workshop on Computer Science Logic (CSL'88)},
  volume 385 of {\em LNCS}, pages 240--253. Springer-\kern-0.15em Verlag, 1989.

\bibitem[Pnu77]{Pnueli77}
Amir Pnueli.
\newblock The temporal logic of programs.
\newblock In {\em Proc. 18th Ann. IEEE Symp. on the Foundation of Computer
  Science (FOCS)}, pages 46--57. IEEE Computer Society Press, 1977.

\bibitem[RP86]{RosnerPnueli86}
R.~Rosner and A.~Pnueli.
\newblock A choppy logic.
\newblock In {\em Proc. 1st Ann. IEEE Symp. on Logic in Computer Science
  (LICS'86)}, pages 306--313. IEEE Computer Society Press, June 1986.

\bibitem[RSSB98]{ReifSchellhorn98}
Wolfgang Reif, Gerhard Schellhorn, Kurt Stenzel, and Michael Balser.
\newblock Structured specifications and interactive proofs with {KIV}.
\newblock In Wolfgang Bibel and Peter~H. Schmitt, editors, {\em Automated
  Deduction -- A Basis for Applications, Volume II: Systems and Implementation
  Techniques}, pages 13--39. Kluwer Academic Publishers, Dordrecht, 1998.

\bibitem[Sie70]{Siefkes70}
Dirk Siefkes.
\newblock {\em Decidable Theories I: B{\"u}chi's Monadic Second Order Successor
  Arithmetic}, volume 120 of {\em Lecture Notes in Mathematics}.
\newblock Springer-\kern-0.15em Verlag, Berlin, 1970.

\bibitem[Sti01]{Stirling01}
Colin Stirling.
\newblock {\em Modal and Temporal Properties of Processes}.
\newblock Springer-\kern-0.15em Verlag, New York, 2001.

\bibitem[Tho79]{Thomas79}
Wolfgang Thomas.
\newblock Star-free regular sets of $\omega$-sequences.
\newblock {\em Inf.\ and Control}, 42(2):148--156, August 1979.

\bibitem[Tho90]{Thomas90}
W.~Thomas.
\newblock Automata on infinite objects.
\newblock In Jan van Leeuwen, editor, {\em Handbook of Theoretical Computer
  Science}, volume B: Formal Models and Semantics, chapter~4, pages 133--191.
  Elsevier/MIT Press, Amsterdam, 1990.

\bibitem[Tho97]{Thomas97}
W.~Thomas.
\newblock Languages, automata, and logic.
\newblock In G.~Rozenburg and A.~Salomaa, editors, {\em Handbook of Formal
  Languages}, volume 3: Beyond words, chapter~7, pages 389--455.
  Springer-\kern-0.15em Verlag, Berlin, 1997.

\bibitem[TSOR04]{ThumsSchellhorn04}
Andreas Thums, Gerhard Schellhorn, Frank Ortmeier, and Wolfgang Reif.
\newblock Interactive verification of {Statecharts}.
\newblock In Hartmut Ehrig, Werner Damm, J{\"o}rg Desel, Martin
  Gro{\ss}e-Rhode, Wolfgang Reif, Eckehard Schnieder, and Engelbert
  Westk{\"a}mper, editors, {\em SoftSpez Final Report}, volume 3147 of {\em
  LNCS}, pages 355--373. Springer-\kern-0.15em Verlag, 2004.

\bibitem[Wal95]{Walukiewicz95a}
I.~Walukiewicz.
\newblock Completeness of {Kozen's} axiomatisation of the propositional
  mu-calculus.
\newblock In {\em Proc. 10th Ann. Symp. on Logic in Computer Science
  {(LICS'95)}}, pages 14--24. IEEE Computer Society Press, 1995.

\bibitem[Wol82]{Wolper82a}
P.~L. Wolper.
\newblock {\em Specification and Synthesis of Communicating Processes Using an
  Extended Temporal Logic}.
\newblock PhD thesis, Department of Computer Science, Stanford University,
  1982.

\bibitem[Wol83]{Wolper83}
P.~[L.] Wolper.
\newblock Temporal logic can be more expressive.
\newblock {\em Information and Control}, 56(1-2):72--99, 1983.

\bibitem[WX04]{WangXu2004}
Hanpin Wang and Qiwen Xu.
\newblock Completeness of temporal logics over infinite intervals.
\newblock {\em Discrete Applied Mathematics}, 136(1):87--103, 2004.

\bibitem[ZDT12]{ZhangDuan2012}
Nan Zhang, Zhenhua Duan, and Cong Tian.
\newblock A cylinder computation model for many-core parallel computing.
\newblock {\em Theor.\ Comp.\ Sci.}, 2012.
\newblock doi: \url{10.1016/j.tcs.2012.02.011}.

\bibitem[ZH04]{ZhouHansen2004}
{Zhou Chaochen} and Michael~R. Hansen.
\newblock {\em {Duration Calculus}: A Formal Approach to Real-Time Systems}.
\newblock Monographs in Theoretical Computer Science (An EATCS series).
  Springer-\kern-0.15em Verlag, 2004.

\bibitem[ZHR91]{ZhouHoare91}
{Zhou Chaochen}, C.~A.~R. Hoare, and A.~P. Ravn.
\newblock A calculus of durations.
\newblock {\em Information Processing Letters}, 40(5):269--276, 1991.

\bibitem[ZZC99]{ZhouZedan99}
Shikun Zhou, Hussein Zedan, and Antonio Cau.
\newblock A framework for analysing the effect of ``change'' in legacy code.
\newblock In {\em 15th IEEE International Conference on Software Maintenance
  (ICSM'99)}, pages 411--420, 1999.

\end{thebibliography}


\appendix

\section{Some PITL theorems and Their Proofs}

\label{some-pitl-theorems-and-their-proofs-sec}

\InTextBodyfalse  
\immediate\write\MyThmsOut{\noexpand\MyThmsMacro}  

\bgroup

\parskip=0.75em %

\defid[noxrefs,bold=false]{Assump}{assump.,,}
\defid[noxrefs]{Prop}{Prop,,}
\defid[noxrefs]{ImpChain}{$\imp$-chain,,}
\defid[noxrefs]{EqvChain}{$\equiv$-chain,,}
\defid[noxrefs]{PITLF}{PITLF,,}

\begin{mycomment}
(progn
  (beginning-of-buffer)
  (replace-string "\r" "" nil)  ;; Remove every "^M" (carriage return)
  (beginning-of-buffer)
  (replace-regexp "\\\\citerul[e]" "\\\\refid" nil)
  (beginning-of-buffer)
  (replace-string ";" "\\chop " nil)
  (beginning-of-buffer)
  (replace-regexp "\\\\theorem[ ]" "\\\\theoremBEN " nil)
  (beginning-of-buffer)
  (my-query-replace-regexp "D[i]" "Df" nil)
  (beginning-of-buffer)
  (replace-string "Dfst" "Dist" nil)
  (beginning-of-buffer)
  (my-query-replace-regexp "B[i]" "Bf" nil)
  (beginning-of-buffer)
  (my-query-replace-regexp "B[f]Gen" "BfFGen" nil)
  (beginning-of-buffer)
  (replace-regexp "\\bS'\\b" "A'" nil)
  (beginning-of-buffer)
  (replace-regexp "\\bS\\b" "A" nil)
  (beginning-of-buffer)
  (replace-regexp "\\bT'\\b" "B'" nil)
  (beginning-of-buffer)
  (replace-regexp "\\bT\\b" "B" nil)
  )
\end{mycomment}

This appendix gives a representative set of $\PITL$ theorems and derived
inference rules together with their proofs.  Many are used either directly or
indirectly in the completeness proof for $\PITL$ with both finite and infinite
time.  We have partially organised the material, particularly in
\S\ref{some-properties-of-bf-involving-the-modal-system-k-and-axiom-d-subsec},
along the lines of some standard modal logic
systems~\cite{Chellas80,HughesCresswell96}.

The $\PITL$ theorems and derived rules have a shared index sequence (e.g.,
\refid{BfChopImpChop}--\refid{BoxChopEqvChop} are followed by \refid{BfGen}
rather than \textbf{DR1}).  We believe that this convention simplifies
locating material in this appendix and also in
Table~\ref{list-of-pitl-theorems-and-derived-rules-mentioned-before-app-table}
found earlier in
\S\ref{formal-equivalence-of-the-two-representations-of-runs-subsec}.

Proof steps can refer to axioms, inference rules,
previously deduced theorems, derived inference rules and also the following:
\begin{iteMize}{$\bullet$}
\item \myhypertarget{refid.Assump.0}{\refid{Assump}}: Assumptions which are
  regarded as being previously deduced.

\item \myhypertarget{refid.Prop.0}{\refid{Prop}}: Conventional nonmodal
  propositional reasoning (by restricted application of Axiom~\refid{VPTL})
  and modus ponens.

\item \myhypertarget{refid.ImpChain.0}{\refid{ImpChain}}: A chain of
  implications.

\item \myhypertarget{refid.EqvChain.0}{\refid{EqvChain}}: A chain of
  equivalences.

\item[] In principle, \refid{ImpChain} and \refid{EqvChain} are subsumed by
  \refid{Prop} but are used here to make the reasoning more explicit.

\item \myhypertarget{refid.PITLF.0}{\refid{PITLF}}: Our assumption of
  axiomatic completeness for $\PITL$ with just finite time permits any valid
  implication of the form $\Finite \imp A$.
\end{iteMize}

\subsection{Some Basic Properties of Chop} 

\ifmylmcs \hfill \fi

We now consider deducing various simple properties of chop and the associated
operators $\Df$, $\Bf$, $\Diamond$ and $\Box$ which have a wide range of uses.

\statetheorem{BfChopImpChop&
\Theorem \Bf(A\implies A')  \Implies  (A\chop B) \implies (A'\chop B)}%
\begin{proveit}
1&B \Implies B   &\refid{Prop} \\
2&\Box(B \implies B )   &1,\refid{BoxGen} \\
3&\Bf(A\implies A') \And \Box(B\implies B) \Implies
   (A\chop B) \implies (A'\chop B)   &\refid{BfAndBoxImpChopImpChop} \\
4&\Bf(A\implies A') \Implies  (A\chop B) \implies (A'\chop B)
   &2,3,\refid{Prop}
\end{proveit}

\statetheorem{BoxChopImpChop&
\Theorem \Box(B\implies B')  \Implies  (A\chop B) \implies (A\chop B')}%
\begin{proveit}
1&\Finite \implies (A \imp A)   &\refid{Prop} \\
2&\Bf(A \implies A)   &1,\refid{BfFGen} \\
3&\Bf(A\implies A) \And \Box(B\implies B') \Implies
   (A\chop B) \implies (A\chop B')   &\refid{BfAndBoxImpChopImpChop} \\
4&\Box(B\implies B') \Implies   (A\chop B) \implies (A\chop B')
   &2,3,\refid{Prop}
\end{proveit}

\statetheorem{BoxChopEqvChop&
\Theorem \Box(B\equiv B')  \Implies  (A\chop B) \equiv (A\chop B')}%
\begin{proveit}
1&\Box(B\equiv B') \Equiv \Box(B\imp B') \And \Box(B'\imp B)
  &\refid{VPTL} \\
2&\Box(B\implies B')  \Implies  (A\chop B) \implies (A\chop B')
  &\refid{BoxChopImpChop} \\
3&\Box(B'\implies B)  \Implies  (A\chop B') \implies (A\chop B)
  &\refid{BoxChopImpChop} \\
4&\Box(B\equiv B')  \Implies  (A\chop B) \equiv (A\chop B')  &2,3,\refid{Prop}
\end{proveit}

The following derived variant of Inference Rule~\refid{BfFGen} omits the
subformula $\Finite$:
\statederivedrule{BfGen&
\theoremBEN A \infer \theoremBEN \Bf A}%
\begin{proveit}
1&A &\refid{Assump} \\
2&\Finite \implies A &1,\refid{Prop} \\
3&\Bf A &2, \refid{BfFGen}
\end{proveit}
The derived inference rule~\refid{BfGen} can also be referred to
as~\textbf{$\Bf$Gen} (analogous to the inference rule~\refid{BoxGen}).

\statederivedrule{LeftChopImpChop&
\theoremBEN A\implies A' \infer \theoremBEN (A\chop B) \implies (A'\chop B)}%
\begin{proveit}
1&A\Implies A'   &\refid{Assump} \\
2&\Bf(A\implies A')   &1,\refid{BfGen} \\
3&\Bf(A\implies A') \Implies (A\chop B) \implies (A'\chop B)
   &\refid{BfChopImpChop} \\
4&A\chop B \Implies A'\chop B   &2,3,\refid{MP}
\end{proveit}

\statederivedrule{LeftChopEqvChop&
\theoremBEN A \equiv A' \infer \theoremBEN (A\chop B) \equiv (A'\chop B)}%
\begin{proveit}
1&A\Equiv A'&\refid{Assump} \\
2&A\Implies A'&1,\refid{Prop} \\
3&A\chop B \Implies A'\chop B&2,\refid{LeftChopImpChop} \\
4&A'\Implies A&1,\refid{Prop} \\
5&A'\chop B \Implies A\chop B&4,\refid{LeftChopImpChop} \\
6&A\chop B \Equiv A'\chop B&3,5,\refid{Prop}
\end{proveit}

\statederivedrule{DfImpDf&
\theoremBEN A\implies B \infer \theoremBEN \Df A \implies \Df B}%
\begin{proveit}
1&A\Implies B   &\refid{Assump} \\
2&A\chop \True \Implies B\chop \True   &1,\refid{LeftChopImpChop} \\
3&\Df A \Implies \Df B   &2,\defof{\Df}
\end{proveit}

\statederivedrule{DfEqvDf&
\theoremBEN A\equiv B \infer \theoremBEN \Df A \equiv \Df B}%
\begin{proveit}
1&A\Equiv B &\refid{Assump} \\
2&A\chop \True \Equiv B\chop \True &1,\refid{LeftChopEqvChop} \\
3&\Df A \Equiv \Df B &2,\defof{\Df}
\end{proveit}

\statederivedrule{RightChopImpChop&
\theoremBEN B\implies B' \infer \theoremBEN (A\chop B) \implies (A\chop B')}%
\begin{proveit}
1&B \Implies B'   &\refid{Assump} \\
2&\Box(B \implies B')   &\refid{BoxGen} \\
3&\Box(B \implies B') \Implies (A\chop B) \implies (A\chop B')
   &\refid{BoxChopImpChop} \\
4&A\chop B \Implies A\chop B'   &2,3,\refid{MP}
\end{proveit}

\statederivedrule{RightChopEqvChop&
\theoremBEN B\equiv B' \infer \theoremBEN (A\chop B) \EQUIV (A\chop B')}%
\begin{proveit}
1&B \Equiv B'   &\refid{Assump} \\
2&B \Implies B'   &1,\refid{Prop} \\
3&A\chop B \Implies A\chop B'   &2,\refid{RightChopImpChop} \\
4&B' \Implies B   &1,\refid{Prop} \\
5&A\chop B' \Implies A\chop B   &4,\refid{RightChopImpChop} \\
6&A\chop B \Equiv A\chop B' &3,5,\refid{Prop}
\end{proveit}

\statederivedrule{DiamondEqvDiamond&
\theoremBEN A\equiv B \infer \theoremBEN \Diamond A \equiv \Diamond B}%
\begin{proveit}
1&A\Equiv B &\refid{Assump} \\
2&\True\chop A \Equiv \True\chop B &1,\refid{RightChopEqvChop} \\
3&\Diamond A \Equiv \Diamond B &2,\defof{\Diamond}
\end{proveit}

\statederivedrule{BoxEqvBox&
\theoremBEN A\equiv B \infer \theoremBEN \Box A \equiv \Box B}%
\begin{proveit}
1&A \Equiv B  &\refid{Assump} \\
2&\Not A \Equiv \Not B  &1,\refid{Prop} \\
3&\Diamond\Not A \Equiv \Diamond\Not B  &2,\refid{DiamondEqvDiamond} \\
4&\Not\Diamond\Not A \Equiv \Not\Diamond\Not B  &3,\refid{Prop} \\
5&\Box A \Equiv \Box B  &4,\defof{\Box}
\end{proveit}

\statederivedrule{BoxImpInferBoxImpBox&
\theoremBEN \Box A\,\implies\, B \infer \theoremBEN \Box A\,\implies\,\Box B}%
\begin{proveit}
1&\Box A\Implies B  &\refid{Assump} \\
2&\Box(\Box A\implies B)  &1,\refid{BoxGen} \\
3&\Box(\Box A\implies B) \Implies (\Box A \implies \Box B)  &\refid{VPTL} \\
4&\Box A \Implies \Box B  &2,3,\refid{MP}
\end{proveit}

\statetheorem{AndChopA&
\Theorem (A \And A')\chop B \Implies A\chop B}%
\begin{proveit}
1&A \And A' \Implies A &\refid{Prop} \\
2&(A \And A')\chop B \Implies A\chop B &1,\refid{LeftChopImpChop}
\end{proveit}

\statetheorem{AndChopB&
\Theorem (A \And A')\chop B \Implies A'\chop B}%
\begin{proveit}
1&A \And A' \Implies A' &\refid{Prop} \\
2&(A \And A')\chop B \Implies A'\chop B &1,\refid{LeftChopImpChop}
\end{proveit}

\statetheorem{AndChopImpChopAndChop&
\Theorem (A \And A')\chop B \Implies (A\chop B) \Andd (A'\chop B)}%
\begin{proveit}
1&(A \And A')\chop B \Implies A\chop B &\refid{AndChopA} \\
2&(A \And A')\chop B \Implies A'\chop B &\refid{AndChopB} \\
3&(A \And A')\chop B \Implies (A\chop B) \Andd (A'\chop B) &1,2,\refid{Prop}
\end{proveit}

\statetheorem{AndChopCommute&
\Theorem (A \And A')\chop B \Equiv (A' \And A)\chop B}%
\begin{proveit}
1&A \And A' \Equiv A' \And A  &\refid{Prop} \\
2&(A \And A')\chop B \Equiv (A' \And A)\chop B  &1,\refid{LeftChopEqvChop}
\end{proveit}

\statetheorem{OrChopEqv&
\Theorem (A\Or A')\chop B \Equiv (A\chop B) \;\Or\; (A'\chop B)}%
The proof for $\supset$ is immediate from axiom \refid{OrChopImp}.
Here is the proof for $\subset$:
\vspace{3pt}
\begin{proveit}
1&A \Implies A\Or A'  &\refid{Prop} \\
2&A\chop B \Implies (A\Or A')\chop B  &1,\refid{LeftChopImpChop} \\
3&A' \Implies A\Or A'  &\refid{Prop} \\
4&A'\chop B \Implies (A\Or A')\chop B  &3,\refid{LeftChopImpChop} \\
5&(A\chop B) \;\Or\; (A\chop B') \Implies (A\Or A')\chop B  &2,4,\refid{Prop}
\end{proveit}

\statetheorem{ChopImpDf&
\Theorem A\chop B \Implies \Df A}%
\begin{proveit}
1&B \Implies \True  &\refid{Prop} \\
2&A\chop B \Implies A\chop \True  &1,\refid{RightChopImpChop} \\
3&A\chop B \Implies \Df A  &2,\defof{\Df}
\end{proveit}

\statetheorem{DfEmpty&
\Theorem \Df\Empty}%
\begin{proveit}
1&\Empty\chop\True \Equiv \True &\refid{EmptyChop} \\
2&\Empty\chop\True \Implies \Df \Empty &\refid{ChopImpDf} \\
3&\Df\Empty &1,2,\refid{Prop}
\end{proveit}

\statetheorem{ChopImpDiamond&
\Theorem A\chop B \Implies \Diamond B}%
\begin{proveit}
1&A \Implies \True   &\refid{Prop} \\
2&A\chop B \Implies \True\chop B   &1,\refid{LeftChopImpChop} \\
3&A\chop B \Implies \Diamond B   &2,\defof{\Diamond}
\end{proveit}

\subsection{Some Properties of \texorpdfstring{$\Bf$}{Box-f} involving the Modal System K and Axiom D}

\label{some-properties-of-bf-involving-the-modal-system-k-and-axiom-d-subsec}

\ifmylmcs \hfill \fi

The two pairs of operators $\Box$ and $\Diamond$ and $\Bf$ and $\Df$ obey
various standard properties of modal logics.  Axiom~\refid{VPTL} helps
streamline reasoning involving $\Box$ and $\Diamond$.  The situation with
$\Bf$ and $\Df$ is quite different since they lack a comparable axiom.
Therefore, it is especially beneficial to review some conventional modal
systems which assist in organising various useful deductions involving $\Bf$
and $\Df$.

Table~\ref{some-standard-modal-systems-table} summarises some relevant modal
systems, various associated axioms and inference rules.
Chellas~\citeyear{Chellas80} and Hughes and
Cresswell~\citeyear{HughesCresswell96} give more details.
\begin{table*}[h]
\begin{center}
\begin{tabular}{llLc}
\multicolumn{1}{c}{System}& & \multicolumn{1}{c}{Axiom or inference rule}
  & \multicolumn{1}{c}{Axiom or rule name} \\\hline\noalign{\vspace{2pt}}
\ModalSys{K}: & & M\,A \Defeqv \Not L\,\Not A & M-def\\
 & \hphantom{\ModalSys{K}} plus
   & \theoremBEN L(A\imp B) \implies (L\,A \imp L\,B) & K \\
 & \hphantom{\ModalSys{K}} plus & \theoremBEN A \infer \theoremBEN L\,A & N \\
\ModalSys{T}: & \ModalSys{K} plus & \theoremBEN L\,A \implies A & T \\
\ModalSys{S4}: & \ModalSys{T} plus & \theoremBEN L\,A \implies LL\,A & 4 \\
\ModalSys{KD4}: & \ModalSys{K} plus 4 and
 & \theoremBEN L\,A \implies M\,A & D \\
\end{tabular}
\end{center}
\caption{Some standard modal systems}
\label{some-standard-modal-systems-table}
\end{table*}

Within $\PITL$, as in $\PTL$, the operator $\Box$ can be regarded as the
conventional unary \emph{necessity} modality $L$ and the operator $\Diamond$
as the dual \emph{possibility} operator $M$.  The two operators together
fulfil the requirements of the modal system \ModalSys{S4}. We do not need to
explicitly prove versions of the \ModalSys{S4} axioms in
Table~\ref{some-standard-modal-systems-table} for $\Box$ and $\Diamond$.
Rather, any $\PITL$ formula which is a substitution instance of a valid
\ModalSys{S4} formula involving $\Box$ and $\Diamond$ can be readily deduced
using the $\PITL$ proof system's Axiom~\refid{VPTL}.  Similarly, inference
rules based on \ModalSys{S4} can be obtained with Axiom~\refid{VPTL},
Inference Rule~\refid{BoxGen} (which corresponds to the inference rule N of
\ModalSys{S4}) and modus ponens.  Moreover, the $\PITL$ proof system's
Axiom~\refid{VPTL} permits using \emph{any} $\PITL$ formula which is a
substitution instance of some valid $\PTL$ formula which can also contain the
$\PTL$ operator $\Next$. In view of all this, we do not give much further
consideration to aspects of \ModalSys{S4} with $\Box$ and $\Diamond$.

In contrast to $\Box$, the $\PITL$ operator $\Bf$ does not have a
comprehensive axiom analogous to~\refid{VPTL}.  Therefore, we need to
explicitly prove in the $\PITL$ axiom system various modal properties of $\Bf$
and its dual $\Df$.  If only finite time is allowed, then $\Bf$ and $\Df$ act
as an \ModalSys{S4} system.  However, $\Bf$ with infinite time permitted does
not fulfil the requirements of \ModalSys{S4}, or even those of the weaker
modal system~\!\ModalSys{T}, because Axiom T fails.  Instead, $\Bf$ with
infinite time fulfils the requirements of the modal system \ModalSys{KD4}
which is strictly weaker than \ModalSys{S4}.

Here is a list of \ModalSys{KD4}'s axioms and inference rules and related
$\PITL$ proofs for $\Bf$:
\begin{center}
\begin{tabular}{lLl}
  K & \theoremBEN L(A\imp B) \implies (L\,A \imp L\,B)
    & Theorem~\refid{BfImpDist} \\
  N & \theoremBEN A \infer \theoremBEN L\,A & Derived Inf.~Rule~\refid{BfGen} \\
  D & \theoremBEN L\,A \implies M\,A & Theorem~\refid{BfImpDf} \\
  4 & \theoremBEN L\,A \implies LL\,A & Theorem~\refid{BfImpBfBf}
\end{tabular}
\end{center}
If only finite time is allowed, then the implication D does not need to be
regarded as an explicit axiom since it can be inferred from any proof system
for \ModalSys{S4}.

\begin{myremark}
  It is also worth noting that the related operators $\Bi$ and $\Di$ (defined
  using weak chop in Table~\ref{pitl-derived-operators-table} in
  Section~\ref{pitl-sec}) obey the modal system \ModalSys{S4} even when
  infinite time is permitted.  However, we prefer to work with $\Bf$ and $\Df$
  since the use of strong chop simplifies the overall $\PITL$ completeness
  proof.
\end{myremark}

Conventional model logics usually take $L$, not $M$, to be primitive.  When
we deduce standard modal properties for $\Bf$ and $\Df$ in our $\PITL$ axiom
system, we let $M$, which corresponds to $\Df$, be primitive and define $L$ to
be $M$'s dual (i.e., $L\,A \defeqv \Not M\,\Not A$).  This $M$-based approach
goes well with the $\PITL$ axioms for chop.  Chellas~\citeyear{Chellas80}
discusses some alternative axiomatisations of modal systems with $M$ as the
primitive although none correspond directly to ours.  For the system
\ModalSys{K}, we can deduce implication\relax
~\eqref{some-properties-of-bf-involving-the-modal-system-k-and-axiom-d-1-eq}
below for $\Bf$ and $\Df$ (see Theorem~\refid{BfImpDfImpDf} later on) and then
obtain from it together some other reasoning the more standard axiom K just
presented which only mentions $L$:
\begin{equation}
  \label{some-properties-of-bf-involving-the-modal-system-k-and-axiom-d-1-eq}
  \Theorem  L(A\imp B) \implies (M\,\!A \implies M\,\!B)
\dotspace.
\end{equation}

The operators $\Box$ and $\Bf$ together yield a \emph{multi-modal logic} with
two necessity constructs $L$ and $L'$ which are commutative:
\begin{equation*}
  \Theorem LL'\, A \Equiv L'L\,A
\dotspace.
\end{equation*}
This corresponds to our Theorem~\refid{BfBoxEqvBoxBf} given later on.

Below are various theorems and derived inference rules about $\Bf$ and $\Df$
for obtaining the axioms M-def (Theorem~\refid{Mdef}) and K
(Theorem~\refid{BfImpDist}) found in the modal system \ModalSys{K}.  The
associated inference rule N was already proved above as Derived Inference
Rule~\refid{BfGen}.  We also prove the modal axiom D
(Theorem~\refid{BfImpDf}).

In the next proof's final step, recall that \refid{EqvChain} indicates a chain
of equivalences:
\statetheorem{Mdef&
\Theorem \Df A \Equiv \Not\Bf\Not A}%
\begin{proveit}
1&A \Equiv \Not\Not A &\refid{Prop} \\
2&\Df A \Equiv \Df\Not\Not A &1,\refid{DfEqvDf} \\
3&\Df\Not\Not A \Equiv \Not\Not\Df\Not\Not A &\refid{Prop} \\
4&\Df\Not\Not A \Equiv \Not\Bf\Not A &3,\defof{\Bf} \\
5&\Df A \Equiv \Not\Bf\Not A &2,4,\refid{EqvChain}
\end{proveit}

\statetheorem{BfImpDfImpDf&
\Theorem \Bf(A \implies B) \Implies \Df A \implies \Df B}%
\begin{proveit}
1&\Bf(A\implies B)  \Implies (A\chop \True) \implies (B\chop \True)
   &\refid{BfChopImpChop} \\
2&\Bf(A \implies B) \Implies \Df A \implies \Df B
   &1,\defof{\Df}
\end{proveit}

\statetheorem{BfContraPosImpDist&
\Theorem \Bf(\Not B \implies \Not A) \Implies (\Bf A) \implies (\Bf B)}%
\begin{proveit}
1&\Bf(\Not B \implies \Not A) \Implies (\Df\Not B) \implies (\Df\Not A)
   &\refid{BfImpDfImpDf} \\
2&\Bf(\Not B \implies \Not A) \Implies (\Not\Df\Not A) \implies (\Not\Df\Not B)
   &1,\refid{Prop} \\
3&\Bf(\Not B \implies \Not A) \Implies (\Bf A) \implies (\Bf B)
   &2,\defof{\Bf}
\end{proveit}

\statetheorem{BfImpDist&
\Theorem \Bf(A\implies B) \Implies (\Bf A) \implies (\Bf B)}%
\begin{proveit}
1&(A \implies B) \Implies (\Not B \implies \Not A)   &\refid{Prop} \\
2&\Not(\Not B \implies \Not A) \Implies \Not(A \implies B)
   &1,\refid{Prop} \\
3&\Bf\bigl(\Not(\Not B \implies \Not A) \implies \Not(A \implies B)\bigr)
   &2,\refid{BfGen} \\
4&\begin{myarray}
    \Bf\bigl(\Not(\Not B \implies \Not A) \implies \Not(A \implies B)\bigr) \\
      \Implies \Bf(A \implies B) \implies \Bf(\Not B \implies \Not A)
   \end{myarray}
   &\refid{BfContraPosImpDist} \\
5&\Bf(A \implies B) \Implies \Bf(\Not B \implies \Not A)
   &3,4,\refid{MP} \\
6&\Bf(\Not B \implies \Not A) \Implies (\Bf A) \implies (\Bf B)
   &\refid{BfContraPosImpDist} \\
7&\Bf(A\implies B) \Implies (\Bf A) \implies (\Bf B)   &5,6,\refid{ImpChain}
\end{proveit}

\statederivedrule{BfImpBfRule&
\Theorem A \implies B \infer \Theorem \Bf A \implies \Bf B}%
\begin{proveit}
1&A \Implies B   &\refid{Assump} \\
2&\Bf(A \implies B)   &1,\refid{BfGen} \\
3&\Bf(A\implies B) \Implies (\Bf A) \implies (\Bf B)   &\refid{BfImpDist} \\
4&\Bf A \Implies \Bf B   &2,3,\refid{MP}
\end{proveit}

\statederivedrule{BfEqvBfRule&
\theoremBEN A\equiv B \infer \theoremBEN \Bf A \equiv \Bf B}%
\begin{proveit}
1& A \Equiv B  &\refid{Assump} \\
2& A \Implies B  &1,\refid{Prop} \\
3& \Bf A \Implies \Bf B  &2,\refid{BfImpBfRule} \\
4& B \Implies A  &1,\refid{Prop} \\
5& \Bf B \Implies \Bf A  &4,\refid{BfImpBfRule} \\
6& \Bf A \Equiv \Bf B  &3,5,\refid{Prop}
\end{proveit}

\statetheorem{BfAndEqv&
\Theorem \Bf(A \And B) \Equiv \Bf A \And \Bf B}%
\begin{proveit}
1&(A \And B) \Implies A  &\refid{Prop} \\
2&\Bf(A \And B) \Implies \Bf A  &1,\refid{BfImpBfRule} \\
3&(A \And B) \Implies B  &\refid{Prop} \\
4&\Bf(A \And B) \Implies \Bf B  &3,\refid{BfImpBfRule} \\
5&A \Implies (B \implies (A \And B))  &\refid{Prop} \\
6&\Bf A \Implies \Bf(B \implies (A \And B))  &5,\refid{BfImpBfRule} \\
7&\Bf(B \implies (A \And B)) \Implies \bigl(\Bf B \implies \Bf(A \And B)\bigr)
  &\refid{BfImpDist} \\
8&\Bf A \And \Bf B \Implies \Bf(A \And B)  &6,7,\refid{Prop} \\
9&\Bf(A \And B) \Equiv \Bf A \And \Bf B  &2,4,8,\refid{Prop}
\end{proveit}

\statetheorem{BfEqvSplit&
\Theorem \Bf(A \equiv B) \Equiv \Bf(A\imp B) \Andd \Bf(B\imp A)}
\begin{proveit}
1& (A \equiv B) \Equiv (A\imp B) \Andd (B\imp A)  &\refid{Prop} \\
2& \Bf(A \equiv B) \Equiv \Bf\bigl((A\imp B) \And (B\imp A)\bigr)  &1,\refid{BfEqvBfRule} \\
3& \Bf\bigl((A\imp B) \And (B\imp A)\bigr)
     \Equiv \Bf(A\imp B) \Andd \Bf(B\imp A)  &\refid{BfAndEqv} \\
4& \Bf(A \equiv B) \Equiv \Bf(A\imp B) \Andd \Bf(B\imp A)  &2,3,\refid{EqvChain}
\end{proveit}

\statetheorem{BfChopEqvChop&
\Theorem \Bf(A\equiv A')  \Implies  (A\chop B) \equiv (A'\chop B)}%
\begin{proveit}
1&\Bf(A\equiv A') \Equiv \Bf(A\imp A') \Andd \Bf(A'\imp A)
  &\refid{BfEqvSplit} \\
2&\Bf(A\imp A')  \Implies  (A\chop B) \implies (A'\chop B)
  &\refid{BfChopImpChop} \\
3&\Bf(A'\imp A)  \Implies  (A'\chop B) \implies (A\chop B)
  &\refid{BfChopImpChop} \\
4&\Bf(A\equiv A')  \Implies  (A\chop B) \equiv (A'\chop B)  &1--3,\refid{Prop}
\end{proveit}

\statetheorem{BfImpDfEqvDf&
\Theorem \Bf(A \equiv B) \Implies \Df A \equiv \Df B}
\begin{proveit}
1& \Bf(A \equiv B) \Implies (A\chop\True) \equiv(B\chop\True)  &\refid{BfChopEqvChop} \\
2& \Bf(A \equiv B) \Implies \Df A \equiv \Df B  &1,\defof{\Df}
\end{proveit}

\statederivedrule{FiniteImpDfEqvDfRule&
\theoremBEN \Finite \implies (A\equiv B) \infer \theoremBEN \Df A \equiv \Df B}%
\begin{proveit}
1&\Finite \Implies (A\equiv B)  &\refid{Assump} \\
2&\Bf(A\equiv B)  &1,\refid{BfFGen} \\
3&\Bf(A \equiv B) \Implies \Df A \equiv \Df B
   &\refid{BfImpDfEqvDf} \\
4&\Df A \Equiv \Df B  &2,3,\refid{MP}
\end{proveit}

\statetheorem{BfImpDf&
\Theorem \Bf A \Implies \Df A}%
\begin{proveit}
1&A \Implies (\Empty \implies A) &\refid{Prop} \\
2&\Bf A \Implies \Bf(\Empty \implies A) &1,\refid{BfImpBfRule} \\
3&\Bf(\Empty \implies A) \Implies (\Df\Empty \implies \Df A)
  &\refid{BfImpDfImpDf} \\
4&\Bf A \Implies (\Df\Empty \implies \Df A) &2,3,\refid{ImpChain} \\
5&\Df\Empty &\refid{DfEmpty} \\
6&\Bf A \Implies \Df A &4,5,\refid{Prop}
\end{proveit}

\statetheorem{DfOrEqv&
\Theorem \Df(A \Or B) \Equiv \Df A \,\Or\, \Df B}%
\begin{proveit}
1&(A\Or B)\chop \True \Equiv (A\chop\True) \Or (B\chop\True)  &\refid{OrChopEqv}
  \\
2&\Df(A\Or B) \Equiv \Df A \,\Or\, \Df B  &1,\defof{\Df}
\end{proveit}


\statetheorem{BfAndChopImport&
\Theorem \Bf A \And (A'\chop B) \Implies (A \And A')\chop B}%
\begin{proveit}
1&A \Implies (A' \implies A\And A')   &\refid{Prop} \\
2&\Bf A \Implies \Bf (A' \implies A\And A')   &1,\refid{BfImpBfRule} \\
3&\Bf (A' \implies A\And A') \Implies   (A'\chop B) \implies (A \And A')\chop B
   &\refid{BfChopImpChop} \\
4&\Bf A \And (A'\chop B) \Implies (A\And A')\chop B   &2,3,\refid{Prop}
\end{proveit}

\subsection{Some Properties of Chop,  \texorpdfstring{$\Df$}{Diamond-f}  and \texorpdfstring{$\Bf$}{Box-f} with State Formulas}

\ifmylmcs \hfill \fi

\statetheorem{DfState&
\Theorem \Df w \Equiv w}%
\begin{proveit}[Proof for $\supset$]
1&\Not w \Implies \Bf\Not w&\refid{StateImpBf} \\
2&\Not w\Implies \Not\Df\Not\Not w&1,\defof{\Bf} \\
3&\Df\Not\Not w \Implies w&2,\refid{Prop} \\
4&w \Implies \Not\Not w&\refid{Prop} \\
5&\Df w \Implies \Df\Not\Not w  &4,\refid{DfImpDf} \\
6&\Df w \Implies w&3,5,\refid{ImpChain}
\end{proveit}
\begin{proveit}[Proof for $\subset$]
1&w \Implies \Bf w &\refid{StateImpBf} \\
2&\Bf w \Implies \Df w &\refid{BfImpDf} \\
3&w \Implies \Df w &1,2,\refid{ImpChain}
\end{proveit}

\statetheorem{BfState&
\Theorem \Bf w \Equiv w}%
\begin{proveit}
1&\Df \Not w \Equiv \Not w &\refid{DfState} \\
2&\Not\Df \Not w \Equiv w &1,\refid{Prop} \\
3&\Bf w \Equiv w &2,\defof{\Bf}
\end{proveit}

\statetheorem{StateChop&
\Theorem w\chop A \Implies w}%
\begin{proveit}
1&w\chop A \Implies \Df w &\refid{ChopImpDf} \\
2&\Df w \Equiv w &\refid{DfState} \\
3&w\chop A \Implies w &1,2,\refid{Prop}
\end{proveit}

\statetheorem{StateChopExportA&
\Theorem (w \And A)\chop B \Implies w}%
\begin{proveit}
1&w \And A \Implies w &\refid{Prop} \\
2&(w \And A)\chop B \Implies w \chop B &1,\refid{LeftChopImpChop} \\
3&w\chop B \Implies w &\refid{StateChop} \\
4&(w \And A)\chop B \Implies w &2,3,\refid{ImpChain}
\end{proveit}

The following lets us move a state formula into the left side of chop:
\statetheorem{StateAndChopImport&
\Theorem w \And (A\chop B) \Implies (w \And A)\chop B}%
\begin{proveit}
1&w\Implies \Bf w&\refid{StateImpBf} \\
2&w \And (A\chop B) \Implies \Bf w \And (A\chop B)   &1,\refid{Prop} \\
3&\Bf w \And (A\chop B) \Implies (w \And A)\chop B   &\refid{BfAndChopImport} \\
4&w \And (A\chop B) \Implies (w \And A)\chop B   &2,3,\refid{ImpChain}
\end{proveit}
We can easily combine this with theorem \refid{StateChopExportA} to deduce the
equivalence below:
\statetheorem{StateAndChop&
\Theorem (w \And A)\chop B \Equiv w \And (A\chop B)}%
\begin{proveit}
1&(w \And A)\chop B \Implies w &\refid{StateChopExportA} \\
2&(w \And A)\chop B \Implies (w\chop B) \Andd (A\chop B)
   &\refid{AndChopImpChopAndChop} \\
3&(w \And A)\chop B \Implies w\Andd (A\chop B) &1,2,\refid{Prop} \\
4&w \And (A\chop B) \Implies (w \And A)\chop B &\refid{StateAndChopImport} \\
5&w \And (A\chop B) \Equiv (w \And A)\chop B &3,4,\refid{Prop}
\end{proveit}

Below is a useful corollary of \refid{StateAndChop} used in decomposing the
left side of chop:
\statetheorem{StateAndEmptyChop&
\Theorem (w \And \Empty)\chop A \Equiv w \And A}%
\begin{proveit}
1&(w \And \Empty)\chop A \Equiv w \And (\Empty\chop A)
   &\refid{StateAndChop} \\
2&\Empty\chop A \Equiv A   &\refid{EmptyChop} \\
3&(w \And \Empty)\chop A \Equiv w \And A   &1,2,\refid{Prop}
\end{proveit}

\begin{hidetheorems}
The following is a simple corollary of \refid{StateAndEmptyChop}:
\statetheorem{EmptyAndStateChop&
\Theorem (\Empty \And w)\chop A \Equiv w \And A}%
\begin{proveit}
1&(\Empty \And w)\chop A \Equiv (w \And \Empty)\chop A
  &\refid{AndChopCommute} \\
2&(w \And \Empty)\chop A \Equiv w \And A  &\refid{StateAndEmptyChop} \\
3&(\Empty \And w)\chop A \Equiv w \And A  &1,2,\refid{EqvChain}
\end{proveit}

\statetheorem{StateAndDf&
\Theorem \Df(w \And A) \Equiv w \And \Df A}%
\begin{proveit}
1&(w \And A)\chop \True \Equiv w \And (A\chop \True)  &\refid{StateAndChop} \\
2&\Df (w \And A) \Equiv w \And \Df A  &1,\defof{\Df}
\end{proveit}
\end{hidetheorems}

\begin{mycomment}
\statederivedrule{StateImpBfGen&
\theoremBEN w \implies A \infer \theoremBEN w \implies \Bf A}%
\begin{proveit}
1&w \Implies A  &\refid{Assump} \\
2&\Not A \Implies \Not w  &1,\refid{Prop} \\
3&\Df\Not A \Implies \Df\Not w  &2,\refid{DfImpDf} \\
4&\Df\Not w \Equiv \Not w  &\refid{DfState} \\
5&\Df\Not A \Implies \Not w  &3,4,\refid{Prop} \\
6&w \Implies \Not\Df\Not A  &5,\refid{Prop} \\
7&w \Implies \Bf A  &6,\defof{\Bf}
\end{proveit}
\end{mycomment}

\begin{mycomment}
The following theorem can be used to do induction over time with chop:
\statetheorem{ChopAndNotChopImp&
\Theorem A\chop B \And \Not(A\chop B') \Implies A\chop (B \And \Not B')}%
\begin{proveit}
1&B \Implies (B\And \Not B') \Or B'&\refid{Prop} \\
2&A\chop B \Implies A\chop (B\And \Not B') \Or A\chop B'&
   1,\refid{LeftChopImpChop} \\
3&A\chop B \And \Not (A\chop B') \Implies A\chop (B \And \Not B')
  &2,\refid{Prop}\\
\end{proveit}
\end{mycomment}

\subsection{Some Properties of \texorpdfstring{$\Bf$}{Box-f} involving the Modal System K4}

\label{some-properties-of-bf-involving-the-modal-system-k4-subsec}

\ifmylmcs \hfill \fi

We now consider how to establish for the $\PITL$ operator $\Bf$ the axiom
``4'' ($\PITL$ Theorem~\refid{BfImpBfBf}) found in the modal systems \emph{K4}
and \ModalSys{S4}.

\statetheorem{DfDfEqvDf&
\Theorem \Df\Df A \Equiv \Df A}%
\begin{proveit}
1&(A\chop \True)\chop \True \Equiv A\chop (\True\chop\True)
   &\refid{ChopAssoc} \\
2&\Df\True \Equiv \True &\refid{DfState} \\
3&(\True\chop\True) \Equiv \True &2,\defof{\Df} \\
4&A\chop (\True\chop\True) \Equiv A\chop\True &3,\refid{LeftChopEqvChop} \\
5&(A\chop \True)\chop \True \Equiv A\chop\True&1,4,\refid{EqvChain} \\
6&\Df\Df A \Equiv \Df A &5,\defof{\Df}
\end{proveit}

\statetheorem{DfNotEqvNotBf&
\Theorem \Df\Not A \Equiv \Not\Bf A}%
\begin{proveit}
1&\Bf A \Equiv \Not\Df\Not A &\defof{\Bf} \\
2&\Df \Not A \Equiv \Not\Bf A &1,\refid{Prop}
\end{proveit}

\statetheorem{DfDfNotEqvNotBfBf&
\Theorem \Df\Df\Not A \Equiv \Not\Bf\Bf A}%
\begin{proveit}
1&\Df\Not A \Equiv \Not\Bf A  &\refid{DfNotEqvNotBf} \\
2&\Df\Df\Not A \Equiv \Df\Not\Bf A  &1,\refid{DfEqvDf} \\
3&\Df\Not\Bf A \Equiv \Not\Bf\Bf A  &\refid{DfNotEqvNotBf} \\
4&\Df\Df\Not A \Equiv \Not\Bf\Bf A  &2,3,\refid{EqvChain}
\end{proveit}

\statetheorem{BfBfEqvBf&
\Theorem \Bf\Bf A \Equiv \Bf A}%
\begin{proveit}
1&\Df\Df\Not A \Equiv \Df\Not A  &\refid{DfDfEqvDf} \\
2&\Df\Df\Not A \Equiv \Not\Bf\Bf A  &\refid{DfDfNotEqvNotBfBf} \\
3&\Not\Bf\Bf A \Equiv \Df\Not A  &1,2,\refid{Prop} \\
4&\Df\Not A \Equiv \Not\Bf A  &\refid{DfNotEqvNotBf} \\
5&\Not\Bf\Bf A \Equiv \Not\Bf A  &3,4,\refid{EqvChain} \\
6&\Bf\Bf A \Equiv \Bf A  &5,\refid{Prop}
\end{proveit}

\statetheorem{BfImpBfBf&
\Theorem \Bf A \Implies \Bf\Bf A}%
\begin{proveit}
1&\Bf\Bf A \Equiv \Bf A &\refid{BfBfEqvBf} \\
2&\Bf A \Implies \Bf\Bf A &1,\refid{Prop}
\end{proveit}

\subsection{Properties Involving the PTL Operator \texorpdfstring{$\Next$}{Next}}

\ifmylmcs \hfill \fi

\statetheorem{NextChop&
\Theorem (\Next A)\chop B \Equiv \Next(A\chop B)}%
\begin{proveit}
1&(\Skip\chop A)\chop B \Equiv \Skip\chop (A\chop B)   &\refid{ChopAssoc} \\
2&(\Next A)\chop B \Equiv \Next(A\chop B)   &1,\defof{\Next}
\end{proveit}

\statetheorem{StateAndNextChop&
\Theorem (w \And \Next A)\chop B \Equiv w \And \Next(A\chop B)}%
\begin{proveit}
1&(w \And \Next A)\chop B \Equiv w \And \bigl((\Next A)\chop B\bigr)
   &\refid{StateAndChop} \\
2&(\Next A)\chop B \Equiv \Next(A\chop B)   &\refid{NextChop} \\
3&(w \And \Next A)\chop B \Equiv w \And \Next(A\chop B)   &1,2,\refid{Prop}
\end{proveit}

\statetheorem{DfStateAndNextEqv&
\Theorem \Df(w \And \Next w') \Equiv w \And \Next w'}%
\begin{proveit}
1&(w \And \Next w')\chop\True \Equiv w \And \Next(w'\chop\True)
   &\refid{StateAndNextChop} \\
2&\Df(w \And \Next w') \Equiv w \And \Next\Df w' &1,\defof{\Df} \\
3&\Df w' \Equiv w' &\refid{DfState} \\
4&\Skip\chop \Df w' \Equiv \Skip\chop w' &3,\refid{RightChopEqvChop} \\
5&\Next\Df w' \Equiv \Next w' &4,\defof{\Next} \\
6&\Df(w \And \Next w') \Equiv w \And \Next w'  &2,5,\refid{Prop}
\end{proveit}

\subsection{Some Properties of \texorpdfstring{$\Bf$}{Box-f} Together with \texorpdfstring{$\Box$}{Box}}

\ifmylmcs \hfill \fi

We make use of the following analogue of Theorem~\refid{DfNotEqvNotBf} for
$\Diamond$ and $\Box$:
\statetheorem{DiamondNotEqvNotBox&
\Theorem \Diamond\Not A \Equiv \Not\Box A}%
\begin{proveit}
1&\Diamond \Not A \Equiv \Not\Box A &\refid{VPTL}
\end{proveit}

\statetheorem{DfDiamondEqvDiamondDf&
\Theorem \Df\Diamond A \Equiv \Diamond\Df A}%
\begin{proveit}
1&(\True\chop A)\chop \True \Equiv \True\chop (A\chop\True)
   &\refid{ChopAssoc} \\
2&(\Diamond A)\chop \True \Equiv \Diamond (A\chop\True)  &1,\defof{\Diamond} \\
3&\Df\Diamond A \Equiv \Diamond\Df A  &2,\defof{\Df}
\end{proveit}

\statetheorem{DfDiamondNotEqvNotBfBox&
\Theorem \Df\Diamond\Not A \Equiv \Not\Bf\Box A}%
\begin{proveit}
1&\Diamond\Not A \Equiv \Not\Box A &\refid{DiamondNotEqvNotBox} \\
2&\Df\Diamond\Not A \Equiv \Df\Not\Box A &1,\refid{DfEqvDf} \\
3&\Df\Not\Box A \Equiv \Not\Bf\Box A &\refid{DfNotEqvNotBf} \\
4&\Df\Diamond\Not A \Equiv \Not\Bf\Box A &2,3,\refid{EqvChain}
\end{proveit}

\statetheorem{DiamondDfNotEqvNotBoxBf&
\Theorem \Diamond\Df\Not A \Equiv \Not\Box\Bf A}%
\begin{proveit}
1&\Df\Not A\Equiv \Not\Bf A &\refid{DfNotEqvNotBf} \\
2&\Diamond\Df\Not A \Equiv \Diamond\Not\Bf A
  &1,\refid{DiamondEqvDiamond} \\
3&\Diamond\Not\Bf A \Equiv \Not\Box\Bf A
   &\refid{DiamondNotEqvNotBox} \\
4&\Diamond\Df\Not A \Equiv \Not\Box\Bf A &2,3,\refid{EqvChain}
\end{proveit}

\statetheorem{BfBoxEqvBoxBf&
\Theorem \Bf\Box A \Equiv \Box\Bf A}%
\begin{proveit}
1&\Df\Diamond\Not A \Equiv \Diamond\Df\Not A
   &\refid{DfDiamondEqvDiamondDf} \\
2&\Df\Diamond\Not A \Equiv \Not\Bf\Box A &\refid{DfDiamondNotEqvNotBfBox} \\
3&\Diamond\Df\Not A \Equiv \Not\Box\Bf A &\refid{DiamondDfNotEqvNotBoxBf} \\
4&\Bf\Box A \Equiv \Box\Bf A &1-3,\refid{Prop}
\end{proveit}

\subsection{Some Properties of Chop-Star}

\ifmylmcs \hfill \fi

We now consider some theorems and derived rules concerning chop-star.

\statederivedrule{ImpMoreChopStarEqvRule&
\theoremBEN A\imp \More
  \infer \theoremBEN
    A\SChopstar \EQUIV \Empty \,\Or\, (A\chop A\SChopstar)}%
\begin{proveit}
1&A\Implies \More &\refid{Assump} \\
2&A\And\More \Equiv A &1,\refid{Prop} \\
3&(A\And\More)\chop A\SChopstar \Equiv A\chop A\SChopstar
  &2,\refid{LeftChopEqvChop} \\
4&A\SChopstar \Equiv \Empty \,\Or\, \bigl((A\And \More)\chop A\SChopstar\bigr)
  &\refid{SChopStarEqv} \\
5&A\SChopstar \Equiv \Empty \,\Or\, (A\chop A\SChopstar)
  &3,4,\refid{Prop}
\end{proveit}

\statederivedrule{ImpMoreChopStarChopEqvRule&
\theoremBEN A\imp \More
  \infer \theoremBEN
    A\SChopstar\chop B \EQUIV B \,\Or\, \bigl(A\chop(A\SChopstar\chop B)\bigr)}%
\begin{proveit}
1&A\Implies \More &\refid{Assump} \\
2&A\SChopstar \Equiv \Empty \,\Or\, (A\chop A\SChopstar)
  &1,\refid{ImpMoreChopStarEqvRule} \\
3&A\SChopstar\chop B \Equiv (\Empty \,\Or\, (A\chop A\SChopstar))\chop B
  &2,\refid{LeftChopEqvChop} \\
4&(\Empty \Or (A\chop A\SChopstar))\chop B
 \Equiv (\Empty\chop B) \Or \bigl((A\chop A\SChopstar)\chop B\bigr)
  &\refid{OrChopEqv} \\
5&\Empty\chop B \Equiv B  &\refid{EmptyChop} \\
6&(A\chop A\SChopstar)\chop B \Equiv A\chop (A\SChopstar\chop B)
  &\refid{ChopAssoc} \\
7&A\SChopstar\chop B \Equiv B \,\Or\, \bigl(A\chop (A\SChopstar\chop B)\bigr)
  &3--6,\refid{Prop}
\end{proveit}

\statetheorem{SChopStarEqvSChopstarChopEmptyOrChopOmega&
\Theorem A\SChopstar \Equiv (A\SChopstar\chop\Empty) \Or A\Chopomega}%
\begin{proveit}
1&\Finite \Or \Not\Finite  &\refid{Prop} \\
2&\Finite \Or \Inf  &1,\defof{\Inf} \\
3&\Finite  \Implies  (A\SChopstar\chop \Empty) \equiv A\SChopstar
  &\refid{FiniteImpChopEmpty} \\
4&\Inf \Implies A\SChopstar\equiv (A\SChopstar\And \Inf) &\refid{Prop} \\
5&\Inf \Implies A\SChopstar\equiv A\Chopomega
  &4,\defof{\text{chop-omega}} \\
6&A\SChopstar \Equiv (A\SChopstar\chop\Empty) \Or A\Chopomega
  &2,3,5,\refid{Prop}
\end{proveit}

\subsection{Some Properties Involving a Reduction to PITL with Finite Time}

\ifmylmcs \hfill \fi

We now present some derived inference rules which come in useful when
completeness for $\PITL$ with finite time is assumed (see
Theorem~\ref{completeness-of-pitl-axiom-system-for-finite-time-thm}).  Recall
that any valid implication of the form $\Finite \imp A$ is allowed and that we
designate such a step by using \refid{PITLF}.  $\PITL$
Theorem~\refid{BfFinStateEqvBox} below illustrates this technique.

\statederivedrule{FiniteImpBfImpBfRule&
\theoremBEN \Finite \implies (A \imp B)
  \infer \theoremBEN \Bf A \implies \Bf B}%
\begin{proveit}
1&\Finite \Implies (A \implies B)  &\refid{Assump} \\
2&\Bf(A\implies B) &1,\refid{BfFGen} \\
3&\Bf(A\implies B) \Implies (\Bf A \implies \Bf B) &\refid{BfImpDist} \\
4&\Bf A \Implies \Bf B  &2,3,\refid{MP}
\end{proveit}

\statederivedrule{FiniteImpBfEqvBfRule&
\theoremBEN \Finite \implies (A \equiv B)
  \infer \theoremBEN \Bf A \EQUIV \Bf B}%
\begin{proveit}
1&\Finite \Implies (A \equiv B)  &\refid{Assump} \\
2&\Finite \Implies (A \implies B)  &1,\refid{Prop} \\
3&\Bf A \Implies \Bf B  &2,\refid{FiniteImpBfImpBfRule} \\
4&\Finite \Implies (B \implies A)  &1,\refid{Prop} \\
5&\Bf B \Implies \Bf A  &4,\refid{FiniteImpBfImpBfRule} \\
6&\Bf A \Equiv \Bf B &3,5,\refid{Prop}
\end{proveit}

The next theorem's proof involves the application of the previous derived
inference rule together with completeness for $\PITL$ with just finite time:
\statetheorem{BfFinStateEqvBox&
\Theorem \Bf\Fin w \Equiv \Box w}%
\begin{proveit}
1&\Bf\Bf\Fin w \Equiv \Bf\Fin w &\refid{BfBfEqvBf} \\
2&\Bf\Fin w \Equiv \Bf\Bf\Fin w &1,\refid{Prop} \\
3&\Finite \Implies \bigl((\Bf\Fin w) \EQUIV \Box w\bigr) &\refid{PITLF} \\
4&\Bf\Bf\Fin w \Equiv \Bf\Box w &3,\refid{FiniteImpBfEqvBfRule} \\
5&\Bf\Box w \Equiv \Box\Bf w &\refid{BfBoxEqvBoxBf} \\
6&\Bf w \Equiv w &\refid{BfState} \\
7&\Box\Bf w \Equiv \Box w &6,\refid{BoxEqvBox} \\
8&\Bf\Fin w \Equiv \Box w &2,4,5,7,\refid{EqvChain}
\end{proveit}

An alternative proof of Theorem~\refid{BfFinStateEqvBox} can be given
without~\refid{PITLF} by first deducing the dual equivalence
$\bigl(\Df\Diamond(\Empty\And w)\bigr) \equiv \Diamond w$, for any state
formula $w$.

\egroup

\subsection{Some Properties of Skip, Next And Until} 

Recall from \S\ref{nl-one-formulas-subsec} that $\NLone$ formulas are exactly
those $\PTL$ formulas in which the only temporal operators are unnested
$\Next$s (e.g., $p \Or \Next\Not p$ but not $p \Or \Next\Next\Not p$).  The
next theorem holds for any $\NLone$ formula $T$:
\statetheorem{DfMoreAndNLoneEqvMoreAndNLone& \Theorem \Df(\More \And T) \Equiv
  \More \And T}%
\begin{proof}
  We use Axiom~\refid{VPTL} to re-express $\More \And T$ as a logically
  equivalent disjunction $\bigvee_{1\le i\le n}(w_i \And \Next w'_i)$ for some
  natural number $n\ge 1$ and $n$ pairs of state formulas $w_i$ and $w'_i$:
  \begin{equation}
    \label{DfMoreAndNLoneEqvMoreAndNLone-1-eq}
    \Theorem \More \And T \Equiv \bigvee_{1\le i\le n}(w_i \And \Next w'_i)
\dotspace.
  \end{equation}
  Now by Theorem~\refid{DfStateAndNextEqv} any conjunction $w \And \Next w'$
  is deducibly equivalent to $\Df(w \And \Next w')$.  Therefore the
  disjunction in~\eqref{DfMoreAndNLoneEqvMoreAndNLone-1-eq} can be
  re-expressed as $\bigvee_{1\le i\le n}\Df(w_i \And \Next w'_i)$:
  \begin{equation}
    \label{DfMoreAndNLoneEqvMoreAndNLone-2-eq}
    \Theorem \bigvee_{1\le i\le n}(w_i \And \Next w'_i)
      \Equiv \bigvee_{1\le i\le n}\Df(w_i \And \Next w'_i)
\dotspace.
  \end{equation}
  Then by $n-1$ applications of Theorem~\refid{DfOrEqv} and some simple
  propositional reasoning, the righthand operand of this equivalence is itself
  is deducibly equivalent to $\Df\bigl(\bigvee_{1\le i\le n}(w_i \And \Next
  w'_i)\bigr)$:
  \begin{equation}
    \label{DfMoreAndNLoneEqvMoreAndNLone-3-eq}
    \Theorem \bigvee_{1\le i\le n}\Df(w_i \And \Next w'_i)
      \Equiv \Df\bigl(\bigvee_{1\le i\le n}(w_i \And \Next w'_i)\bigr)
\dotspace.
  \end{equation}
  The chain of the three
  equivalences~\eqref{DfMoreAndNLoneEqvMoreAndNLone-1-eq}--\relax
  \eqref{DfMoreAndNLoneEqvMoreAndNLone-3-eq} yields the following:
  \begin{equation*}
    \Theorem \More \And T
      \Equiv \Df\bigl(\bigvee_{1\le i\le n}(w_i \And \Next w'_i)\bigr)
\dotspace.
  \end{equation*}
  We then apply Derived Rule~\refid{DfEqvDf} to the first
  equivalence~\eqref{DfMoreAndNLoneEqvMoreAndNLone-1-eq}:
  \begin{equation*}
    \Theorem \Df(\More \And T)
      \Equiv \Df\bigl(\bigvee_{1\le i\le n}(w_i \And \Next w'_i)\bigr)
\dotspace.
  \end{equation*}
  The last two equivalences with simple propositional reasoning yield our
  goal~\refid{DfMoreAndNLoneEqvMoreAndNLone}.
\end{proof}

Here is a corollary of the previous $\PITL$
Theorem~\refid{DfMoreAndNLoneEqvMoreAndNLone} for any $\NLone$ formula $T$:
\statetheorem{BfMoreImpNLoneEqvMoreImpNLone&
\Theorem \Bf(\More \imp T) \Equiv \More \imp T}%
\begin{proveit}
1&\Bf(\More \imp T) \Equiv \Not\Df\Not(\More \imp T)  &\defof{\Bf} \\
2&\Not(\More \imp T) \Equiv \More \And \Not T  &\refid{Prop} \\
3&\Df\Not(\More \imp T) \Equiv \Df(\More \And \Not T)
  &2,\refid{DfEqvDf} \\
4&\Df(\More \And \Not T) \Equiv \More \And \Not T
  &\refid{DfMoreAndNLoneEqvMoreAndNLone} \\
5&\Df\Not(\More \imp T) \Equiv \More \And \Not T &3,4,\refid{EqvChain} \\
6&\Bf(\More \imp T) \Equiv \Not(\More \And \Not T) &1,5,\refid{Prop} \\
7&\Not(\More \And \Not T) \Equiv \More \imp T  &\refid{Prop} \\
8&\Bf(\More \imp T) \Equiv \More \imp T &6,7,\refid{EqvChain}
\end{proveit}

\statetheorem{MoreAndNLoneImpBfMoreImpNLone&
\Theorem \More \Andd T \Implies \Bf(\More \imp T)}%
\begin{proveit}
1&\Bf(\More \imp T) \Equiv \More \imp T
  &\refid{BfMoreImpNLoneEqvMoreImpNLone} \\
2&\More \And T \Implies \Bf(\More \imp T)  &1,\refid{Prop}
\end{proveit}

\statetheorem{BfSkipImpAndNextImpAndSkipAndChop&
\Theorem \Bf(\Skip \imp A) \Andd \Next B \Implies (\Skip \And A)\chop B}%
\begin{proveit}
1&\Bf(\Skip \imp A) \Andd (\Skip\chop B)
    \Implies \bigl((\Skip \imp A) \And \Skip\bigr)\chop B
  &\refid{BfAndChopImport} \\
2&(\Skip \imp A) \Andd \Skip \Implies \Skip \Andd A  &\refid{Prop} \\
3&\bigl((\Skip \imp A) \And \Skip\bigr)\chop B \Implies (\Skip \And A)\chop B
  &2,\refid{LeftChopImpChop} \\
4&\Bf(\Skip \imp A) \Andd (\Skip\chop B) \Implies (\Skip \And A)\chop B
  &1,3,\refid{Prop} \\
5&\Bf(\Skip \imp A) \Andd \Next B \Implies (\Skip \And A)\chop B
  &4,\defof{\Next}
\end{proveit}

\statetheorem{BfMoreImpImpBfSkipImp&
\Theorem \Bf(\More \imp A) \Implies \Bf(\Skip \imp A)}%
\begin{proveit}
1&\More \Implies \Skip  &\refid{VPTL} \\
2&(\More \imp A) \Implies (\Skip \imp A)  &1,\refid{Prop} \\
3&\Bf(\More \imp A) \Implies \Bf(\Skip \imp A)
  &2,\refid{BfImpBfRule}
\end{proveit}

\statetheorem{BfMoreImpAndNextImpAndSkipAndChop&
\Theorem \Bf(\More \imp A) \Andd \Next B \Implies (\Skip \And A)\chop B}%
\begin{proveit}
1&\Bf(\More \imp A) \Implies \Bf(\Skip \imp A)
  &\refid{BfMoreImpImpBfSkipImp} \\
2&\Bf(\Skip \imp A) \Andd \Next B \Implies (\Skip \And A)\chop B
  &\refid{BfSkipImpAndNextImpAndSkipAndChop} \\
3&\Bf(\More \imp A) \Andd \Next B \Implies (\Skip \And A)\chop B
  &1,2,\refid{Prop}
\end{proveit}

\statetheorem{DfSkipAndNLoneEqvMoreAndNLone&
\Theorem \Df(\Skip \And T) \Equiv \More \And T}%
\begin{proveit}
1&\Finite \Implies \Df(\Skip \And T) \EQUIV (\More \And T)  &\refid{PITLF} \\
2&\Df\Df(\Skip \And T) \Equiv \Df(\More \And T)
  &1,\refid{FiniteImpDfEqvDfRule} \\
3&\Df\Df(\Skip \And T) \Equiv \Df(\Skip \And T)  &\refid{DfDfEqvDf} \\
4&\Df(\More \And T) \Equiv \More \And T
  &\refid{DfMoreAndNLoneEqvMoreAndNLone} \\
5&\Df(\Skip \And T) \Equiv \More \And T  &2--4,\refid{Prop}
\end{proveit}

\begin{hidetheorems}
\statetheorem{DfSkipAndNLoneImpBfSkipImpNLone&
\Theorem \Df(\Skip \And T) \Implies \Bf(\Skip \imp T)}%
\begin{proveit}
1&\Df(\Skip \And T) \Equiv \More \And T
  &\refid{DfSkipAndNLoneEqvMoreAndNLone} \\
2&\Df(\Skip \And \Not T) \Equiv \More \And \Not T
  &\refid{DfSkipAndNLoneEqvMoreAndNLone} \\
3&\More \And T \Implies \Not(\More \And T)  &\refid{Prop} \\
4&\Df(\Skip \And T) \Implies \Not\Df(\Skip \And \Not T)  &1--3,\refid{Prop} \\
5&\Skip \And \Not T \Equiv \Not(\Skip \imp T)  &\refid{Prop} \\
6&\Df(\Skip \And \Not T) \Equiv \Df\Not(\Skip \imp T)  &5,\refid{DfEqvDf} \\
7&\Not\Df(\Skip \And \Not T) \Equiv \Not\Df\Not(\Skip \imp T)
  &6,\refid{Prop} \\
8&\Df(\Skip \And T) \Implies \Not\Df\Not(\Skip \imp T)  &4,7,\refid{Prop} \\
9&\Df(\Skip \And T) \Implies \Bf(\Skip \imp T)  &8,\defof{\Bf}
\end{proveit}
\end{hidetheorems}

\statetheorem{NLoneAndSkipChopEqvNLoneAndNext&
\Theorem (\Skip \And T)\chop A \Equiv T \Andd \Next A}%
\vspace{6pt}
\begin{proveit}[Proof for $\supset$]
1&(\Skip \And T)\chop A \Implies \Df(\Skip \And T)
  &\refid{ChopImpDf} \\
2&\Df(\Skip \And T) \Equiv \More \And T
  &\refid{DfSkipAndNLoneEqvMoreAndNLone} \\
3&(\Skip \And T)\chop A \Implies T  &1,2,\refid{Prop} \\
4&(\Skip \And T)\chop A \Implies \Skip\chop A
  &\refid{AndChopA} \\
5&(\Skip \And T)\chop A \Implies \Next A
  &4,\defof{\Next} \\
6&(\Skip \And T)\chop A \Implies T \Andd \Next A
  &3,5,\refid{Prop}
\end{proveit}
\vspace{6pt}
\begin{proveit}[Proof for $\subset$]
1&\Next A \Implies \More  &\refid{VPTL} \\
2&\More \Andd T \Implies \Bf(\More \imp T)
  &\refid{MoreAndNLoneImpBfMoreImpNLone} \\
3&T \Andd \Next A \Implies \Bf(\More \imp T)  &1,2,\refid{Prop} \\
4&\Bf(\More \imp T) \Andd \Next A \Implies (\Skip \And T)\chop A
  &\refid{BfMoreImpAndNextImpAndSkipAndChop} \\
5&T \Andd \Next A \Implies (\Skip \And T)\chop A
  &3,4,\refid{Prop}
\end{proveit}

\statetheorem{UntilEqv&
\Theorem T\Until A \Equiv A \,\Or\, \bigr(T \And \Next(T\Until A)\bigr)}%
\begin{proveit}
1&\Skip\And T \Implies \More  &\refid{VPTL} \\
2&(\Skip \And T)\SChopstar\chop A
  \Equiv A \,\Or\,
    \bigl((\Skip \And T)\chop ((\Skip \And T)\SChopstar\chop A)\bigr)
  &1,\refid{ImpMoreChopStarChopEqvRule} \\
3&T\Until A \Equiv A \,\Or\, \bigl((\Skip \And T)\chop (T \Until A)\bigr)
 &2,\defof{\Until} \\
4&(\Skip \And T)\chop (T \Until A) \Equiv T \Andd \Next(T \Until A)
  &\refid{NLoneAndSkipChopEqvNLoneAndNext} \\
5&T\Until A \Equiv A \,\Or\, \bigr(T \And \Next(T\Until A)\bigr)
 &3--4,\refid{Prop} 
\end{proveit}

\statetheorem{UntilImpDiamond&
\Theorem T\Until A \Implies \Diamond A}%
\begin{proveit}
1&(\Skip \And T)\SChopstar\chop A \Implies \Diamond A &\refid{ChopImpDiamond} \\
2&T\Until A \Implies \Diamond A &1,\defof{\Until}
\end{proveit}

\immediate\write\MyThmsOut{\noexpand\EndMyThmsMacro}


\ifPrintRefIdXrefs \let\MyDummyMacro=\relax \else
\let\MyDummyMacro=\enddocument